\documentclass[12pt]{article}
\usepackage{amsmath,amssymb,mathtools,amsthm}
\usepackage{graphicx,enumitem,color,threeparttable,booktabs,multirow,makecell}
\usepackage[title]{appendix}
\usepackage[hyphens]{url}
\usepackage{hyperref}
\newtheorem{theorem}{Theorem}
\newtheorem{definition}{Definition}

\newtheorem{lemma}{Lemma}
\newtheorem{assumption}{Assumption}

\allowdisplaybreaks[4]
\newcommand{\indep}{\perp \!\!\! \perp}
\newcommand{\E}{\mathrm{E}}
\newcommand{\Var}{\mathrm{Var}}
\newcommand{\Cov}{\mathrm{Cov}}
\renewcommand{\P}{\mathrm{P}}
\newcommand{\bA}{\boldsymbol{A}}
\newcommand{\bX}{\boldsymbol{X}}
\newcommand{\bM}{\boldsymbol{M}}
\newcommand{\bm}{\boldsymbol{m}}
\newcommand{\bD}{\boldsymbol{D}}
\newcommand{\bd}{\boldsymbol{d}}
\newcommand{\bY}{\boldsymbol{Y}}
\newcommand{\be}{\boldsymbol{\varepsilon}}
\newcommand{\bn}{\boldsymbol{\nu}}
\newcommand{\bo}{\boldsymbol{\omega}}
\newcommand{\bdmi}{\boldsymbol{d}_{-i}}
\newcommand{\bmmi}{\boldsymbol{m}_{-i}}
\newcommand{\bDmi}{\boldsymbol{D}_{-i}}
\newcommand{\bMmi}{\boldsymbol{M}_{-i}}
\newcommand{\1}{\boldsymbol{1}}

\usepackage[round]{natbib}
\newcommand{\anon}{1}

\begin{document}

\def\spacingset#1{\renewcommand{\baselinestretch}%
{#1}\small\normalsize} \spacingset{1}


\if1\anon
{
  \title{\bf Causal Mediation Analysis for Network Data with Graph Neural Network}
  \author{Peikai Wu\hspace{.2cm}\\
    Department of Statistics and Data Science, Fudan University, China\\
    and \\
    Zhiguo Xiao \\
    Department of Statistics and Data Science, Fudan University, China}
  \maketitle
} \fi

\if0\anon
{
  \bigskip
  \bigskip
  \bigskip
  \begin{center}
    {\LARGE\bf Causal Mediation Analysis for Network Data with Graph Neural Network}
\end{center}
  \medskip
} \fi

\bigskip
\begin{abstract}
Causal mediation analysis is typically formulated under no interference, an assumption often violated in networked populations. We develop a nonparametric framework for a single large observed network that allows simultaneous treatment and mediator spillovers and high-dimensional network confounding. Exposure and mediator mappings define causal estimands without restricting the true interference mechanism, separating own from spillover effects without prespecified aggregation models. Under strengthened conditional independence conditions, we identify own controlled direct, natural direct, and natural indirect effects and give primitive sufficient conditions in terms of structural errors. We construct augmented inverse probability weighted estimators that are doubly robust for controlled effects and multiply robust for natural effects, using graph neural networks to learn high-dimensional nuisance functions from node features and the adjacency matrix. Under approximate neighborhood interference, weak network dependence, and suitable first-stage rates, we establish asymptotic normality of the effect estimators and consistency of a network HAC variance estimator. In simulations the graph neural network estimator outperforms machine learning methods built on hand-constructed neighborhood features, and a reanalysis of an agricultural insurance experiment in rural China finds insurance knowledge to be a substantive mediating channel while perception-based mediators are not.
\end{abstract}

\noindent%
{\it Keywords:} Causal mediation analysis; Spillover effects; Graph neural networks (GNN); AIPW; Asymptotic theory for network data.
\vfill

\newpage
\spacingset{1.8} 

\section{Introduction}
Causal mediation analysis aims to open the black box through which a treatment acts on an outcome, decomposing a total effect into an indirect effect transmitted by a mediator and a direct effect that does not operate through that mediator. Ever since \citet{robins1992identifiability} and \citet{pearl2001direct} gave formal definitions of the controlled direct effect and of the natural direct and indirect effects within the potential outcome framework, a substantial literature on identification conditions, semiparametric estimation and sensitivity analysis has accumulated \citep[see, e.g.,][]{imai2010identification,tchetgen2012semiparametric,vanderweele2015explanation}.

This body of theory is, however, built almost without exception on the stable unit treatment value assumption, under which the potential mediator and the potential outcome of an individual depend on that individual's own treatment alone. Among individuals connected by a social network the assumption typically fails, and interference is frequently the very object of study. Retaining definitions of causal effects designed for settings without interference then delivers direct and indirect effects that neither admit a clear causal interpretation nor can be distinguished from spillover mechanisms \citep{ogburn2014causal}, so that a framework for causal mediation analysis designed specifically for networks is called for.

The literature on causal inference in networks is by now rich, ranging from partial interference \citep{hudgens2008toward} to general networks \citep[e.g.,][]{aronow2017estimating,forastiere2021identification,li2022random,leung2022causal,ogburn2024causal}. The overwhelming majority of this work relies on an \emph{exposure mapping}, a known function that aggregates the treatments of all individuals into a low-dimensional vector according to the network structure. \citet{savje2024causal} points out that in the existing literature the exposure mapping discharges two duties at once, namely defining the causal estimand and delimiting the structure of interference, and that the two ought to be separated: a researcher may legitimately use an exposure mapping to \emph{define} the effect of interest, but has no reason to require the true interference mechanism to operate exactly in the form of that mapping. Once the mapping is misspecified, identification results built upon it forfeit their causal interpretation. Following this line of reasoning, \citet{leung2022graph} uses graph neural networks to handle high-dimensional network confounding, but does not consider mediation mechanisms.

By contrast, mediation analysis in network settings remains rather limited, and is for the most part constrained in one of three ways. The first concerns the scope of interference: \citet{vanderweele2013mediation} decompose the natural indirect effect into an own mediation effect and a spillover mediation effect in group-randomized trials, and subsequent work largely stays within the partial interference setting, which is inapplicable when only a single large network is observed. The second concerns functional form: \citet{kundu2024network} extend the exposure mapping paradigm to settings in which the mediator and the outcome are simultaneously network dependent, but their estimation and inference theory is built on a linear random-effects structural equation model and still imposes structural restrictions through exposure mappings. The third concerns the question posed: ``network mediation analysis'' in psychometrics \citep[e.g.,][]{liu2021social,che2021network} treats the network structure itself, or its latent space representation, as the mediator, and thus answers a question different from ours. To the best of our knowledge, no existing work systematically studies identification, estimation and inference for causal mediation analysis in a single large network under a general interference structure and a nonparametric model.

The present paper is intended to fill this gap. We assume that the treatment, the mediator and the outcome are generated by unknown structural functions, where the outcome function is allowed to depend in an arbitrary manner on the treatment vector, the mediator vector, the covariates of all individuals and the adjacency matrix, so that treatment spillover, mediator spillover and network confounding through covariates are accommodated simultaneously. Following the recommendation of \citet{savje2024causal}, we introduce an exposure mapping and a \emph{mediator mapping}, but their sole function is to define the causal estimands: we move an individual's own treatment and own mediator outside the mappings, leaving the mappings to summarize the spillover from the remaining individuals, which separates own effects from spillover effects cleanly. On this basis we define the own controlled direct effect and the own natural direct and indirect effects, and in the appendix we extend the development to interventional effects and to the spillover version of each effect. Throughout, the analysis is carried out conditional on the covariate matrix and the adjacency matrix. This is faithful to the reality that an observational study yields a single realized network, and it avoids the misspecification risk incurred by positing a random graph model.

Since we do not use the exposure mapping to restrict the form in which spillover operates, the sequential ignorability conditions of the setting without interference, which involve only an individual's own variables, no longer suffice for identification. We propose correspondingly strengthened conditional independence assumptions: the potential outcome is conditionally independent of the treatment vector of \emph{all} individuals and, when the natural effects are to be identified, additionally of the own mediator and of the cross-world natural mediator. We also provide primitive sufficient conditions for these assumptions within our model, namely conditional independence among the error terms, so that their plausibility may be assessed from the data generating mechanism itself. Under these conditions every causal estimand equals a functional constructed entirely from conditional moments of observable variables.

Turning to estimation, we construct augmented inverse probability weighted (AIPW) estimators \citep{robins1994estimation,bang2005doubly} of each causal estimand, whose score functions are doubly or multiply robust depending on the effect at hand. The genuine difficulty lies in estimating the nuisance functions entering the scores: in our setting the confounding originates from the covariates of all individuals and from the entire network, and is therefore high-dimensional. Commonly used machine learning methods cannot take a network as a direct input, so that the user must first summarize the network information into features such as neighborhood averages, which in effect reintroduces the risk of misspecification. We therefore estimate all nuisance functions with graph neural networks \citep{kipf2017semi,hamilton2017inductive}, which take node features and the adjacency structure as direct inputs and aggregate higher-order neighborhood information into the representation of the target node by layerwise message passing, without requiring the form of the network features to be specified in advance.

Because a single network is observed, the sample is neither independent nor identically distributed, and the asymptotic theory must restrict dependence at two levels. At the level of causal structure, we extend the approximate neighborhood interference assumption of \citet{leung2022causal} to the mediation setting; at the level of the statistical distribution, we show that the centered individual-level scores form a $\psi$-dependent random field \citep{kojevnikov2021limit} and then invoke a network central limit theorem. The first-order remainder induced by graph neural network estimation must also be controlled, for which we impose the $n^{-1/4}$ rate condition standard in the double machine learning literature \citep{chernozhukov2018double,farrell2021deep}. On this basis we establish the asymptotic normality of the estimators and construct a network HAC variance estimator whose bandwidth adapts to the size and density of the network, and we prove its consistency, thereby making large-sample inference feasible.

We evaluate the proposed framework by simulation and by an empirical application. Under two network generating mechanisms and a data generating process containing higher-order nonlinear spillover, the graph neural network estimator dominates a multilayer perceptron and a random forest built on neighborhood averages in terms of bias, standard deviation and confidence interval coverage. In the empirical part we revisit the agricultural insurance experiment of \citet{cai2015social} and find that the intensive session raises take-up significantly and that a considerable share of the effect is transmitted through insurance knowledge, whereas the natural indirect effect is essentially zero when the perceived take-up rate or the subjective disaster probability serves as the mediator. The informal discussion of mechanisms in the original study is thereby turned into an estimable and testable causal decomposition.

Our contributions may be summarized under three headings. First, we develop a framework for causal mediation analysis in a single large observed network in which the treatment and the mediator spill over simultaneously. More importantly, we extend the distinction drawn by \citet{savje2024causal} between the two duties of the exposure mapping to mediation analysis: both mappings define the causal estimands only and do not restrict the true interference structure, and on this basis we identify a variety of own effects and spillover effects. Second, we combine GNN-based adjustment for high-dimensional network confounding with robust AIPW estimation of mediation effects. The system of nuisance functions here is far more involved than for a general treatment effect; the scores we construct are doubly or multiply robust, and all nuisance functions are learned from the entire network by a PNA-type GNN. To the best of our knowledge, no existing study addresses treatment and mediator spillover, high-dimensional nonlinear confounding, exposure mapping misspecification and machine learning debiasing within a single framework. Third, we develop inference theory for these estimators under a single growing network. Unlike existing work that relies on independent groups under group randomization or on parametric structural equation models, we allow the individual scores to depend on one another along network paths of arbitrary order and let the nuisance functions be estimated nonparametrically by a GNN. We prove asymptotic normality under approximate neighborhood interference, give the conditions that the first-stage convergence rate, the growth of the network and the decay of dependence must jointly satisfy, and construct a consistent network HAC variance estimator, so that the estimators of the various effects are equipped with valid large-sample inference.

The remainder of the paper is organized as follows. Section \ref{ch5:sec:problem-setup} introduces the notation, the network mediation model, the causal estimands and their identification theory. Section \ref{ch5:sec:estimation} presents the AIPW estimators together with the graph neural network estimators of the nuisance functions. Section \ref{ch5:sec:asymptotic-theory} establishes asymptotic normality of the AIPW estimators and the consistency of the variance estimator. Sections \ref{ch5:sec:simulation-study} and \ref{ch5:sec:empirical-application} report the simulation results and the empirical application, respectively, and Section \ref{ch5:sec:conclusion} concludes. Asymptotic theory for other related causal effects estimators, additional simulation results and proofs are provided in the online supplement.

\section{Problem Setup}
\label{ch5:sec:problem-setup}
\subsection{Notation and the Network Mediation Model}
\label{ch5:subsec:network-mediation-model}
Consider $n$ individuals connected by a network. Write $\mathcal{N}_n \coloneq \{1,\ldots,n\}$ for the set of all individuals and let the $n\times n$ binary symmetric matrix $\bA$ represent the network structure, where $A_{ij}=A_{ji}=1$ indicates that individuals $i$ and $j$ are directly connected and $A_{ii}=0$. Let $\ell_{\bA}(i,j)$ denote the length of the shortest path between $i$ and $j$, each edge being of length one, with $\ell_{\bA}(i,j)=\infty$ if the two are not connected. For any individual $i$, define the $K$-neighborhood $\mathcal{N}(i,K) \coloneq \{j: \ell_{\bA}(i,j) \leqslant K\}$ and write $n(i,K) \coloneq |\mathcal{N}(i,K)|$ for its cardinality.

For each individual $i$ we observe a $d_X$-dimensional continuous covariate $X_i\in\mathbb{R}^{d_X}$, a binary treatment $D_i\in\{0,1\}$, a binary mediator $M_i\in\{0,1\}$ and a continuous outcome $Y_i\in\mathbb{R}$;\footnote{The binary treatment and single-mediator specification is adopted for simplicity. Continuous or multi-valued treatments and mediators may be accommodated by replacing, as appropriate, finite sums and probability masses with integrals and the corresponding conditional densities, including generalized propensity scores for continuous treatments. Pointwise effects at fixed continuous values additionally require kernel localization and the associated smoothing asymptotics. The GNN message-passing backbone can be retained, while the classification heads used for the propensity components are replaced by suitable conditional-density heads. A fixed multi-dimensional mediator vector can similarly be handled through its joint conditional law, yielding mediation through the vector as a whole; mediator-specific path effects require additional identifying assumptions.} in addition there are unobserved error terms $\varepsilon_i\in\mathbb{R}^{d_\varepsilon}$, $\nu_i\in\mathbb{R}^{d_\nu}$ and $\omega_i\in\mathbb{R}^{d_\omega}$. Let $d_i$ and $m_i$ denote realized values of $D_i$ and $M_i$, and write $\bd = (d_i)_{i=1}^N$, $\bD = (D_i)_{i=1}^N$, $\bm = (m_i)_{i=1}^N$, $\bM = (M_i)_{i=1}^N$, $\bX = (X_i)_{i=1}^N$, $\bY = (Y_i)_{i=1}^N$, $\be = (\varepsilon_i)_{i=1}^N$, $\bo = (\omega_i)_{i=1}^N$ and $\bn = (\nu_i)_{i=1}^N$. We treat $\bX,\bA,\{\varepsilon_i,\omega_i,\nu_i\}_{i=1}^n$ as random, but conduct the entire development---identification, estimation and asymptotic theory---conditional on $\bX$ and $\bA$. The reason is that an observational study ordinarily delivers a single realized network, and what the researcher cares about is precisely how variation in treatments and mediators within that network changes the outcomes. Conditioning on $\bA$ dispenses with the need to assume that the network is generated by some random graph model, and conditioning on $\bX$ dispenses with the need to assume a joint distribution and dependence structure for the covariates over the network; either assumption would introduce unnecessary misspecification risk.

Following the potential outcome framework of \citet{rubin1974estimating}, we place the treatments and mediators in parentheses after the outcome variable to denote the potential outcome under that particular treatment and mediator configuration. For example, $Y_i(\bd,\bm)$ denotes the potential outcome of individual $i$ when the treatments and mediators of all individuals are set to $\bd$ and $\bm$, respectively. We assume that it obeys the model
\begin{equation}
  Y_i(\bd,\bm) = f_n(i, \bd, \bm, \bX, \bA, \be),
  \label{ch5:eq:potential-outcome-model}
\end{equation}
where $f_n$ is an unknown function with range $\mathbb{R}$ when the total number of individuals is $n$. Letting $M_i(d_i)$ denote the potential mediator of individual $i$ when the treatment of individual $i$ is $d_i$, we assume that it obeys the model
\begin{equation}
  M_i(d_i) = g_n(i, d_i, \bX, \bA, \omega_i),
  \label{ch5:eq:potential-mediator-model}
\end{equation}
where $g_n$ is an unknown function with range $\{0,1\}$ when the total number of individuals is $n$. We further assume that the treatment $D_i$ of individual $i$ obeys the model
\begin{equation}
  D_i = h_n(i, \bX, \bA, \nu_i),
  \label{ch5:eq:treatment-assignment-model}
\end{equation}
where $h_n$ is an unknown function with range $\{0,1\}$ when the total number of individuals is $n$. Note that in $g_n(\cdot)$ and $h_n(\cdot)$ we assume that only the own treatment and own error term of individual $i$ enter the model.\footnote{This restriction is imposed solely in order to establish the identification results of Section \ref{ch5:subsec:causal-estimand-identification}; that is, its purpose is to endow our estimators with a causal interpretation. See Proposition 1 of \citet{leung2022graph} and \citet{leung2024identifying} for related discussion. If causal inference is not the objective, the estimation method and the large-sample theory developed below remain valid without this restriction.}

The specification above accommodates treatment spillover, mediator spillover and network confounding through covariates simultaneously. It is extremely permissive, in that it only stipulates which variables may enter the structural functions and places no restriction whatsoever on the form of their influence.

Under the standard consistency assumption, models (\ref{ch5:eq:potential-outcome-model}) and (\ref{ch5:eq:potential-mediator-model}) deliver the models obeyed by the observed outcome $Y_i$ and the observed mediator $M_i$ of individual $i$.
\begin{assumption}[Consistency] \label{ch5:ass:consistency}
  The observed variables of an individual coincide with the potential outcomes of those variables at the realized treatments, that is,
  \begin{gather*}
  Y_i = Y_i(\bD,\bM) = f_n(i, \bD, \bM, \bX, \bA, \be),\\
  M_i = M_i(D_i) = g_n(i, D_i, \bX, \bA, \omega_i).
\end{gather*}
\end{assumption}

\subsection{Causal Estimands}
\label{ch5:subsec:causal-estimands}
In the setting without interference the potential outcome model takes the general form $Y_i(D_i,M_i) = f(D_i,M_i,X_i,\varepsilon_i)$ and the definitions of the controlled direct effect and of the natural direct and indirect effects are immediate; in the network setting the definition of an effect is considerably more delicate. In studies that do not involve a mediator, a common device is to define an \emph{exposure mapping} $T_i = t_n(i,\bD,\bA)$ \citep{aronow2017estimating} and a \emph{network control} $W_i = w_n(i,\bX,\bA)$, which aggregate the treatments and the covariates of all individuals into low-dimensional vectors. A typical example is
\[
T_i = \bigg(D_i, \sum_{j=1}^n A_{ij}D_j\Big/\sum_{j=1}^n A_{ij}\bigg) \quad \text{and} \quad W_i = \bigg(X_i, \sum_{j=1}^n A_{ij}, \sum_{j=1}^n A_{ij}X_{ij}\Big/\sum_{j=1}^n A_{ij}\bigg).
\]
One then assumes that the true model for the potential outcome takes the form
\begin{equation}
  Y_i(T_i) = f(T_i,W_i,\varepsilon_i),
  \label{ch5:eq:exposure-mapping-outcome-model}
\end{equation}
so that the treatments and covariates of all individuals may affect the outcome only through $T_i$ and $W_i$, and defines the causal effect as $\E[Y_i(t_1) - Y_i(t_0)]$ for prespecified constants $t_1,t_0$.

This device makes it convenient to define a variety of causal estimands, but (\ref{ch5:eq:exposure-mapping-outcome-model}) imposes a strong restriction: the influence of neighbors beyond the order of aggregation is discarded entirely, and the forms of $T_i$ and $W_i$ are chosen by the researcher, whereas the variables of neighbors may very well not act on the outcome in that form. Aggregating higher-order information or adopting a more elaborate mapping does not eliminate the misspecification risk either. In response to this situation, \citet{savje2024causal} observes that the exposure mapping discharges the two duties of defining the causal estimand and of defining the structure through which network effects operate, which in the existing literature frequently appear together \citep[e.g.,][]{forastiere2021identification,leung2022causal,li2022random}, and that the two ought to be separated: it is reasonable to define a causal estimand via an exposure mapping, but one cannot insist that the variables of neighbors act on one's own outcome in the form of that mapping. Accordingly, we use exposure mappings solely to define causal estimands, while the variables always obey models (\ref{ch5:eq:potential-outcome-model})--(\ref{ch5:eq:treatment-assignment-model}) of Section \ref{ch5:subsec:network-mediation-model}; the exposure mapping plays no part whatsoever in determining the mechanism through which the variables act.

We now introduce the network causal estimands studied in this paper. To distinguish an individual's own variables from those of the others, for any vector $(V_i)_{i=1}^n$ define $\boldsymbol{V}_{-i} \coloneq (V_j)_{j=1,j\neq i}^n$, and rewrite $Y_i(\bd,\bm)$ in (\ref{ch5:eq:potential-outcome-model}) as $Y_i(d_i,\bdmi, m_i,\bmmi)$ and $M_i(\bd)$ in (\ref{ch5:eq:potential-mediator-model}) as $M_i(d_i,\bdmi)$. In order to distinguish own effects from spillover effects more sharply, we move an individual's own treatment outside the exposure mapping, leaving the mapping to summarize the spillover from the remaining individuals:
\[
  T_i \coloneq T_i(\bDmi) \coloneq t_n(i,\bDmi,\bA).
\]
Analogously, we define the mediator mapping
\[
  S_i \coloneq S_i(\bMmi) \coloneq s_n(i,\bMmi,\bA).
\]
Here $t_n(i,\bDmi,\bA)$ and $s_n(i,\bMmi,\bA)$ are known vector-valued functions chosen in accordance with the research question, which aggregate $\bDmi$ and $\bMmi$ into low-dimensional vectors according to the network structure when the sample size is $n$.

Let $\mathcal{J}_n \subseteq \mathcal{N}_n$ be a subset of individuals and $j_n \coloneq |\mathcal{J}_n|$. We study the following causal estimands.

\textit{Own controlled direct effect.}\quad In classical mediation analysis the controlled direct effect measures the influence of the treatment on the outcome when the mediator is held fixed at some value. In the network mediation setting this effect extends naturally to the own controlled direct effect, which measures the influence of an individual's own treatment on the outcome when the exposure mapping of the others' treatments, the own mediator and the mediator mapping of the others' mediators are all held fixed. Define
\begin{equation}
  \begin{aligned}
    Y^C_i(d_i,t,m_i,s) \coloneq \sum_{\bdmi:T_i(\bdmi)=t} & \sum_{\bmmi:S_i(\bmmi)=s} \E[Y_i(d_i,\bdmi,m_i,\bmmi)\mid \bX,\bA] \\ 
    & P(\bDmi=\bdmi,\bMmi=\bmmi \mid T_i=t,S_i=s,\bX,\bA),
  \end{aligned}
\end{equation}
which is a weighted average of the conditional expectations of the potential outcomes over all treatment and mediator configurations that produce $(d_i,t,m_i,s)$. The own controlled direct effect is then defined as
\begin{equation}
  OCDE(t,m,s) \coloneq \frac{1}{j_n}\sum_{i\in\mathcal{J}_n} \big[Y^C_i(1,t,m,s) - Y^C_i(0,t,m,s)\big],
\end{equation}
that is, the average effect on the outcome of switching the treatment from $0$ to $1$ among the individuals delimited by $\mathcal{J}_n$, holding $T_i$, $M_i$ and $S_i$ fixed at $t$, $m$ and $s$, respectively.

Restricting attention to $\mathcal{J}_n$ serves two purposes. It indicates that our framework permits the researcher to specify the population of interest, and it allows the overlap condition required by inverse probability weighted or doubly robust estimators to be satisfied. For instance, when $T_i = \1\{\sum_{j=1}^n A_{ij}D_j \allowbreak > 0\}$ and $OCDE(1,m,s)$ is to be estimated, individuals without any neighbor must be excluded: for such individuals $Y_i^C(d_i,1,m,s)$ is not well defined and the requirement that the propensity score be bounded away from zero and one fails. When overlap does hold, taking $\mathcal{J}_n$ to be $\mathcal{N}_n$ corresponds to the causal effect in the entire population.

\textit{Own natural direct and indirect effects.}\quad In classical mediation analysis the natural direct effect measures the influence of the treatment on the outcome when the mediator is set to the level that would naturally arise under a given treatment state, while the natural indirect effect measures the influence on the outcome of moving the mediator from its natural level under no treatment to its natural level under treatment, holding the treatment fixed. In the network mediation setting these extend to the own natural direct effect and the own natural indirect effect. Define
\begin{equation}
  \begin{aligned}
    Y^N_i(d_i,d_i^*,t) \coloneq \sum_{\bdmi:T_i(\bdmi)=t} & \E[Y_i(d_i,\bdmi,M_i(d_i^*),\bMmi(\bdmi)) \mid \bX,\bA] \\
    & \P(\bDmi = \bdmi \mid T_i = t,\bX,\bA),
  \end{aligned}
\end{equation}
where $\bMmi(\bdmi)\coloneq(M_j(d_j))_{j=1,j\neq i}^n$. This is the conditional expectation of the potential outcome when the individual's own treatment is $d_i$, the exposure mapping is $t$, and the own mediator is at its natural level under treatment $d_i^*$. The own natural direct effect and the own natural indirect effect are then defined as
\begin{gather}
  ONDE(t) \coloneq \frac{1}{j_n}\sum_{i\in\mathcal{J}_n} \big[Y_i^N(1,0,t) - Y_i^N(0,0,t)\big], \\
  ONIE(t) \coloneq \frac{1}{j_n}\sum_{i\in\mathcal{J}_n} \big[Y_i^N(1,1,t) - Y_i^N(1,0,t)\big].
\end{gather}
The former measures the average effect of an individual's own treatment on the outcome when the exposure mapping of the others is held fixed at $t$ and the own mediator is set to the level that would naturally arise under treatment $0$; the latter measures the average effect on the outcome of moving the own mediator from its natural level under no treatment to its natural level under treatment, holding the exposure mapping and the own treatment fixed at $t$ and $1$, respectively.

\textit{Other effects.}\quad Besides controlled and natural effects, the interventional effects familiar from mediation analysis extend to the network setting as well; in addition one may study the influence of the treatments and mediators of the others on one's own outcome, that is, the spillover version of each effect. For brevity, we relegate the definition, identification, estimation, and asymptotic theory of these effects to Appendix B.

\subsection{Identification of the Causal Estimands}
\label{ch5:subsec:causal-estimand-identification}
Since only the potential outcome corresponding to the realized treatment and mediator is observed for each individual, causal inference requires identifying assumptions that allow the expectation of an unobserved potential outcome to be replaced by the expectation of the potential outcomes of other individuals under their realized treatment and mediator assignments. In causal mediation analysis without interference the identifying assumptions involve only conditional independence among an individual's own treatment, mediator and outcome, but in the network setting this no longer suffices. In papers that use exposure and mediator mappings to impose structural restrictions, the identifying assumptions are strengthened to conditional independence between the individual outcome and the exposure and mediator mappings \citep[e.g.,][]{forastiere2021identification,kundu2024network}. Since we impose no such restrictions, we propose new identifying assumptions, namely conditional independence between the individual outcome and the treatments of all individuals, supplemented by the mediators of all individuals when some of the effects are to be identified.

We introduce the assumptions required for each effect in turn, beginning with an independence assumption.
\begin{assumption}[Independence of the error terms] \label{ch5:ass:error-term-independence}
  Conditional on $\bX,\bA$, the collection $\{(\varepsilon_i,\omega_i,\nu_i)\}_{i=1}^n$ is mutually independent.
\end{assumption}
This assumption rules out network correlation arising from unobserved network confounding, so that the observed network correlation may be attributed to the observed network structure and to the causal spillover mechanism, which is what permits the spillover mechanism to be identified from the data. It also furnishes the structural basis for the conditional independence assumptions below.

\textit{Own controlled direct effect.}\quad We propose the following conditional independence assumption for the identification of the own controlled direct effect.
\begin{assumption}
  \label{ch5:ass:own-cde-conditional-independence}
  For any $\bd,\bm$ and $i$,
  \[
  Y_i(\bd,\bm) \indep \bD \mid \bX,\bA \quad \text{and} \quad Y_i(\bd,\bm) \indep \bM \mid \bD,\bX,\bA.
  \]
\end{assumption}
This assumption is the network analogue of the sequential conditional independence assumption standard in causal mediation analysis: it requires that all common causes of an individual's potential outcome and of the treatment assignment of all individuals be contained in $\bX,\bA$, and that all common causes of an individual's potential outcome and of the mediators of all individuals be contained in $\bD,\bX,\bA$, so that both the treatment vector and the mediator vector may be regarded as conditionally randomly assigned relative to the potential outcome. Within our model a more primitive sufficient condition is that $\be$ and $\bn$ be independent given $\bX,\bA$ and that $\be$ and $\bo$ be independent given $\bD,\bX,\bA$.

Define the observable $\mu_i^C(d_i,t,m_i,s)$ and the induced observable OCDE as
\begin{equation}
  \mu_i^C(d_i,t,m_i,s) \coloneq \E(Y_i \mid D_i=d_i, T_i=t, M_i=m_i, S_i=s, \bX, \bA),
  \label{ch5:eq:own-cde-observed-functional}
\end{equation}
\[
  \text{and }OCDE^{obs}(t,m,s) \coloneq \frac{1}{j_n}\sum_{i\in\mathcal{J}_n} \big[\mu_i^C(1,t,m,s) - \mu_i^C(0,t,m,s)\big],
\]
respectively. We then have the following identification result.
\begin{theorem} \label{ch5:thm:own-cde-identification}
  Under Assumptions \ref{ch5:ass:consistency}, \ref{ch5:ass:error-term-independence} and \ref{ch5:ass:own-cde-conditional-independence}, $OCDE^{obs}(t,m,s) = OCDE(t,m,s)$ for any $t,s,m$.
\end{theorem}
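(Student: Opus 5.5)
The plan is to work at the level of the full configuration $(\bd,\bm)$ and then re-aggregate. Fix $i\in\mathcal{J}_n$ and $(d_i,t,m_i,s)$. The event $\{D_i=d_i,T_i=t,M_i=m_i,S_i=s\}$ is the disjoint union, over all $\bdmi$ with $T_i(\bdmi)=t$ and all $\bmmi$ with $S_i(\bmmi)=s$, of the events $\{\bD=(d_i,\bdmi),\bM=(m_i,\bmmi)\}$. By the tower property,
\[
\mu_i^C(d_i,t,m_i,s)=\sum_{\bdmi:T_i(\bdmi)=t}\sum_{\bmmi:S_i(\bmmi)=s}\E(Y_i\mid \bD=\bd,\bM=\bm,\bX,\bA)\,\P(\bDmi=\bdmi,\bMmi=\bmmi\mid D_i=d_i,M_i=m_i,T_i=t,S_i=s,\bX,\bA).
\]
Throughout I restrict to configurations of positive conditional probability. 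Implicitly, $\mathcal{J}_n$ is chosen so that the conditioning events are non-null, which is exactly the role of $\mathcal{J}_n$ discussed after the definition of OCDE.

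\emph{Step 1: the regression term.} By Assumption \ref{ch5:ass:consistency}, on $\{\bD=\bd,\bM=\bm\}$ we have $Y_i=Y_i(\bd,\bm)$. Hence
\[
\E(Y_i\mid \bD=\bd,\bM=\bm,\bX,\bA)=\E(Y_i(\bd,\bm)\mid \bD=\bd,\bM=\bm,\bX,\bA).
\]
The second part of Assumption \ref{ch5:ass:own-cde-conditional-independence} removes $\bM$ from the conditioning set, and the first part then removes $\bD$. The result is $\E[Y_i(d_i,\bdmi,m_i,\bmmi)\mid\bX,\bA]$, which is precisely the summand appearing in $Y_i^C$.

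\emph{Step 2: the weights.} I expect this to be the main obstacle, since it is where Assumption \ref{ch5:ass:error-term-independence} and the restricted forms of $g_n,h_n$ are needed. I will show that
\[
\P(\bDmi=\bdmi,\bMmi=\bmmi\mid D_i=d_i,M_i=m_i,T_i=t,S_i=s,\bX,\bA)=\P(\bDmi=\bdmi,\bMmi=\bmmi\mid T_i=t,S_i=s,\bX,\bA).
\]
Conditional on $\bX,\bA$, model (\ref{ch5:eq:treatment-assignment-model}) and Assumption \ref{ch5:ass:consistency} give that $(D_i,M_i)$ is a measurable function of $(\nu_i,\omega_i)$ alone. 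Here it matters crucially that $g_n$ involves only the own treatment $d_i$ and the own error $\omega_i$. For the same reason, $(\bDmi,\bMmi)$ is a function of $\{(\nu_j,\omega_j)\}_{j\neq i}$. Assumption \ref{ch5:ass:error-term-independence} then yields $(D_i,M_i)\indep(\bDmi,\bMmi)\mid\bX,\bA$.

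Since $(T_i,S_i)=(t_n(i,\bDmi,\bA),s_n(i,\bMmi,\bA))$ is a function of $(\bDmi,\bMmi)$ given $\bA$, it follows that $(D_i,M_i)\indep(\bDmi,\bMmi,T_i,S_i)\mid\bX,\bA$. Consequently, the additional conditioning on $\{D_i=d_i,M_i=m_i\}$ can be dropped.

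\emph{Step 3: conclusion.} Substituting Steps 1 and 2 into the display gives $\mu_i^C(d_i,t,m_i,s)=Y_i^C(d_i,t,m_i,s)$ for every $i\in\mathcal{J}_n$ and every $d_i\in\{0,1\}$. Taking the difference between $d_i=1$ and $d_i=0$ and averaging over $\mathcal{J}_n$ gives $OCDE^{obs}(t,m,s)=OCDE(t,m,s)$.
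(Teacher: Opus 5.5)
Your proposal is correct and follows the paper's proof essentially step for step: the same total-probability decomposition over $(\bdmi,\bmmi)$, consistency plus Assumption \ref{ch5:ass:own-cde-conditional-independence} to reduce the regression term to $\E[Y_i(d_i,\bdmi,m_i,\bmmi)\mid\bX,\bA]$, and Assumption \ref{ch5:ass:error-term-independence} with the own-only forms of $g_n,h_n$ to get $(D_i,M_i)\indep(\bDmi,\bMmi)\mid\bX,\bA$ and drop the extra conditioning in the weights. Your positivity remark is a harmless addition that the paper leaves implicit.
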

The importance of this theorem lies in the equality it establishes between $OCDE(t,m,s)$, which involves unobservable potential outcomes, and $OCDE^{obs}(t,m,s)$, which is constructed from conditional expectations of the observed outcome. Estimation of and inference on $OCDE^{obs}(t,m,s)$ therefore accomplish estimation of and inference on $OCDE(t,m,s)$.

\textit{Own natural direct and indirect effects.}\quad For the own natural direct effect and the own natural indirect effect we propose the following identifying assumption.
\begin{assumption} \label{ch5:ass:own-natural-effects-conditional-independence}
  For any $\bd,m_i,d_i^*$ and $i$,
  \begin{gather}
    Y_i(d_i,\bdmi,m_i,\bMmi(\bdmi)) \indep \bD \mid \bX,\bA, \label{ch5:eq:own-natural-treatment-outcome-independence}\\
    Y_i(d_i,\bdmi,m_i,\bMmi(\bdmi)) \indep M_i \mid \bD,\bX,\bA, \label{ch5:eq:own-natural-mediator-outcome-independence}\\
    M_i(d_i) \indep D_i \mid \bX, \bA, \label{ch5:eq:own-natural-treatment-mediator-independence}\\
    Y_i(d_i,\bdmi,m_i,\bMmi(\bdmi)) \indep M_i(d_i^*) \mid \bX,\bA. \label{ch5:eq:own-natural-cross-world-independence}
  \end{gather}
\end{assumption}
This assumption may be viewed as the sequential ignorability assumption used to identify natural direct and indirect effects, transposed to the network mediation setting. The first two conditions parallel Assumption \ref{ch5:ass:own-cde-conditional-independence} and rule out unobserved treatment--outcome and mediator--outcome confounding, the only difference being that the fixed $\bmmi$ is replaced by $\bMmi(\bdmi)$ because natural effects are at issue; the third condition rules out unobserved treatment--mediator confounding. Since natural effects involve the natural mediator levels of the same individual under different treatment states, a fourth, cross-world independence condition is also required: it is what allows the weighted average over $m_i$ to be interpreted as the expected potential outcome when the own mediator is set to its natural level $M_i(d_i^*)$, and is thus the key to passing from the observable mediator distribution to the counterfactual one.

Within our model, a sufficient condition for Assumption \ref{ch5:ass:own-natural-effects-conditional-independence} is that, conditional on $\bX,\bA$, there be no unobserved common cause of the treatment vector, the individual's own mediator and the potential outcome. In particular, the assumption holds if $\{\be,(\omega_j)_{j\neq i}\} \indep \bn \mid \bX,\bA$ and $\omega_i \indep \{\bn, \be, (\omega_j)_{j\neq i}\} \mid \bX,\bA$.

Define the observable quantity $\mu_i^N(d_i,d_i^*,t)$ together with $ONDE^{obs}(t)$ and $ONIE^{obs}(t)$ constructed from it as
\begin{equation}
\begin{aligned}
\mu_i^N(d_i,d_i^*,t) \coloneq \sum_{m_i\in\{0,1\}} \E(Y_i \mid D_i=d_i,& T_i=t,M_i=m_i,\bX,\bA) \\
&\P(M_i = m_i \mid D_i = d_i^*,\bX,\bA),
\end{aligned}
\label{ch5:eq:own-natural-effects-observed-functional}
\end{equation}
\begin{gather*}
  ONDE^{obs}(t) \coloneq \frac{1}{j_n}\sum_{i\in\mathcal{J}_n}\big[\mu_i^N(1,0,t) - \mu_i^N(0,0,t)\big], \\
  \text{and } ONIE^{obs}(t) \coloneq \frac{1}{j_n}\sum_{i\in\mathcal{J}_n}\big[\mu_i^N(1,1,t) - \mu_i^N(1,0,t)\big],
\end{gather*}
respectively. We then have the following identification result.
\begin{theorem} \label{ch5:thm:own-natural-effects-identification}
  Under Assumptions \ref{ch5:ass:consistency}, \ref{ch5:ass:error-term-independence} and \ref{ch5:ass:own-natural-effects-conditional-independence}, 
  $
  \mu_i^N(d_i,d_i^*,t) = Y^N_i(d_i,d_i^*,t)
  $, for any $i,d_i,d_i^*,t$. Consequently, $
  ONDE^{obs}(t) = ONDE(t), ONIE^{obs}(t) = ONIE(t).
  $
\end{theorem}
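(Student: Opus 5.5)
I will prove the pointwise identity $\mu_i^N(d_i,d_i^*,t)=Y_i^N(d_i,d_i^*,t)$ for fixed $i$. Linearity of the sums defining $ONDE^{obs}$ and $ONIE^{obs}$ then gives the two consequences at once. The plan is to start from the observable side, condition on the full treatment vector $\bD$ rather than on $(D_i,T_i)$ only, and strip off the observables one at a time with the four parts of Assumption \ref{ch5:ass:own-natural-effects-conditional-independence}.

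\textbf{Step 1 (outcome regression).} First I disaggregate the outcome regression over configurations of $\bDmi$:
\[
\E(Y_i\mid D_i=d_i,T_i=t,M_i=m_i,\bX,\bA)=\sum_{\bdmi:T_i(\bdmi)=t}\E(Y_i\mid \bD=(d_i,\bdmi),M_i=m_i,\bX,\bA)\,\P(\bDmi=\bdmi\mid D_i=d_i,T_i=t,M_i=m_i,\bX,\bA).
\]
Consistency (Assumption \ref{ch5:ass:consistency}) and the model (\ref{ch5:eq:potential-mediator-model}) give $\bMmi=\bMmi(\bDmi)$. Hence on the event $\{\bD=(d_i,\bdmi),M_i=m_i\}$ we have $Y_i=Y_i(d_i,\bdmi,m_i,\bMmi(\bdmi))$. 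By (\ref{ch5:eq:own-natural-mediator-outcome-independence}) and then (\ref{ch5:eq:own-natural-treatment-outcome-independence}), the inner expectation equals $\E[Y_i(d_i,\bdmi,m_i,\bMmi(\bdmi))\mid\bX,\bA]$.

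\textbf{Step 2 (the weights).} Next I simplify the weights using Assumption \ref{ch5:ass:error-term-independence}. Under (\ref{ch5:eq:treatment-assignment-model}), $D_j$ is a function of $\nu_j$ given $\bX,\bA$. Under (\ref{ch5:eq:potential-mediator-model}), $M_i$ is a function of $(D_i,\omega_i)$. Independence across units then gives $\bDmi\indep (D_i,M_i)\mid\bX,\bA$. Because $T_i$ is a function of $\bDmi$ alone, this yields
\[
\P(\bDmi=\bdmi\mid D_i=d_i,T_i=t,M_i=m_i,\bX,\bA)=\P(\bDmi=\bdmi\mid T_i=t,\bX,\bA),
\]
which is exactly the weight in $Y_i^N$. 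This independence is also why the observable $\mu_i^N$ may use mediator probabilities conditioned only on $D_i$.

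\textbf{Step 3 (mediator probability and cross-world step).} By (\ref{ch5:eq:own-natural-treatment-mediator-independence}) and consistency, $\P(M_i=m_i\mid D_i=d_i^*,\bX,\bA)=\P(M_i(d_i^*)=m_i\mid\bX,\bA)$. Substituting Steps 1--3 into (\ref{ch5:eq:own-natural-effects-observed-functional}) and interchanging the finite sums, I get
\[
\mu_i^N=\sum_{\bdmi}\P(\bDmi=\bdmi\mid T_i=t,\bX,\bA)\sum_{m_i}\E[Y_i(d_i,\bdmi,m_i,\bMmi(\bdmi))\mid\bX,\bA]\,\P(M_i(d_i^*)=m_i\mid\bX,\bA).
\]
The cross-world condition (\ref{ch5:eq:own-natural-cross-world-independence}) lets me rewrite the inner product as $\E[Y_i(d_i,\bdmi,m_i,\bMmi(\bdmi))\,\1\{M_i(d_i^*)=m_i\}\mid\bX,\bA]$. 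Summing over $m_i$ then gives $\E[Y_i(d_i,\bdmi,M_i(d_i^*),\bMmi(\bdmi))\mid\bX,\bA]$, which yields $Y_i^N$.

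\textbf{Main obstacle.} The delicate step is Step 1. There I must pass from conditioning on the event $\{D_i=d_i,T_i=t,M_i=m_i\}$ to the potential outcome evaluated with the \emph{random} natural mediators $\bMmi(\bdmi)$ of the others, and that requires the finer conditioning on all of $\bD$. I also need to check that the combination of (\ref{ch5:eq:own-natural-treatment-outcome-independence}) and (\ref{ch5:eq:own-natural-mediator-outcome-independence}) delivers the joint statement $Y_i(\cdot)\indep(\bD,M_i)\mid\bX,\bA$. It does, by the chain rule for conditional independence: $A\indep B\mid C$ together with $A\indep E\mid B,C$ implies $A\indep(B,E)\mid C$. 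Positivity of all conditioning events within $\mathcal{J}_n$ is implicitly assumed so that every conditional expectation above is well defined.
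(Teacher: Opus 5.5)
Your proposal is correct and matches the paper's proof step for step. Like the paper, you decompose the outcome regression over $\bdmi$ with $T_i(\bdmi)=t$, and you use consistency with conditions (\ref{ch5:eq:own-natural-treatment-outcome-independence})--(\ref{ch5:eq:own-natural-mediator-outcome-independence}) to reduce it to $\E[Y_i(d_i,\bdmi,m_i,\bMmi(\bdmi))\mid\bX,\bA]$. You then use Assumption \ref{ch5:ass:error-term-independence} to obtain $(D_i,M_i)\indep\bDmi\mid\bX,\bA$ for the weights and (\ref{ch5:eq:own-natural-treatment-mediator-independence}) for the mediator law, and you collapse the sum over $m_i$ with the cross-world condition (\ref{ch5:eq:own-natural-cross-world-independence}); your indicator form $\E[Y_i(\cdot)\1\{M_i(d_i^*)=m_i\}\mid\bX,\bA]$ is a slightly cleaner version of the paper's conditioning on $M_i(d_i^*)=m_i$, since it does not need that event to have positive probability.
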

The significance of this theorem parallels that of Theorem \ref{ch5:thm:own-cde-identification}: estimation of and inference on $ONDE^{obs}(t)$ and $ONIE^{obs}(t)$---defined entirely in terms of observables---serve as estimation of and inference on the causal estimands $ONDE(t)$ and $ONIE(t)$, which involve unobservable potential outcomes.

\section{Estimation}
\label{ch5:sec:estimation}
\subsection{Network Augmented Inverse Probability Weighted Estimators}
\label{ch5:subsec:network-aipw-estimation}
We now construct augmented inverse probability weighted (AIPW) estimators of the causal estimands defined in Section \ref{ch5:subsec:causal-estimands}. 

\textit{AIPW estimator of the OCDE.}\quad We refer to $\mu_i^C(d_i,t,m_i,s)$ in (\ref{ch5:eq:own-cde-observed-functional}) as the outcome regression nuisance function, and further define the propensity score nuisance function
\[
\pi_i^C(d,t,m,s) \coloneq \P(D_i=d, T_i=t, M_i=m, S_i=s \mid \bX, \bA).
\]
Based on these two nuisance functions, for any $(d,t,m,s)$ we define the individual-level AIPW score
\[
\tau_i^C(d,t,m,s) \coloneq \mu_i^C(d,t,m,s) + \frac{\1\{D_i=d,T_i=t,M_i=m,S_i=s\}}{\pi_i^C(d,t,m,s)}\big[Y_i-\mu_i^C(d,t,m,s)\big].
\]
Let $\widehat{\mu}_i^C(d,t,m,s)$ and $\widehat{\pi}_i^C(d,t,m,s)$ be estimators of $\mu_i^C(d,t,m,s)$ and $\pi_i^C(d,t,m,s)$. For the estimation of the OCDE we construct the following estimator of the individual-level AIPW score:
\[
\widehat{\tau}_i^C(d,t,m,s) \coloneq \widehat{\mu}_i^C(d,t,m,s) + \frac{\1\{D_i=d,T_i=t,M_i=m,S_i=s\}}{\widehat{\pi}_i^C(d,t,m,s)}\big[Y_i-\widehat{\mu}_i^C(d,t,m,s)\big].
\]
The score $\widehat{\tau}_i^C(d,t,m,s)$ is doubly robust: as long as either $\widehat{\mu}_i^C(d,t,m,s)=\mu_i^C(d,t,m,s)$ or\\$\widehat{\pi}_i^C(d,t,m,s)=\pi_i^C(d,t,m,s)$, the conditional expectation of $\widehat{\tau}_i^C(d,t,m,s)$ equals $\mu_i^C(d,t,m,s)$.

In view of the relation between $\mu_i^C(d_i,t,m_i,s)$ and $OCDE^{obs}(t,m,s)$, the AIPW estimator of $OCDE^{obs}(t,m,s)$ is naturally constructed as
\[
\widehat{OCDE}(t,m,s) \coloneq \frac{1}{j_n}\sum_{i\in\mathcal{J}_n}\big[\widehat{\tau}_i(1,t,m,s) - \widehat{\tau}_i(0,t,m,s)\big].
\]
When Theorem \ref{ch5:thm:own-cde-identification} applies, this is the AIPW estimator of $OCDE(t,m,s)$.

\textit{AIPW estimators of the ONDE and the ONIE.}\quad We refer to $\mu_i^N(d_i,d_i^*,t)$ in (\ref{ch5:eq:own-natural-effects-observed-functional}) as an outcome regression nuisance function, and define a second outcome regression nuisance function
\[
\eta_i^N(d,t,m) \coloneq \E(Y_i \mid D_i=d,T_i=t,M_i=m,\bX,\bA).
\]
In addition we define three propensity score nuisance functions,
\begin{gather*}
q_i^N(m \mid d) \coloneq \P(M_i=m \mid D_i=d,\bX,\bA),\quad e_i^N(d,t) \coloneq \P(D_i=d,T_i=t \mid \bX,\bA),\\
\text{and } p_i^N(d) \coloneq \P(D_i=d \mid \bX,\bA).
\end{gather*}
It is readily verified from these definitions that $\mu_i^N(d_i,d_i^*,t)$ and $\eta_i^N(d,t,m)$ satisfy
\[
\mu_i^N(d,d^*,t) = \E[\eta_i^N(d,t,M_i) \mid D_i=d^*,\bX,\bA]. 
\]

Based on these nuisance functions, for any $(d,d^*,t)$ we define the individual-level AIPW score
\[
\begin{aligned}
  \tau_i^N(d,d^*,t) \coloneq & \mu_i^N(d,d^*,t) + \frac{\1\{D_i=d^*\}}{p_i^N(d^*)}\big[\eta_i^N(d,t,M_i) - \mu_i^N(d,d^*,t)\big] \\
  & + \frac{\1\{D_i=d,T_i=t\}}{e_i^N(d,t)}\frac{q_i^N(M_i \mid d^*)}{q_i^N(M_i \mid d)}\big[Y_i - \eta_i^N(d,t,M_i)\big].
\end{aligned}
\]
Let $\widehat{\mu}_i^N(d,d^*,t)$, $\widehat{\eta}_i^N(d,t,m)$, $\widehat{q}_i^N(m \mid d)$, $\widehat{e}_i^N(d,t)$ and $\widehat{p}_i^N(d)$ be estimators of $\mu_i^N(d,d^*,t)$, $\eta_i^N(d,t,m)$, $q_i^N(m \mid d)$, $e_i^N(d,t)$ and $p_i^N(d)$, respectively. For the estimation of the ONDE and the ONIE we construct the following estimator of the individual-level AIPW score:
\[
\begin{aligned}
  \widehat{\tau}_i^N(d,d^*,t) \coloneq & \widehat{\mu}_i^N(d,d^*,t) + \frac{\1\{D_i=d^*\}}{\widehat{p}_i^N(d^*)}\big[\widehat{\eta}_i^N(d,t,M_i) - \widehat{\mu}_i^N(d,d^*,t)\big] \\
  & + \frac{\1\{D_i=d,T_i=t\}}{\widehat{e}_i^N(d,t)}\frac{\widehat{q}_i^N(M_i \mid d^*)}{\widehat{q}_i^N(M_i \mid d)}\big[Y_i - \widehat{\eta}_i^N(d,t,M_i)\big].
\end{aligned}
\]
The score $\widehat{\tau}_i^N(d,d^*,t)$ is multiply robust. Under our model and Assumption \ref{ch5:ass:error-term-independence} we have $M_i \indep T_i \mid D_i, \bX,\bA$, from which it is not hard to show that the conditional expectation of $\widehat{\tau}_i^N(d,d^*,t)$ equals $\mu_i^N(d,d^*,t)$ whenever $\widehat{\mu}_i^N(d,d^*,t)$ and $\widehat{\eta}_i^N(d,t,m)$ equal their true values, or $\widehat{q}_i^N(m \mid d)$, $\widehat{e}_i^N(d,t)$ and $\widehat{p}_i^N(d)$ equal their true values, or $\widehat{\eta}_i^N(d,t,m)$ and $\widehat{p}_i^N(d)$ equal their true values.

In view of the relation between $\mu_i^N(d,d^*,t)$ and $ONDE^{obs}(t)$ and $ONIE^{obs}(t)$, we construct the estimators
\begin{gather*}
  \widehat{ONDE}(t) \coloneq \frac{1}{j_n}\sum_{i\in\mathcal{J}_n}\big[\widehat{\tau}_i^N(1,0,t) - \widehat{\tau}_i^N(0,0,t)\big], \\
  \text{and } \widehat{ONIE}(t) \coloneq \frac{1}{j_n}\sum_{i\in\mathcal{J}_n}\big[\widehat{\tau}_i^N(1,1,t) - \widehat{\tau}_i^N(1,0,t)\big].
\end{gather*}
When Theorem \ref{ch5:thm:own-natural-effects-identification} applies, these are the AIPW estimators of $ONDE(t)$ and $ONIE(t)$.

\subsection{Graph Neural Network Estimators of the Nuisance Functions}
\label{ch5:subsec:gnn-nuisance-estimation}
The AIPW estimators of Section \ref{ch5:subsec:network-aipw-estimation} involve a number of nuisance functions, and estimating them accurately is a prerequisite for the consistency of the estimators. Parametric models such as linear or logistic regression are structurally too simple and tend to perform poorly on complex network data; commonly used machine learning methods such as multilayer perceptrons and random forests are more flexible in their specification, but cannot take the network structure as an input, so that the user must first summarize the network into features such as the average covariate of first-order neighbors, the average covariate of second-order neighbors, the number of first-order neighbors, and so on. If the true network effect does not act through these particular forms, such machine learning methods will likewise fail to deliver accurate estimates.

Since network data are a canonical instance of graph-structured data, we estimate all nuisance functions with graph neural networks (GNNs), which are designed for exactly this kind of data. A graph neural network is a nonparametric learning method that takes node features and the adjacency structure as direct inputs and, through a layerwise message passing mechanism, aggregates the information of neighbors and of higher-order neighbors into the representation of the target node; it therefore learns complex nonlinear network dependence automatically, without the form of the network features being specified in advance. This property is particularly important here, since the outcome regressions and the propensity score nuisance functions may all be influenced simultaneously and in complicated nonlinear ways by an individual's own features, the features of neighbors and higher-order network structure.

In detail, a standard graph neural network consists of several layers of neurons, each neuron being a vector-valued function, with the output of one layer serving as the input of the next. Write $h_i^{(l)}$ for the $i$th neuron in layer $l$. Layer $0$ is the input layer, which has $n$ neurons with $h_i^{(0)}=X_i$; layers $1$ through $L$ are hidden layers, each with $n$ neurons whose outputs are $H$-dimensional vectors, so that $h_i^{(l)}\in\mathbb{R}^H$. Their connection to the preceding layer is determined by the message passing architecture
\[
h_i^{(l)} = \Phi_{0l}\Big(h_i^{(l-1)},\Phi_{1l}\big(h_i^{(l-1)},\{h_j^{(l-1)}: A_{ij}=1,j\in\mathcal{N}_n\}\big)\Big),
\]
where $\Phi_{1l}(\cdot)$ aggregates the information of first-order neighbors and allows an individual's own information to interact with that of the neighbors, and $\Phi_{0l}(\cdot)$ maps the own information together with the aggregated neighbor information into the $H$-dimensional output of that neuron. Layer $L+1$ is the output layer, which maps the output of layer $L$ into an $O$-dimensional vector $\Phi_O(h_i^{(L)})$ through a parameterized function $\Phi_O:\mathbb{R}^H \mapsto \mathbb{R}^O$, this being the predicted value of the dependent variable for individual $i$: if the dependent variable is the outcome then $O=1$, whereas if it is a probability then $O$ is the number of discrete probability classes (for instance $O=16$ when $\widehat{\pi}_i^C(d,t,m,s)$ is fitted and $d,t,m,s$ are all binary).

Since each hidden layer performs one round of message passing, a graph neural network with $L$ hidden layers can transmit the information of an individual to its $L$th-order neighbors, so that $L$ determines the range of connected individuals whose information is used in prediction. A small $L$ exploits only very local information, which is appropriate when the influence of higher-order neighbors is weak; a large $L$ exploits the information of higher-order neighbors but markedly increases the complexity of the network and the number of parameters. In practice the choice of $L$ must be weighed against subject-matter knowledge or predictive performance.

The choice of $\Phi_{0l}(\cdot)$ and $\Phi_{1l}(\cdot)$ determines the predictive capacity of the graph neural network. This is especially true of the latter, which determines whether the network can capture the information of neighbors accurately and hence the quality of message passing. A detailed account and comparison of the possible choices of $\Phi_{0l}(\cdot)$ and $\Phi_{1l}(\cdot)$ is beyond the scope of this paper; we describe only the choice used in our simulations and empirical application. We adopt the principal neighborhood aggregation architecture of \citet{corso2020principal}, in which $\Phi_{0l}(\cdot)$ is ordinarily taken to be a multilayer perceptron and $\Phi_{1l}(\cdot)$ to be the composite function
\[
\Phi_{1l}(\cdot) = \Gamma\big(\{\phi_l(h_i^{(l-1)},h_j^{(l-1)}):A_{ij}=1,j\in\mathcal{N}_n\}\big).
\]
Here $\phi_l(\cdot)$ is a vector-valued function, ordinarily taken to be a multilayer perceptron, and $\Gamma(\cdot)=\big(S(\cdot,0),S(\cdot,1),\allowbreak S(\cdot,-1)\big)\otimes\big(\text{mean}(\cdot), \text{sd}(\cdot), \text{sum}(\cdot), \text{min}(\cdot), \text{max}(\cdot)\big)$, where $\otimes$ denotes the tensor product. The scalar function $S(\cdot,\alpha)$ is defined as
\[
S(\cdot,\alpha) = \bigg(\frac{\log(|\cdot|+1)}{\delta}\bigg)^\alpha, \quad \delta = \frac{1}{n}\sum_{i=1}^n \log\bigg(\sum_{j=1}^n A_{ij}+1\bigg),
\]
where $|\cdot|$ returns the number of elements of the input set. The operators $\text{mean}(\cdot)$, $\text{sd}(\cdot)$, $\text{sum}(\cdot)$, $\text{min}(\cdot)$, $\text{max}(\cdot)$ take a set of vectors and compute, componentwise, the mean, standard deviation, sum, minimum and maximum. Consequently, if $\phi_l(\cdot)$ returns an $H$-dimensional vector, then $\big(\text{mean}(\cdot), \text{sd}(\cdot), \text{sum}(\cdot), \text{min}(\cdot), \text{max}(\cdot)\big)$ is a $5H$-dimensional vector and $\Gamma(\cdot)$ is a $15H$-dimensional vector.

We now describe the fitting procedure for the graph neural network estimators, taking $\E(Y_i \mid D_i=d_i, T_i=t, M_i=m_i, S_i=s, \bX, \bA)$ and $\P(D_i=d, T_i=t, M_i=m, S_i=s \mid \bX, \bA)$ as examples. The fitting procedures for the remaining outcome regression and propensity score nuisance functions are entirely analogous and are omitted for brevity.

Let $\theta$ collect all parameters of the chosen graph neural network architecture and let $\widehat{\theta}$ denote their estimated values. Write $\mu(i,D_i,T_i,\allowbreak M_i,S_i,\bX,\bA;\widehat{\theta})$ for the value of the outcome regression above predicted for individual $i$ by the graph neural network estimator, that is, $\mu(i,D_i,T_i,M_i,S_i,\bX,\bA;\widehat{\theta})=\Phi_O(h_i^{(L)})$, where $\widehat{\theta}$ is obtained by minimizing the mean squared error loss:
\[
\widehat{\theta} \in \mathop{\arg\min}_{\theta} \frac{1}{j_n}\sum_{i\in\mathcal{J}_n} \big(Y_i - \mu(i,D_i,T_i,M_i,S_i,\bX,\bA;\theta)\big)^2.
\]

For the propensity score we encode the joint value of the four binary variables $(D_i,T_i,M_i,\allowbreak S_i)$ as a $16$-class label, $C_i = 8D_i + 4T_i + 2M_i + S_i \in \{0,1,\ldots,15\}$. Write $\pi(i,\bX,\bA;\widehat{\theta}) = \big(\pi_0(i,\bX,\bA;\widehat{\theta}),\ldots,\pi_{15}(i,\bX,\bA;\widehat{\theta})\big)$ for the logits of the propensity score predicted for individual $i$ by the graph neural network, that is, $\pi(i,\bX,\bA;\widehat{\theta}) = \Phi_O(h_i^{(L)})$. Applying the softmax transformation, the graph neural network estimator of $\P(C_i=c \mid \bX,\bA)$ is
\[
\frac{\exp\{\pi_{c}(i,\bX,\bA;\widehat{\theta})\}}{\sum_{k=0}^{15} \exp\{\pi_{k}(i,\bX,\bA;\widehat{\theta})\}}.
\]
where $\widehat{\theta}$ is obtained by minimizing the multiclass cross-entropy loss:
\[
\widehat{\theta} \in \mathop{\arg\min}_\theta -\frac{1}{j_n}\sum_{i\in \mathcal{J}_n}\log \Bigg(\frac{\exp\{\pi_{C_i}(i,\bX,\bA;\theta)\}}{\sum_{k=0}^{15} \exp\{\pi_{k}(i,\bX,\bA;\theta)\}}\Bigg).
\]

Finally, we note that a graph neural network designed according to the architecture above is insensitive to the labelling of individuals: interchanging the labels of any two individuals leaves the estimates for all individuals unchanged, a property known as permutation invariance. In most observational network studies the labels of individuals serve no purpose beyond identifying them, and the model specified in Section \ref{ch5:subsec:network-mediation-model} enjoys the same property, so that the graph neural networks of this section are an appropriate choice of estimator for the nuisance functions. See \citet{leung2022graph} for further discussion of permutation invariance and its implications.

\section{Asymptotic Theory}
\label{ch5:sec:asymptotic-theory}
In this section we study the asymptotic properties, as $n\to \infty$, of the AIPW estimators proposed in Section \ref{ch5:subsec:network-aipw-estimation}. Note that as $n$ tends to infinity, the network structure, the distributions of the random variables, the models for the outcome, the mediator and the treatment, the GNN functions and the number of layers $L$ are all allowed to change. Moreover, as noted in Section \ref{ch5:subsec:network-mediation-model}, our asymptotic theory is conditional on $(\bX,\bA)$, which avoids imposing additional restrictions on the network dependence structure of $(\bX,\bA)$.

We begin with the asymptotic properties of the estimator of the own controlled direct effect at a fixed $(t,m,s)$. Since $\bX,\bA,\be$ induce correlation across all individuals, assumptions are required to limit this correlation so that a suitable central limit theorem may be applied, and the central limit theorem itself requires further technical conditions. We introduce the necessary assumptions one at a time.

\textit{Exposure mappings.}\quad The exposure mappings used to define the own controlled direct effect must satisfy the following locality condition.
\begin{assumption}
\label{ch5:ass:own-cde-local-exposure-mappings}
There exist fixed positive integers $K_T$ and $K_S$ such that the exposure mappings $T_i$ and $S_i$ are determined solely by the treatments and mediators of the individuals in the $K_T$-neighborhood and the $K_S$-neighborhood, respectively. Specifically,
\[
T_i = t_n(i,\bD_{\mathcal{N}(i,K_T)\setminus\{i\}},\bA), \quad S_i = s_n(i,\bM_{\mathcal{N}(i,K_S)\setminus\{i\}},\bA).
\]
\end{assumption}
This assumption requires the exposure mappings used to define the causal estimands to be local of fixed order: $T_i$ and $S_i$ may summarize only the treatments and mediators within the fixed $K_T$- and $K_S$-neighborhoods of individual $i$, and $K_T$ and $K_S$ do not grow with the sample size. This is not a strong assumption, as the substantively meaningful exposure mappings encountered in the literature all satisfy it. Moreover, under the approximate neighborhood interference assumption introduced below, the influence of higher-order neighbors on an individual is very weak, so that from a substantive standpoint there is little need to study the effects of neighbors of very high order; choosing $K_T$ and $K_S$ of moderate size suffices for most empirical work.

\textit{Boundedness and overlap.}\quad Next we impose the boundedness and overlap conditions standard in causal inference. Define the individual-level centered score
\begin{equation}
\varphi^C_i(t, m, s) \coloneq \big[\tau^C_i(1, t, m, s) - \tau^C_i(0, t, m, s)\big] - OCDE(t, m, s),
\end{equation}
where $\tau^C_i(d, t, m, s)$ is defined in Section \ref{ch5:subsec:network-aipw-estimation}, and put
\begin{equation}
\sigma_{n,C}^2 \coloneq \Var\bigg(\frac{1}{\sqrt{j_n}} \sum_{i \in \mathcal{J}_n} \varphi^C_i(t, m, s) \mid \bX, \bA \bigg).
\label{ch5:eq:own-cde-asymptotic-variance}
\end{equation}
\begin{assumption}
\label{ch5:ass:own-cde-moments-overlap}
(i) There exist $M < \infty$ and $p > 4$ such that $\E[|Y_i(\bd, \bm)|^p \mid \bX, \bA] < M$ almost surely for all $n \in \mathbb{N}$, $i \in \mathcal{N}_n$, $\bd \in \{0, 1\}^n$ and $\bm \in \{0, 1\}^n$.\\
(ii) There exists $[\underline{\pi}, \overline{\pi}] \subset (0, 1)$ such that, for all $n \in \mathbb{N}$, $i \in \mathcal{N}_n$ and any $(d, t, m, s)$, $\widehat{\pi}_i^C(d, t, m, s),\allowbreak \pi_i^C(d, t, m, s) \in [\underline{\pi}, \overline{\pi}]$ and $j_n / n \geqslant \underline{\pi}$ almost surely.\\
(iii) $\liminf_{n \to \infty} \sigma_{n,C}^2 > 0$ almost surely.
\end{assumption}
Part (i) requires the conditional $p$th moment of the potential outcome to exist, a standard technical condition that is ubiquitous in the double machine learning literature \citep{chernozhukov2018double,farrell2021deep,klyne2026average}. Part (ii) imposes overlap on the propensity score, the standard technical condition in the causal inference literature that employs inverse probability weighted or AIPW estimators; it also requires $\mathcal{J}_n$ to be a nontrivial subset of $\mathcal{N}_n$. Part (iii) is a standard nondegeneracy condition guaranteeing the existence of the asymptotic variance.

\textit{Approximate neighborhood interference.}\quad Since the data come from a single network, some form of restriction on network interference is needed to obtain a weakly dependent sample and thus to obtain asymptotic normality from a central limit theorem for weakly dependent data. Specifically, we establish weak dependence by limiting the strength with which the structural functions $(f_n, g_n, h_n)$ depend on distant variables. For any $S \subseteq \mathcal{N}_n$, let $\bD_S$ be the vector of treatments of the individuals in $S$, that is, $\bD_S \coloneq (D_i)_{i \in S}$, and define $\bM_S, \bX_S, \be_S$ analogously.
\begin{assumption}
\label{ch5:ass:approximate-neighborhood-interference}
There exists a sequence of functions $\{(\gamma_n)\}_{n \in \mathbb{N}}$ with $\gamma_n: \mathbb{R}_+ \to \mathbb{R}_+$ satisfying $\sup_{n \in \mathbb{N}} \gamma_n(r) \xrightarrow{r \to \infty} 0$, and for every $n \in \mathbb{N}$, there is a model using local information only, $
  f_{n,r}(i, \bD_{\mathcal{N}(i,r)}, \bM_{\mathcal{N}(i,r)}, \bX_{\mathcal{N}(i,r)}, \bA_{\mathcal{N}(i,r)},\be_{\mathcal{N}(i,r)})
$, such that
\begin{equation}
  \begin{aligned}
    \max_{i\in\mathcal{N}_n}\E[ & |f_n(i, \bD, \bM, \bX, \bA, \be) \\
  & - f_{n,r}(i, \bD_{\mathcal{N}(i,r)}, \bM_{\mathcal{N}(i,r)}, \bX_{\mathcal{N}(i,r)}, \bA_{\mathcal{N}(i,r)},\be_{\mathcal{N}(i,r)})| \mid \bD,\bM,\bX,\bA] \leqslant \gamma_n(r).
  \end{aligned} \label{ch5:eq:local-outcome-approximation-bound} 
\end{equation}
\end{assumption}
This is the approximate neighborhood interference assumption, introduced by \citet{leung2022causal} and used in \citet{leung2022graph,hoshino2024causal} among others; we employ its extension to the mediation setting. Its essential requirement is that when the true outcome model $f_n$ is approximated by the local $r$-neighborhood outcome model $f_{n,r}$, the approximation error is bounded by $\gamma_n(r)$, which decays as the neighborhood radius grows. The assumption formalizes a plausible network setting in which the outcome of individual $i$ is determined chiefly by the individuals in its local neighborhood, the influence on $i$ weakening the farther away an individual is. In such a setting it is natural to expect $f_{n,r}$ to approximate $f_n$ well once $r$ is large enough, so that the requirement $\sup_{n \in \mathbb{N}} \gamma_n(r) \xrightarrow{r \to \infty} 0$ is also reasonable.

\textit{Convergence rate conditions on nuisance estimation.}\quad Since the nuisance functions in the AIPW estimator are estimated by graph neural networks, requirements on the quality of these estimators are needed for the AIPW estimator to behave well. Specifically, we require the graph neural network estimators to attain the convergence rates demanded by the next assumption. Define $Z_i \coloneq (Y_i, D_i, T_i, M_i, S_i, \bX, \bA)$ and, for any $x \in \mathbb{R}$,
\begin{gather*}
  \begin{aligned}
    \Psi_{\pi^C}(Z_i, x, d, t, m, s) \coloneq & \1\{D_i = d, T_i = t, M_i = m, S_i = s\} \\
    & [Y_i - \mu_i^C(d, t, m, s)]\left[\frac{1}{x} - \frac{1}{\pi_i^C(d, t, m, s)}\right], 
  \end{aligned}\\
    \Psi_{\mu^C}(Z_i, x, d, t, m, s) \coloneq [x - \mu_i^C(d, t, m, s)] \left[1 - \frac{\mathbf{1}\{D_i = d, T_i = t, M_i = m, S_i = s\}}{\pi_i^C(d, t, m, s)}\right].
\end{gather*}
It is readily verified from these definitions that, for any $x$,
\[
\E[\Psi_{\pi^C}(Z_i, x, d, t, m, s) \mid \bX,\bA] = 0, \quad \E[\Psi_{\mu^C}(Z_i, x, d, t, m, s) \mid \bX,\bA] = 0.
\]
These identities are used in the proofs of the theorems and also underpin the plausibility of the stochastic equicontinuity part of the next assumption.
\begin{assumption}
\label{ch5:ass:own-cde-gnn-convergence-rates}
For any $(d,t,m,s)$, the following conditions hold.\\
(i) (Mean squared error)
\begin{gather*}
  \frac{1}{j_n} \sum_{i \in \mathcal{J}_n} [\widehat\pi_i^C(d, t, m, s) - \pi_i^C(d, t, m, s)]^2 = o_p(n^{-1/2}), \\
  \frac{1}{j_n} \sum_{i \in \mathcal{J}_n} [\widehat{\mu}_i^C(d, t, m, s) - \mu_i^C(d, t, m, s)]^2 = o_p(n^{-1/2}).
\end{gather*}
(ii) (Stochastic equicontinuity)
\begin{gather*}
\frac{1}{j_n} \sum_{i \in \mathcal{J}_n} \Psi_{\pi^C}(Z_i, \widehat\pi_i^C(d, t, m, s), d, t, m, s) = o_p(n^{-1/2}), \\
\frac{1}{j_n} \sum_{i \in \mathcal{J}_n} \Psi_{\mu^C}(Z_i, \widehat{\mu}_i^C(d, t, m, s), d, t, m, s) = o_p(n^{-1/2}).
\end{gather*}
\end{assumption}
Part (i) imposes an empirical $L^2$ rate of $o_p(n^{-1/4})$ on each nuisance estimator. For asymptotic normality alone, the componentwise conditions may be replaced by the weaker requirement that the product of the two mean squared errors be $o_p(n^{-1})$, thereby allowing asymmetric convergence rates. We retain the stronger componentwise conditions because they are used to build a consistent estimator for the asymptotic variance. These rate requirements coincide with the standard ones in the double machine learning literature \citep[e.g.,][]{chernozhukov2018double,chernozhukov2022locally}. Part (ii) is a stochastic equicontinuity condition, used to control the first-order remainder in the derivation of asymptotic normality.

Assumption \ref{ch5:ass:own-cde-gnn-convergence-rates} is a high-level condition on the fitted nuisance functions and does not rely on architecture-specific properties of the nuisance function learner. Accordingly, the asymptotic conclusions apply to the GNN implementation considered here, as well as to alternative nuisance learners, provided that these conditions are satisfied. Related results in \citet{leung2022graph} provide further support for the validity of these high-level conditions. In particular, under some regularity conditions, they show that the nuisance functions admit sufficiently accurate low-dimensional approximations, thereby facilitating verification of the MSE rate requirement for the first-stage estimation, and provide routes to establish stochastic equicontinuity under additional restrictions. 

\textit{Technical conditions for $\psi$-dependence.}\quad The approximate neighborhood interference assumption restricts the strength of interference in the network at the level of causal structure, which is weak dependence at the counterfactual level. To apply a central limit theorem we also need a weak dependence restriction at the level of the statistical distribution. This restriction is purely technical and serves mainly to establish that $\{\varphi^C_i\}_{i \in \mathcal{N}_n}$ is $\psi$-dependent \citep{kojevnikov2021limit}. Define the $r$-neighborhood boundary of $i$ and the $k$th moment of the population $r$-neighborhood boundary as
\[
\mathcal{N}^B(i, r) \coloneq \{j \in \mathcal{N}_n : \ell_{\bA}(i, j) = r\} \quad \text{and} \quad \delta_n^B(r, k) \coloneq \frac{1}{n} \sum_{i=1}^n |\mathcal{N}^B(i, r)|^k.
\]
We further define the quantities
\begin{gather}
  \Delta_n(r, m, k) \coloneq \frac{1}{n} \sum_{i=1}^n \max_{j \in \mathcal{N}^B(i, r)} |\mathcal{N}(i, m) \setminus \mathcal{N}(j, r-1)|^k, \notag \\
  c_n(r, m, k) \coloneq \inf_{\alpha > 1} \Delta_n(r, m, k\alpha)^{1/\alpha}\delta_n^B(r, \alpha / (\alpha - 1))^{1 - 1/\alpha}, \notag \\
  K_0 \coloneq \max\{K_T,K_S\}, \quad \psi_n(r) \coloneq \begin{cases}
    1, & 0 \leqslant r \leqslant 4K_0, \\
    \gamma_n(r/4), & r > 4K_0.
  \end{cases} \label{ch5:eq:own-cde-psi-dependence-coefficient}
\end{gather}
Here $\Delta_n(r, m, k)$ measures, on average, the largest expansion that the $m$-neighborhood of a node can produce relative to the $(s-1)$-neighborhood of any node on the boundary of its $s$-neighborhood, and $c_n(r, m, k)$ may be regarded as a measure of the density of the $r$-neighborhood. The quantity $\psi_n(r)$ measures network dependence: it bounds the covariance between $\varphi^C_i$ and $\varphi^C_j$ when $\ell_{\bA}(i,j) \leqslant r$.
\begin{assumption}
\label{ch5:ass:own-cde-weak-dependence}
(i) $\sup_{n \in \mathbb{N}} \max_{r \geqslant 1} \psi_n(r) < \infty$ almost surely. (ii) For the $p$ in Assumption \ref{ch5:ass:own-cde-moments-overlap}(i), for some sequence of positive numbers $v_n \to \infty$ and any $k \in \{1, 2\}$,
\begin{equation}
\frac{1}{n^{k/2}} \sum_{r=0}^{\infty}c_n(r, v_n, k) \psi_n(r)^{1 - (2+k)/p} = o(1) \quad \text{and} \quad n^{3/2} \psi_n(v_n)^{1 - 1/p} = o(1) \quad \text{a.s.}
\label{ch5:eq:network-clt-rate-condition}
\end{equation}
\end{assumption}
Part (i) is a technical condition requiring the $\psi$-dependence coefficients to be uniformly bounded. The essential restriction is part (ii), which corresponds to condition ND of \citet{kojevnikov2021limit}. Condition (\ref{ch5:eq:network-clt-rate-condition}) requires the covariance to decay with distance (as governed by $\psi_n(r)$) faster than network neighborhoods densify (as governed by $c_n(r, v_n, k)$), and is the key condition for the network central limit theorem. \citet{leung2022graph} verifies that (\ref{ch5:eq:network-clt-rate-condition}) holds when the neighborhood size grows with the neighborhood radius at a polynomial or an exponential rate.

\textit{Asymptotic normality.}\quad Under the assumptions above we obtain the asymptotic normality of the AIPW estimator of the OCDE.
\begin{theorem}
\label{ch5:thm:own-cde-asymptotic-normality}
Under Assumptions \ref{ch5:ass:consistency}, \ref{ch5:ass:error-term-independence}, \ref{ch5:ass:own-cde-conditional-independence} and \ref{ch5:ass:own-cde-local-exposure-mappings}--\ref{ch5:ass:own-cde-weak-dependence}, for any
$(t, m, s)$,
\[
\sigma_{n,C}^{-1} \sqrt{j_n} \big(\widehat{OCDE}(t, m, s) - OCDE(t, m, s)\big) \xrightarrow{d} \mathcal{N}(0, 1).
\]
\end{theorem}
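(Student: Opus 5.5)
The proof splits the estimation error into an oracle term and a remainder. By Theorem \ref{ch5:thm:own-cde-identification}, $OCDE(t,m,s)=OCDE^{obs}(t,m,s)$, so it suffices to work with $\mu_i^C$. For each $d\in\{0,1\}$ we write
\[
\widehat{\tau}_i^C(d,t,m,s)-\tau_i^C(d,t,m,s) = \Psi_{\mu^C}(Z_i,\widehat\mu_i^C,d,t,m,s)+\Psi_{\pi^C}(Z_i,\widehat\pi_i^C,d,t,m,s)+R_i(d),
\]
where the second-order remainder is
\[
R_i(d)=-\1\{D_i=d,T_i=t,M_i=m,S_i=s\}\,(\widehat\mu_i^C-\mu_i^C)\left(\frac{1}{\widehat\pi_i^C}-\frac{1}{\pi_i^C}\right).
\]
This identity follows by adding and subtracting terms in the definition of $\widehat\tau_i^C$. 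Averaging over $\mathcal{J}_n$ gives
\[
\sqrt{j_n}\big(\widehat{OCDE}-OCDE\big)=j_n^{-1/2}\sum_{i\in\mathcal{J}_n}\varphi_i^C(t,m,s)+\sqrt{j_n}\,(\text{remainder}).
\]
Since $j_n\asymp n$ by Assumption \ref{ch5:ass:own-cde-moments-overlap}(ii), Assumption \ref{ch5:ass:own-cde-gnn-convergence-rates}(ii) kills the two $\Psi$ averages. For the $R_i$ average, overlap bounds $|1/\widehat\pi-1/\pi|\le \underline\pi^{-2}|\widehat\pi-\pi|$, and Cauchy--Schwarz with the two MSE rates in Assumption \ref{ch5:ass:own-cde-gnn-convergence-rates}(i) gives $o_p(n^{-1/4})\cdot o_p(n^{-1/4})=o_p(n^{-1/2})$. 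Assumption \ref{ch5:ass:own-cde-moments-overlap}(iii) lets us divide by $\sigma_{n,C}$, so by Slutsky it remains to show
\[
\sigma_{n,C}^{-1}j_n^{-1/2}\sum_{i\in\mathcal{J}_n}\varphi_i^C\xrightarrow{d}\mathcal{N}(0,1)
\]
conditionally on $(\bX,\bA)$. Note that $\E[\varphi_i^C\mid\bX,\bA]$ need not vanish individually, but its average over $\mathcal{J}_n$ is zero, since $\E[\tau_i^C(d,\cdot)\mid\bX,\bA]=\mu_i^C(d,\cdot)$.

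The central step is to verify that $\{\varphi_i^C\}$ is conditionally $\psi$-dependent in the sense of \citet{kojevnikov2021limit}, with coefficient $\psi_n$ from (\ref{ch5:eq:own-cde-psi-dependence-coefficient}), and then to apply their Theorem 3.2. The variance condition is Assumption \ref{ch5:ass:own-cde-moments-overlap}(iii), the moment condition follows from the $p>4$ bound together with overlap, and condition ND is Assumption \ref{ch5:ass:own-cde-weak-dependence}.

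To establish $\psi$-dependence, conditional on $(\bX,\bA)$ all randomness comes from the independent errors of Assumption \ref{ch5:ass:error-term-independence}. Here $D_i$ and $M_i$ depend only on $(\nu_i,\omega_i)$; $T_i,S_i$ depend on errors in $\mathcal{N}(i,K_0)$ by Assumption \ref{ch5:ass:own-cde-local-exposure-mappings}; and $\mu_i^C,\pi_i^C$ are fixed given $(\bX,\bA)$. Only $Y_i$ is nonlocal. For a radius $r$, define $\varphi_i^{(r)}$ by replacing $Y_i$ with the local approximation $f_{n,r}$ of Assumption \ref{ch5:ass:approximate-neighborhood-interference}, evaluated at $\bD_{\mathcal{N}(i,r)},\bM_{\mathcal{N}(i,r)}$, which in turn depend only on errors in $\mathcal{N}(i,r)$. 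Consider sets $H,H'$ at distance $s>4K_0$ and take $r=s/4$. The local versions on $H$ and $H'$ are then functions of disjoint blocks of independent errors, so their covariance is zero. The covariance of functionals of $\varphi$ is therefore bounded by the approximation error.

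Bounding that approximation error is the main obstacle. Assumption \ref{ch5:ass:approximate-neighborhood-interference} controls only the conditional $L^1$ error $\gamma_n(r)$. For bounded Lipschitz functionals of a finite collection, we combine this $L^1$ bound with Lipschitz continuity and the cardinalities $|H|,|H'|$, producing the $(|H|+|H'|)\,\gamma_n(s/4)$-type bound required by Definition 2.2 of \citet{kojevnikov2021limit}. For $s\le 4K_0$ we take $\psi_n=1$, using the uniform moment bound. The bounded-Lipschitz test-function formulation is what lets us convert an $L^1$ approximation into a covariance bound. The factor $1-(2+k)/p$ in Assumption \ref{ch5:ass:own-cde-weak-dependence} arises when truncating the unbounded $\varphi_i$ using the $p$th moment, exactly as in Leung's Theorem 3 and in \citet{leung2022graph}, and we follow that truncation argument.

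The CLT then yields the limit for the conditionally centered sum. Because the centering constants average to zero, this delivers the stated result.
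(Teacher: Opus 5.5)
Your proposal is correct and follows essentially the same route as the paper. It uses the same split into the oracle sum, the two $\Psi$ stochastic-equicontinuity averages, and the product remainder $-I_i(d)(\widehat\mu_i^C-\mu_i^C)(1/\widehat\pi_i^C-1/\pi_i^C)$, which is the paper's $R_{22,n}$. It also uses the same $\psi$-dependence lemma: a radius-$r/4$ local approximation, conditional independence of disjoint error blocks, and the bounded-Lipschitz conversion of the $L^1$ approximate-neighborhood-interference bound, followed by Theorem 3.2 of \citet{kojevnikov2021limit}.

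Two steps are described slightly off, but neither is a gap. First, no separate truncation argument is needed: Theorem 3.2 already treats unbounded arrays with $p$th moments, and that is where the exponent $1-(2+k)/p$ comes from. Second, for $r\le 4K_0$ the covariance bound follows from $\|f\|_\infty$ and $\|f'\|_\infty$ alone, not from a moment bound.
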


\textit{Variance estimator.}\quad Now we construct a consistent estimator of the asymptotic variance $\sigma_{n,C}^2$. Define the bandwidth
\begin{equation}
b_n \coloneq \lceil \widetilde{b}_n \rceil, \quad
\widetilde{b}_n \coloneq \begin{cases}
\dfrac{1}{4} \mathcal{L}(\bA) & \text{if } \mathcal{L}(\bA) < 2\dfrac{\log n}{\log \delta(\bA)}, \\
\mathcal{L}(\bA)^{1/4} & \text{otherwise},
\end{cases}
\label{ch5:eq:network-hac-bandwidth}
\end{equation}
where $\delta(\bA) \coloneq n^{-1} \sum_{i=1}^n n(i, 1)$ is the average degree and $\mathcal{L}(\bA)$ is the average path length on the largest connected subgraph, that is, the average distance over all pairs of individuals in that subgraph. Define the $j_n \times j_n$ kernel matrices $K_U$ and $K_{PD}$ with $(i, j)$ entries $\mathbf{1}\{\ell_{\bA}(i, j) \leqslant b_n\}$ and
\begin{equation}
\frac{|\mathcal{N}(i, b_n / 2) \cap \mathcal{N}(j, b_n / 2)|}{|\mathcal{N}(i, b_n / 2)|^{1/2}\, |\mathcal{N}(j, b_n / 2)|^{1/2}},
\label{ch5:eq:positive-definite-hac-kernel}
\end{equation}
respectively, where $i, j \in \mathcal{J}_n$. Define the residual vector
\begin{equation*}
\widetilde{\tau} \coloneq \big( \widehat{\tau}_i^C(1, t, m, s) - \widehat{\tau}_i^C(0, t, m, s) - \widehat{\mu}_i^C(1, t, m, s) + \widehat{\mu}_i^C(0, t, m, s) \big)_{i \in \mathcal{J}_n},
\end{equation*}
and the HAC variance estimator
\begin{equation}
\widehat{\sigma}_{C}^2 \coloneq \max\{\widehat{\sigma}_U^2, \widehat{\sigma}_{PD}^2\}, \quad \text{where} \quad \widehat{\sigma}_q^2 \coloneq \frac{1}{j_n}\, \widetilde{\tau}' K_q\, \widetilde{\tau}, \quad q \in \{U, PD\}.
\label{ch5:eq:own-cde-hac-variance-estimator}
\end{equation}
Our choice of bandwidth follows \citet{leung2022graph} and is similar to that of \citet{leung2022causal}. The bandwidth adapts to the size and density of the network, and the constant in it takes account of the estimation error of the graph neural network estimators. Here $\widehat{\sigma}_U^2$ is the network HAC estimator based on the uniform kernel $K_U$, the variance estimator developed by \citet{kojevnikov2021limit} to accompany their network central limit theorem, while $\widehat{\sigma}_{PD}^2$ is the network HAC estimator based on the positive definite kernel $K_{PD}$, which avoids the possibility that $\widehat{\sigma}_U^2$ takes a negative value. Taking the larger of $\widehat{\sigma}_U^2$ and $\widehat{\sigma}_{PD}^2$ yields a somewhat conservative network HAC variance estimator that is necessarily positive.

We impose the following assumption on the network in order to obtain a consistent estimator of the asymptotic variance.
\begin{assumption}
\label{ch5:ass:own-cde-hac-consistency}
(i) There exists $M > 0$ such that, for all $n \in \mathbb{N}$, $i \in N_n$ and any $(d, t, m, s)$, $\max\{|Y_i|, |\widehat{\mu}_i^C(d, t, m, s)|\} < M$ almost surely.\\
(ii) For the $p$ of Assumption \ref{ch5:ass:own-cde-moments-overlap}(i) and some positive sequence $b_n \to \infty$, \\$\displaystyle n^{-1} \sum_{r=0}^{\infty} c_n(r, b_n, 2)\, \psi_n(r)^{1 - 4/p} = o(1)$, $\sum_{r>b_n}\delta_n^B(r,1)\psi_n(r)^{1-2/p}=o(1)$ and \\$\sum_{r=0}^{n}\psi_n(r)^{1-2/p} n^{-1}\sum_{i=1}^n\sum_{j\in\mathcal{N}^B(i,r)}|K_{PD,ij}-1| = o(1)$ almost surely.\\
(iii) $\displaystyle n^{-1} \sum_{i=1}^n n(i, b_n)^2 = O_p(\sqrt{n})$.
\end{assumption}
Part (i) is a uniform boundedness condition requiring the outcome and the graph neural network estimators to be bounded, which is stronger than the existence of the $p$th moment in Assumption \ref{ch5:ass:own-cde-moments-overlap}. Part (ii) is essentially a balance condition among the bandwidth, the density of the network and the rate at which dependence decays: at the chosen bandwidth $b_n$, the local correlation retained by the kernel matrix must not accumulate into nonvanishing random error through excessively fast neighborhood growth, while the long-range correlation truncated or understated by the kernel matrix must be negligible in the aggregate. Its first two conditions are taken from Assumption 4.1 of \citet{kojevnikov2021limit} and are used to prove the consistency of the uniform kernel HAC estimator; its third condition is used to prove the consistency of the positive definite kernel HAC estimator. Part (iii) restricts the local density of the network: by limiting the growth of the $b_n$-neighborhood size in the mean square sense, it ensures that when the true residuals are replaced by the estimated ones, the estimation error of the nuisance functions is not amplified excessively by an overly large neighborhood and thereby destroys consistency. \citet{leung2022graph} shows that the conditions in parts (ii) and (iii) hold when the neighborhood size grows with the neighborhood radius at a polynomial or an exponential rate.
\begin{theorem}
\label{ch5:thm:own-cde-variance-consistency}
Under Assumption \ref{ch5:ass:own-cde-hac-consistency} and the assumptions of Theorem \ref{ch5:thm:own-cde-asymptotic-normality}, $\widehat{\sigma}_{C}^2 = \sigma_{n,C}^2 + o_p(1)$ for any fixed $(t,m,s)$.
\end{theorem}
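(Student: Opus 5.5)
The plan is the standard three-step HAC argument in the style of \citet{kojevnikov2021limit} and \citet{leung2022graph}.

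First, write $\widetilde\tau_i = \varphi_i^C + OCDE(t,m,s) + R_i$, where $\varphi_i^C$ is the oracle centered score. It is cleanest to work with the oracle residual vector
\[
\tau^\circ_i \coloneq \tau_i^C(1,t,m,s)-\tau_i^C(0,t,m,s)-\mu_i^C(1,t,m,s)+\mu_i^C(0,t,m,s),
\]
which is mean zero conditional on $\bX,\bA$. It differs from $\varphi_i^C$ only by the nonrandom term $\mu_i^C(1,\cdot)-\mu_i^C(0,\cdot)-OCDE$. The target then splits as
\[
\sigma_{n,C}^2=\tfrac{1}{j_n}\textstyle\sum_{i,j}\Cov(\varphi_i,\varphi_j\mid\bX,\bA).
\]
Here the covariance of $\varphi_i$ equals that of $\tau^\circ_i$, because the nonrandom part does not enter covariances.

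Define the infeasible estimator $\sigma^{\circ 2}_q=j_n^{-1}\tau^{\circ\prime}K_q\tau^\circ$. The proof has three pieces.
\begin{itemize}
\item[(a)] $\sigma^{\circ2}_U-\E[\sigma^{\circ2}_U\mid\bX,\bA]=o_p(1)$.
\item[(b)] $\E[\sigma^{\circ2}_q\mid\bX,\bA]-\sigma_{n,C}^2=o(1)$ for $q\in\{U,PD\}$.
\item[(c)] $\widehat\sigma_q^2-\sigma^{\circ2}_q=o_p(1)$.
\end{itemize}
The conclusion then follows because $\max$ is continuous.

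For (a) and (b) with the uniform kernel, I invoke the $\psi$-dependence of $\{\varphi_i^C\}$ with coefficient $\psi_n$ that was established in the proof of Theorem \ref{ch5:thm:own-cde-asymptotic-normality}. That proof used Assumptions \ref{ch5:ass:own-cde-local-exposure-mappings}, \ref{ch5:ass:approximate-neighborhood-interference} and \ref{ch5:ass:error-term-independence}. Proposition 4.1 of \citet{kojevnikov2021limit} then applies directly. Its hypotheses are the first two conditions of Assumption \ref{ch5:ass:own-cde-hac-consistency}(ii) together with the $p$-th moment bound in Assumption \ref{ch5:ass:own-cde-moments-overlap}(i), which transfers to $\varphi_i$ via overlap. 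The variance of $\sigma^{\circ2}_U$ is controlled by $n^{-1}\sum_r c_n(r,b_n,2)\psi_n(r)^{1-4/p}$. The bias is controlled by the omitted covariances beyond $b_n$, namely $\sum_{r>b_n}\delta^B_n(r,1)\psi_n(r)^{1-2/p}$.

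For $K_{PD}$, I write $\sigma^{\circ2}_{PD}-\sigma^{\circ2}_U=j_n^{-1}\sum_{i,j}(K_{PD,ij}-K_{U,ij})\tau^\circ_i\tau^\circ_j$. Note that $K_{PD,ij}=0$ when $\ell_{\bA}(i,j)>b_n$, since then the two half-balls are disjoint. So only pairs within distance $b_n$ contribute, with weight $K_{PD,ij}-1$. Bounding the expectation of each such term with the covariance inequality $|\Cov|\lesssim\psi_n(r)^{1-2/p}$ gives the third condition of (ii). For the fluctuation, I would reuse the same variance bound as in (a), since $|K_{PD,ij}|\le1$ and both kernels vanish beyond distance $b_n$.

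Step (c) is the main obstacle. Write $\widetilde\tau_i=\tau^\circ_i+\Delta_i$, where $\Delta_i$ collects the nuisance errors. Explicitly,
\[
\Delta_i=\sum_{d}\pm\Big[\tfrac{\1_i(d)}{\widehat\pi_i}(Y_i-\widehat\mu_i)-\tfrac{\1_i(d)}{\pi_i}(Y_i-\mu_i)\Big].
\]
Using the boundedness in Assumption \ref{ch5:ass:own-cde-hac-consistency}(i) and the overlap in Assumption \ref{ch5:ass:own-cde-moments-overlap}(ii), we get $|\Delta_i|\lesssim|\widehat\pi_i-\pi_i|+|\widehat\mu_i-\mu_i|$ for both values of $d$. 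Assumption \ref{ch5:ass:own-cde-gnn-convergence-rates}(i) then gives $j_n^{-1}\sum_i\Delta_i^2=o_p(n^{-1/2})$.

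The difference $\widehat\sigma_q^2-\sigma^{\circ2}_q$ equals $j_n^{-1}(2\tau^{\circ\prime}K_q\Delta+\Delta'K_q\Delta)$. Since $|K_{q,ij}|\le\1\{\ell_{\bA}(i,j)\le b_n\}$, Cauchy–Schwarz gives
\[
\Big|j_n^{-1}\textstyle\sum_{i,j}K_{q,ij}a_ib_j\Big|\le j_n^{-1}\big(\sum_i a_i^2\big)^{1/2}\big(\sum_i (\sum_{j\in\mathcal N(i,b_n)}|b_j|)^2\big)^{1/2}.
\]
The inner term is at most $(\sum_j b_j^2\, n(j,b_n)^2 \cdot$ const$)^{1/2}$ after exchanging sums. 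A second Cauchy–Schwarz against $n^{-1}\sum_i n(i,b_n)^2=O_p(\sqrt n)$ (Assumption \ref{ch5:ass:own-cde-hac-consistency}(iii)) brings in $n^{-1/4}$. That factor exactly offsets the $o_p(n^{-1/4})$ root-MSE rate, so both cross terms are $o_p(1)$. The delicate point is arranging these inequalities so that the neighborhood-size factor $\sqrt n$ meets the MSE rate $o_p(n^{-1/2})$ without wasted powers. If the direct bound fails, I would first try bounding $\max_i|\Delta_i|$ by $2M$ for one factor and using boundedness of $\tau^\circ_i$.
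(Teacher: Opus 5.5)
Your plan is essentially the paper's proof: it also uses the conditionally mean-zero oracle residuals $r_i$ (your $\tau^\circ_i$), gets the uniform-kernel oracle estimator from Proposition 4.1 of \citet{kojevnikov2021limit} after rescaling to $u_i=\sqrt{n/j_n}\,\1\{i\in\mathcal{J}_n\}r_i$ to handle $\mathcal{J}_n$, treats $K_{PD}$ with the same fluctuation bound plus the third condition of Assumption~\ref{ch5:ass:own-cde-hac-consistency}(ii) for the bias, and compares feasible and oracle forms by Cauchy--Schwarz together with $n^{-1}\sum_i n(i,b_n)^2=O_p(\sqrt n)$. The one thing to fix is that your ``direct'' bound in (c) is not valid as written: exchanging sums gives $\sum_j b_j^2\sum_{i\in\mathcal{N}(j,b_n)}n(i,b_n)$, which is not bounded by a constant times $\sum_j b_j^2\,n(j,b_n)^2$, and a further Cauchy--Schwarz would need fourth moments of neighborhood sizes. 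Your fallback is therefore the argument actually needed, and it is exactly the paper's: keep the $\ell^2$-small factor ($\widehat r-r$, your $\Delta$) outside, bound the inner factor by $\max_j|\widehat r_j+r_j|=O_p(1)$ using Assumption~\ref{ch5:ass:own-cde-hac-consistency}(i), and obtain $o_p(n^{-1/4})\cdot O_p(n^{1/4})\cdot O_p(1)=o_p(1)$.
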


The asymptotic theory for the own natural direct and indirect effects can be analogously derived. For simplicity, we relegate all the details to Appendix A. 

\section{Simulation Study}
\label{ch5:sec:simulation-study}
We design a series of simulation experiments to compare the estimation method proposed here with conventional methods in a variety of settings. Our aim is to show that, in the presence of complex network spillover, our method outperforms conventional machine learning methods in estimating the various effects across a range of settings, in terms of bias, standard deviation and confidence interval coverage. 

\subsection{Simulation Design}
We report results for $n=1000$, $n=2500$ and $n=4000$, and consider two mechanisms for generating the adjacency matrix $\bA$: the random geometric graph model, which sets $A_{ij} = \1\{\|\rho_i - \rho_j\| \leqslant r_n\}, i \neq j$, where $\{\rho_i\}_{i\in\mathcal{N}_n} \stackrel{i.i.d.}{\sim} Unif([0,1]^2)$ and $r_n = (8/(\pi n))^{1/2}$; and the Erd\H{o}s--R\'enyi model, which sets $A_{ij} = A_{ji} \stackrel{i.i.d.}{\sim} Bern(8/n), i < j$. Both have limiting average degree $8$, but the average path length grows polynomially in $n$ under the first model and is of the order of $\log(n)$ under the second; at $n=2500$ it is about $23.8$ and about $4.0$, respectively.

The covariates are drawn independently as $X_i = (X_{i1},X_{i2},\allowbreak X_{i3})' \stackrel{i.i.d.}{\sim} \mathcal{N}(\mathbf{0},\mathbf{I}_3)$ and the error terms as $\{\varepsilon_i\}_{i\in\mathcal{N}_n}$, $\{\nu_i\}_{i\in\mathcal{N}_n}$ and $\{\omega_i\}_{i\in\mathcal{N}_n} \stackrel{i.i.d.}{\sim} \mathcal{N}(0,1)$, all of these variables being mutually independent.

For $X_i$ and $W_i \in \{D_i, M_i, \varepsilon_i\}$ we define the neighborhood averages
\[
\overline{X}_i \coloneq \begin{cases}
  \dfrac{\sum_{j\in\mathcal{N}(i,1)} X_j}{n(i,1)}, & n(i,1)>0, \\
  (0,0,0)' & n(i,1)=0,
\end{cases}
\quad \text{and} \quad 
\overline{W}_i \coloneq \begin{cases}
  \dfrac{\sum_{j\in\mathcal{N}(i,2)} W_j}{n(i,2)}, & n(i,2)>0, \\
  0 & n(i,2)=0.
\end{cases}
\]
and then define the second-order nonlinear network confounder at node $i$,
\[
Z_i \coloneq \frac{1}{n(i,1)}\sum_{j\in\mathcal{N}(i,1)} \tanh\big((2, 2, 2)\overline{X}_j \big).
\]
The treatment, the mediator and the outcome are generated by the following mechanism, which contains complex nonlinear network spillover:
\begin{gather*}
  D_i = \1 \big\{(0.3, 0.3, 0.3)X_i + (0.3, 0.3, 0.3)\overline{X}_i + Z_i + \nu_i > 0\big\}, \\
  M_i = \1 \big\{-0.5 + D_i + (0.3, 0.3, 0.3)X_i + (0.3, 0.3, 0.3)\overline{X}_i + Z_i + \omega_i > 0\big\}, \\
  Y_i = 1 + D_i + \overline{D}_i + M_i + \overline{M}_i + (0.3, 0.3, 0.3)X_i + (0.3, 0.3, 0.3)\overline{X}_i + Z_i + \varepsilon_i + \overline{\varepsilon}_i.
\end{gather*}

The exposure mappings used to define the causal estimands are
\[
T_i = \1 \bigg\{\frac{\sum_{j=1}^n A_{ij}D_j}{\sum_{j=1}^n A_{ij}} \geqslant \frac{1}{2}\bigg\} \ \text{and} \ S_i = \1 \bigg\{\frac{\sum_{j=1}^n A_{ij}M_j}{\sum_{j=1}^n A_{ij}} \geqslant \frac{1}{2}\bigg\},
\]
with $T_i$ and $S_i$ set to $0$ when $\sum_{j=1}^n A_{ij} = 0$.
The subpopulation used for comparison is $\mathcal{J}_n = \{i: 2\leqslant n(i,1) \leqslant 25\}$, the set of individuals with degree between $2$ and $25$.
We estimate the own controlled direct effect $OCDE(0,0,0)$ and the own natural direct effect $ONDE(0)$, whose true values are both equal to $1$ under this data generating mechanism.

For all nuisance functions we consider three estimation methods. The first is the graph neural network recommended here, built with the PNA architecture and the loss functions in Section \ref{ch5:subsec:gnn-nuisance-estimation} through the \texttt{PNAConv} class of the PyTorch Geometric package, with hidden dimension $H=16$, $\phi_l(\cdot)$ a linear layer without activation, $\Phi_{0l}(\cdot)$ a linear layer with ReLU activation, and $\Phi_O(\cdot)$ a two-layer perceptron, linear with ReLU activation followed by linear without activation. The optimizer is Adam with learning rate $0.01$, the number of training epochs is $200$, the weight decay is $10^{-4}$, and the propensity scores are clipped from below at $0.02$. Following the discussion in \citet{leung2022graph}, the number of hidden layers should not be too large, and we report the three small values $L=1,2,3$; since an individual is affected at most by its second-order neighbors here, these correspond to a range of information aggregation that is too small, correctly specified and too large, respectively.

The second and third methods are a multilayer perceptron with the same number of layers and the same hidden dimension as the graph neural network, and a random forest with $500$ trees and a minimum leaf size of $10$. Both use as features an individual's own covariates and the average covariates of first-order neighbors. As conventional machine learning methods, they cannot handle spillover well, and we use them as baselines against which to assess how effectively the graph neural network addresses this problem.

\subsection{Simulation Results}
Tables \ref{ch5:tab:random-geometric-main-simulation} and \ref{ch5:tab:erdos-renyi-main-simulation} summarize the results of $1000$ replications under the random geometric graph and the Erd\H{o}s--R\'enyi model, respectively. The row Est reports the average of the $1000$ point estimates; the row SE reports the average of the $1000$ network HAC standard errors; the row CI reports the coverage of the $95\%$ confidence intervals constructed from our network HAC variance estimator; the row OSE (oracle SE) reports the standard deviation of the $1000$ point estimates; and the row OCI (oracle CI) reports the coverage of the $95\%$ confidence intervals constructed from the true standard deviation in the OSE row. We note that the random forest does not require a number of hidden layers to be specified; for convenience its results are reported in the $L=1$ column, which is not to be read as implying that the random forest has one hidden layer.
\begin{table}[t!]
\small
\renewcommand{\arraystretch}{0.6}
\setlength{\tabcolsep}{3pt}
\centering
\caption{Simulation results: random geometric graph}
\label{ch5:tab:random-geometric-main-simulation}
\resizebox{\textwidth}{!}{
\begin{tabular}{lllccccccccc}
\toprule
& & & \multicolumn{3}{c}{$L=1$} & \multicolumn{3}{c}{$L=2$} & \multicolumn{3}{c}{$L=3$} \\
\cmidrule(lr){4-6} \cmidrule(lr){7-9} \cmidrule(lr){10-12}
Estimand & & $n$ & 1000 & 2500 & 4000 & 1000 & 2500 & 4000 & 1000 & 2500 & 4000 \\
\midrule

\multirow{15}{*}{OCDE} & \multirow{5}{*}{GNN} & Est & 1.0743 & 1.0970 & 1.1012 & 0.9677 & 0.9900 & 0.9951 & 0.9529 & 0.9839 & 0.9947 \\
 &  & OCI & 0.9300 & 0.8800 & 0.8300 & 0.9330 & 0.9450 & 0.9530 & 0.9380 & 0.9360 & 0.9510 \\
 &  & OSE & 0.1817 & 0.1329 & 0.1032 & 0.1365 & 0.0993 & 0.0853 & 0.1507 & 0.1072 & 0.0869 \\
 &  & CI & 0.8770 & 0.8330 & 0.7970 & 0.8040 & 0.9220 & 0.9310 & 0.7940 & 0.8970 & 0.9260 \\
 &  & SE & 0.1552 & 0.1201 & 0.0978 & 0.0981 & 0.0908 & 0.0776 & 0.1049 & 0.0895 & 0.0774 \\
\cmidrule(lr){2-12}
 & \multirow{5}{*}{MLP} & Est & 1.0943 & 1.1149 & 1.1152 & 1.0901 & 1.1107 & 1.1122 & 1.0884 & 1.1127 & 1.1104 \\
 &  & OCI & 0.9180 & 0.8640 & 0.8120 & 0.9310 & 0.8800 & 0.8230 & 0.9510 & 0.8660 & 0.8410 \\
 &  & OSE & 0.1951 & 0.1355 & 0.1054 & 0.2090 & 0.1392 & 0.1101 & 0.2104 & 0.1363 & 0.1113 \\
 &  & CI & 0.9120 & 0.8080 & 0.7720 & 0.9110 & 0.8310 & 0.7710 & 0.9150 & 0.8310 & 0.7880 \\
 &  & SE & 0.1868 & 0.1240 & 0.1002 & 0.1941 & 0.1285 & 0.1023 & 0.1960 & 0.1285 & 0.1029 \\
\cmidrule(lr){2-12}
 & \multirow{5}{*}{RF} & Est & 1.2200 & 1.2170 & 1.1975 &  &  &  &  &  &  \\
 &  & OCI & 0.8860 & 0.7290 & 0.5900 &  &  &  &  &  &  \\
 &  & OSE & 0.2903 & 0.1573 & 0.1168 &  &  &  &  &  &  \\
 &  & CI & 0.2890 & 0.1950 & 0.1310 &  &  &  &  &  &  \\
 &  & SE & 0.0778 & 0.0468 & 0.0362 &  &  &  &  &  &  \\
\midrule

\multirow{15}{*}{ONDE} & \multirow{5}{*}{GNN} & Est & 1.0970 & 1.1076 & 1.1074 & 0.9665 & 0.9894 & 0.9969 & 0.9634 & 0.9909 & 0.9956 \\
 &  & OCI & 0.9260 & 0.8820 & 0.8250 & 0.9480 & 0.9490 & 0.9560 & 0.9530 & 0.9450 & 0.9510 \\
 &  & OSE & 0.2072 & 0.1343 & 0.1023 & 0.1605 & 0.1112 & 0.0888 & 0.1796 & 0.1167 & 0.0887 \\
 &  & CI & 0.8750 & 0.8600 & 0.8020 & 0.8330 & 0.9270 & 0.9490 & 0.8390 & 0.9230 & 0.9410 \\
 &  & SE & 0.1769 & 0.1293 & 0.1040 & 0.1151 & 0.1005 & 0.0849 & 0.1346 & 0.1024 & 0.0837 \\
\cmidrule(lr){2-12}
 & \multirow{5}{*}{MLP} & Est & 1.1354 & 1.1201 & 1.1187 & 1.1333 & 1.1185 & 1.1180 & 1.1226 & 1.1148 & 1.1182 \\
 &  & OCI & 0.8830 & 0.8650 & 0.8000 & 0.8940 & 0.8630 & 0.8040 & 0.9050 & 0.8770 & 0.8040 \\
 &  & OSE & 0.1841 & 0.1367 & 0.1036 & 0.1963 & 0.1371 & 0.1055 & 0.2058 & 0.1412 & 0.1068 \\
 &  & CI & 0.8550 & 0.8140 & 0.7860 & 0.8840 & 0.8220 & 0.7800 & 0.8870 & 0.8400 & 0.7750 \\
 &  & SE & 0.1747 & 0.1281 & 0.1055 & 0.1911 & 0.1333 & 0.1069 & 0.1990 & 0.1342 & 0.1074 \\
\cmidrule(lr){2-12}
 & \multirow{5}{*}{RF} & Est & 1.3114 & 1.2929 & 1.2726 &  &  &  &  &  &  \\
 &  & OCI & 0.7090 & 0.3540 & 0.1400 &  &  &  &  &  &  \\
 &  & OSE & 0.2184 & 0.1266 & 0.0901 &  &  &  &  &  &  \\
 &  & CI & 0.2110 & 0.0700 & 0.0180 &  &  &  &  &  &  \\
 &  & SE & 0.0794 & 0.0492 & 0.0385 &  &  &  &  &  &  \\
\bottomrule
\end{tabular}
}
\end{table}

\begin{table}[t!]
\small
\renewcommand{\arraystretch}{0.6}
\setlength{\tabcolsep}{3pt}
\centering
\caption{Simulation results: Erd\H{o}s--R\'enyi graph}
\label{ch5:tab:erdos-renyi-main-simulation}
\resizebox{\textwidth}{!}{
\begin{tabular}{lllccccccccc}
\toprule
& & & \multicolumn{3}{c}{$L=1$} & \multicolumn{3}{c}{$L=2$} & \multicolumn{3}{c}{$L=3$} \\
\cmidrule(lr){4-6} \cmidrule(lr){7-9} \cmidrule(lr){10-12}
Estimand & & $n$ & 1000 & 2500 & 4000 & 1000 & 2500 & 4000 & 1000 & 2500 & 4000 \\
\midrule

\multirow{15}{*}{OCDE} & \multirow{5}{*}{GNN} & Est & 1.0451 & 1.0615 & 1.0611 & 0.9821 & 0.9980 & 0.9935 & 0.9651 & 0.9879 & 0.9927 \\
 &  & OCI & 0.9430 & 0.9180 & 0.9080 & 0.9530 & 0.9570 & 0.9490 & 0.9360 & 0.9530 & 0.9510 \\
 &  & OSE & 0.1766 & 0.1262 & 0.1055 & 0.1534 & 0.1121 & 0.0938 & 0.1693 & 0.1179 & 0.0987 \\
 &  & CI & 0.9190 & 0.9220 & 0.9020 & 0.9000 & 0.9530 & 0.9470 & 0.8830 & 0.9470 & 0.9460 \\
 &  & SE & 0.1643 & 0.1257 & 0.1039 & 0.1284 & 0.1098 & 0.0932 & 0.1338 & 0.1115 & 0.0946 \\
\cmidrule(lr){2-12}
 & \multirow{5}{*}{MLP} & Est & 1.0687 & 1.0717 & 1.0684 & 1.0588 & 1.0664 & 1.0641 & 1.0563 & 1.0655 & 1.0638 \\
 &  & OCI & 0.9410 & 0.9140 & 0.9020 & 0.9460 & 0.9290 & 0.9080 & 0.9410 & 0.9170 & 0.9100 \\
 &  & OSE & 0.2059 & 0.1333 & 0.1082 & 0.2126 & 0.1379 & 0.1098 & 0.2086 & 0.1377 & 0.1084 \\
 &  & CI & 0.9290 & 0.9130 & 0.9000 & 0.9290 & 0.9120 & 0.9100 & 0.9440 & 0.9110 & 0.9000 \\
 &  & SE & 0.2007 & 0.1318 & 0.1057 & 0.2043 & 0.1335 & 0.1071 & 0.2027 & 0.1341 & 0.1074 \\
\cmidrule(lr){2-12}
 & \multirow{5}{*}{RF} & Est & 1.1698 & 1.1523 & 1.1391 &  &  &  &  &  &  \\
 &  & OCI & 0.8310 & 0.6840 & 0.6350 &  &  &  &  &  &  \\
 &  & OSE & 0.1690 & 0.1012 & 0.0853 &  &  &  &  &  &  \\
 &  & CI & 0.4060 & 0.2410 & 0.1850 &  &  &  &  &  &  \\
 &  & SE & 0.0733 & 0.0442 & 0.0349 &  &  &  &  &  &  \\
\midrule

\multirow{15}{*}{ONDE} & \multirow{5}{*}{GNN} & Est & 1.0596 & 1.0668 & 1.0706 & 0.9781 & 0.9962 & 1.0005 & 0.9681 & 0.9919 & 0.9947 \\
 &  & OCI & 0.9310 & 0.9190 & 0.8670 & 0.9510 & 0.9480 & 0.9560 & 0.9500 & 0.9420 & 0.9530 \\
 &  & OSE & 0.1804 & 0.1135 & 0.0852 & 0.1638 & 0.1109 & 0.0863 & 0.1855 & 0.1139 & 0.0839 \\
 &  & CI & 0.9140 & 0.8890 & 0.8680 & 0.9160 & 0.9380 & 0.9510 & 0.8960 & 0.9330 & 0.9420 \\
 &  & SE & 0.1649 & 0.1093 & 0.0865 & 0.1318 & 0.1043 & 0.0834 & 0.1484 & 0.1049 & 0.0844 \\
\cmidrule(lr){2-12}
 & \multirow{5}{*}{MLP} & Est & 1.0681 & 1.0721 & 1.0731 & 1.0689 & 1.0740 & 1.0732 & 1.0668 & 1.0707 & 1.0720 \\
 &  & OCI & 0.9240 & 0.8880 & 0.8650 & 0.9350 & 0.8840 & 0.8490 & 0.9330 & 0.9020 & 0.8610 \\
 &  & OSE & 0.1482 & 0.1000 & 0.0813 & 0.1588 & 0.1025 & 0.0826 & 0.1538 & 0.1024 & 0.0813 \\
 &  & CI & 0.9190 & 0.8840 & 0.8440 & 0.9210 & 0.8850 & 0.8370 & 0.9310 & 0.8840 & 0.8500 \\
 &  & SE & 0.1454 & 0.0989 & 0.0808 & 0.1508 & 0.1014 & 0.0823 & 0.1542 & 0.1012 & 0.0813 \\
\cmidrule(lr){2-12}
 & \multirow{5}{*}{RF} & Est & 1.2023 & 1.1698 & 1.1562 &  &  &  &  &  &  \\
 &  & OCI & 0.5730 & 0.3230 & 0.2420 &  &  &  &  &  &  \\
 &  & OSE & 0.1124 & 0.0708 & 0.0586 &  &  &  &  &  &  \\
 &  & CI & 0.2660 & 0.1100 & 0.0590 &  &  &  &  &  &  \\
 &  & SE & 0.0683 & 0.0419 & 0.0333 &  &  &  &  &  &  \\
\bottomrule
\end{tabular}
}
\end{table}

We begin with the finite-sample behavior of the GNN estimator, which Tables \ref{ch5:tab:random-geometric-main-simulation} and \ref{ch5:tab:erdos-renyi-main-simulation} show to be closely tied to its range of information aggregation. With $L=2$ the GNN aggregates information up to second-order neighbors, matching the data generating mechanism of this section, and its point estimates approach the true value rapidly in every setting: the downward bias visible at $n=1000$ diminishes markedly at $n=2500$ and $n=4000$, while the OSE decreases steadily with the sample size. Increasing the number of layers to $L=3$ introduces no appreciable additional bias and performs similarly, so that moderately widening the aggregation range is fairly robust. The OSE at $L=3$ is nevertheless slightly higher than at $L=2$ in most settings, most visibly at $n=1000$, which shows that aggregating beyond the actual range of dependence does not improve efficiency further and may instead induce extra finite-sample variability. By contrast, $L=1$ uses only first-order neighbors and cannot capture the second-order network dependence of the data generating mechanism; its point estimates display a persistent upward bias under both network models, and as $n$ grows the OSE keeps decreasing while the estimates do not approach the true value correspondingly, so that the OCI in fact deteriorates. An insufficient number of layers thus produces an approximation bias that cannot be removed by increasing the sample size, and the problem is more pronounced under the random geometric graph, so that the consequences of omitting higher-order network information also vary with the network topology.

Second, compared with the MLP and the RF, the GNN exhibits a more stable overall advantage in point estimation bias, sampling variability and oracle coverage. The MLP results scarcely change with the number of layers: adding hidden layers cannot compensate for the absence of a mechanism for propagating information along the network. Its bias does not diminish as $n$ grows and its OSE is usually larger than that of the GNN, the efficiency advantage of GNN being particularly clear under the random geometric graph and for the $OCDE$. Even where the MLP has a slightly smaller OSE, as for the $ONDE$ under the Erd\H{o}s--R\'enyi graph, its bias leaves its OCI well below that of the GNN. The RF performs worse overall: its point estimates exceed the true value substantially in every setting, the bias remains pronounced as the sample size grows, and although its OSE is at times smaller than that of the GNN under the Erd\H{o}s--R\'enyi graph, the smaller sampling variability does not translate into better inference and its OCI falls far short of the nominal level. In the presence of higher-order and nonlinear network confounding, conventional machine learning methods that do not exploit the network structure explicitly can therefore hardly deliver reliable estimates of causal effects, whereas the GNN, which propagates information layer by layer along the graph, performs markedly better.

Finally, comparisons among SE, OSE, CI and OCI bear on the finite-sample behavior of the network HAC variance estimator. For the GNN with $L=2$ and $L=3$, the OCI is close to $0.95$ in every setting, so that the point estimator is well centered and well approximated by a normal distribution. The CI falls below the OCI in smaller samples, chiefly because the network HAC standard error understates the OSE to some degree. The gap narrows appreciably as the sample size grows: at $n=4000$ the SE for $L=2$ and $L=3$ reaches roughly $89\%$ to $96\%$ of the OSE under the random geometric graph, and the two are nearly identical under the Erd\H{o}s--R\'enyi graph, their ratio lying between about $96\%$ and $101\%$; the variance estimator is thus fairly accurate in large samples. The MLP presents a different picture: its SE and OSE are usually close and the discrepancy between CI and OCI comparatively small, but both fall below the nominal level as the sample size grows, which shows that its undercoverage stems from a nonvanishing point estimation bias rather than from the variance estimator itself. The RF suffers from both problems at once, with a markedly low OCI and an SE that substantially understates the OSE. On the whole, when the nuisance functions are estimated accurately the network HAC estimator improves steadily with the sample size, whereas poor fits may produce bias and variance understatement simultaneously and hence severe undercoverage.

Appendix C reports further results for the own natural indirect effect, the spillover controlled direct effect and the spillover interventional direct and indirect effects.

\section{Empirical Application}
\label{ch5:sec:empirical-application}
We use the method proposed here to assess the influence of an intensive information session and of a default purchase option on farmers' insurance decisions when a new rice insurance product is marketed in rural China, and we decompose that influence into an indirect effect transmitted through farmers' perceptions and a direct effect that does not operate through this channel.

The data used in this section come from \citet{cai2015social}, who designed a randomized field experiment in rural China; the design is intricate, and we describe only what is needed here. The authors organized two rounds of insurance information sessions, the second three days after the first. Each round comprised a simple session, presenting only basic product information, and an intensive session, which in addition provided financial education on how insurance works and how to compute expected returns; every participating farmer was randomly assigned to exactly one of the four sessions. Because rural social networks are dense, first-round participants shared and discussed what they had learned with farmers who had not yet attended the second round, so that the decisions of the latter are influenced not only by their own session but also by spillover from the first round.

Of the more than $5000$ individuals in the original data, $4848$ remain after those with missing values are dropped, and all of them are included in our study, that is, $\mathcal{J}_n=\mathcal{N}_n$. The exposure mapping and the mediator mapping of every individual are set directly to $1$, that is, $T_i=S_i=1$, and we estimate $OCDE(1,0,1)$, $OCDE(1,1,1)$, $ONDE(1)$ and $ONIE(1)$, controlled and natural effects that place no restriction on the treatments and mediators of the others. Fixing $T_i$ and $S_i$ at constants facilitates comparison with \citet{cai2015social}, whose effects of interest place no restriction on the variables of others as well; our estimands are then very similar to what the coefficients of their model capture, although strictly speaking their paper does not adopt a causal inference framework and those coefficients do not admit a causal interpretation. Other specifications of $T_i$ and $S_i$ would deliver many further effects that their study does not consider, an important advantage of our framework.

We first study the effect of attending the intensive session on the insurance purchase decision. We consider three possible mediators: whether the individual answered at least $40\%$ of the post-session insurance knowledge test correctly, believes that more than $50\%$ of farmers purchase the insurance, and assesses the probability of a severe disaster in the coming year at more than $40\%$. This parallels the question studied in Table 2 of \citet{cai2015social}, with the addition of mediators. We then study the effect of the default purchase option, under which a farmer must sign to decline insurance when purchase is the default and to purchase when no purchase is the default, taking the latter two of the three variables above as possible mediators. This effect is not studied in \citet{cai2015social} and may be regarded as an additional contribution of our empirical analysis. Following the original paper, we use gender, age, household size, rice planting area in 2010 and literacy as covariates.

Table \ref{ch5:tab:agricultural-insurance-estimates} summarizes the empirical estimates; each causal estimand is estimated by the same three methods as in the simulation study, with the number of layers of the graph neural network and the multilayer perceptron set to $2$. The GNN estimates show that the intensive session has a stable and significantly positive influence on insurance purchase under every mediation mechanism. When the mediator is the insurance knowledge test score, the estimate of $OCDE(1,0,1)$, $0.0628$, is markedly larger than that of $OCDE(1,1,1)$, $0.0352$, and both are significant: the influence of the intensive session is heterogeneous across mediator levels, and those with a poorer command of insurance knowledge are more inclined to be persuaded by the intensive session to purchase. The estimates of $ONDE(1)$ and $ONIE(1)$ are $0.0383$ and $0.0351$, respectively, and both are significant, which indicates that financial education influences the decision not only by improving farmers' understanding of insurance, but also directly or through other channels this mediator does not fully capture. However, measuring the total effect by the sum of the natural direct and indirect effects, the insurance knowledge channel accounts for a considerable share of it.

\begin{table}[t!]
\small
\renewcommand{\arraystretch}{0.6}
\setlength{\tabcolsep}{3pt}
\centering
\caption{Summary of the empirical estimates}
\label{ch5:tab:agricultural-insurance-estimates}
\begin{threeparttable}
\resizebox{\textwidth}{!}{
\begin{tabular}{ccccccccc}
\toprule
Treatment & Mediator & Estimand & \multicolumn{2}{c}{GNN} & \multicolumn{2}{c}{MLP} & \multicolumn{2}{c}{RF} \\
\midrule
\multirow{12}{*}{\makecell{Intensive \\session}} & \multirow{4}{*}{Test score} & $OCDE(1,0,1)$ & 0.0628 & (0.0230) & 0.0432 & (0.0197) & 0.0368 & (0.0126) \\
& & $OCDE(1,1,1)$ & 0.0352 & (0.0188) & 0.0361 & (0.0225) & 0.0316 & (0.0139) \\
& & $ONDE(1)$ & 0.0383 & (0.0148) & 0.0435 & (0.0157) & 0.0341 & (0.0106) \\
& & $ONIE(1)$ & 0.0351 & (0.0055) & 0.0325 & (0.0046) & 0.0368 & (0.0039) \\
\cmidrule{2-9}
& \multirow{4}{*}{\makecell{Perceived \\take-up}} & $OCDE(1,0,1)$ & 0.0660 & (0.0247) & 0.0855 & (0.0236) & 0.0670 & (0.0139) \\
& & $OCDE(1,1,1)$ & 0.0965 & (0.0159) & 0.0784 & (0.0198) & 0.0718 & (0.0124) \\
& & $ONDE(1)$ & 0.0803 & (0.0138) & 0.0770 & (0.0159) & 0.0707 & (0.0112) \\
& & $ONIE(1)$ & -0.0009 & (0.0011) & -0.0002 & (0.0012) & -0.0004 & (0.0007) \\
\cmidrule{2-9}
& \multirow{4}{*}{\makecell{Disaster \\probability}} & $OCDE(1,0,1)$ & 0.1017 & (0.0192) & 0.0987 & (0.0196) & 0.0873 & (0.0125) \\
& & $OCDE(1,1,1)$ & 0.0527 & (0.0200) & 0.0490 & (0.0216) & 0.0459 & (0.0131) \\
& & $ONDE(1)$ & 0.0703 & (0.0143) & 0.0771 & (0.0159) & 0.0692 & (0.0112) \\
& & $ONIE(1)$ & 0.0014 & (0.0010) & 0.0005 & (0.0010) & 0.0008 & (0.0006) \\
\midrule
\multirow{8}{*}{\makecell{Default \\purchase}} & \multirow{4}{*}{\makecell{Perceived \\take-up}} & $OCDE(1,0,1)$ & 0.0609 & (0.0311) & 0.0905 & (0.0274) & 0.0841 & (0.0159) \\
& & $OCDE(1,1,1)$ & 0.1301 & (0.0274) & 0.0952 & (0.0277) & 0.0912 & (0.0173) \\
& & $ONDE(1)$ & 0.0694 & (0.0250) & 0.0891 & (0.0242) & 0.0897 & (0.0162) \\
& & $ONIE(1)$ & -0.0012 & (0.0043) & -0.0032 & (0.0020) & -0.0001 & (0.0007) \\
\cmidrule{2-9}
& \multirow{4}{*}{\makecell{Disaster \\probability}} & $OCDE(1,0,1)$ & 0.1327 & (0.0261) & 0.0974 & (0.0260) & 0.1001 & (0.0165) \\
& & $OCDE(1,1,1)$ & 0.0674 & (0.0315) & 0.0746 & (0.0303) & 0.0791 & (0.0169) \\
& & $ONDE(1)$ & 0.0793 & (0.0242) & 0.0923 & (0.0242) & 0.0896 & (0.0162) \\
& & $ONIE(1)$ & -0.0003 & (0.0032) & -0.0007 & (0.0011) & 0.0003 & (0.0005) \\
\bottomrule
\end{tabular}
}
\begin{tablenotes}[flushleft]
\footnotesize
\item \textit{Note}: Point estimates are reported outside the parentheses and standard errors inside.
\end{tablenotes}
\end{threeparttable}
\end{table}

These results agree in direction with the main findings of \citet{cai2015social}, though not in magnitude. Table 2 of the original paper reports that the intensive session raises the insurance take-up rate by about $14$ to $15$ percentage points in the first-round sample, an increase the authors attribute largely to a better understanding of the returns to the product and of how it works. The total effect obtained here, about $7$ to $8$ percentage points, is smaller, chiefly because we use the full sample and adjust simultaneously for covariates, network dependence and the mediation mechanism, whereas their estimate rests on the first-round contrast between the simple and the intensive sessions alone. Since second-round farmers may already have obtained information from first-round participants through the social network, the marginal influence of the intensive session is naturally attenuated in the full sample. Sign and statistical significance nevertheless agree with the original paper.

Turning to the other two mediators, when the mediator is the perceived insurance take-up rate the two controlled direct effects are both significantly positive with $OCDE(1,1,1)$ the larger, so that the effect of the intensive session is stronger among subjects who believe that most people have purchased insurance; $ONDE(1)$ is significantly positive but $ONIE(1)$ is essentially zero and insignificant, so that the positive influence on purchase is not achieved by altering farmers' beliefs about the take-up of others. This echoes the mechanism analysis of the original paper: the network transmits knowledge about the product more effectively than information about the purchase decisions of others. When the mediator is the subjective disaster probability the pattern is the same except that $OCDE(1,0,1)$ is now the larger, so that the intensive session matters mainly among subjects who believe a severe disaster unlikely; with $ONDE(1)$ again significantly positive and $ONIE(1)$ again essentially zero, the session appears to raise take-up by improving farmers' appreciation of the product itself rather than by changing their assessment of future disaster risk.

We next consider the default purchase option. Since \citet{cai2015social} do not study it as a principal treatment, the following results extend our empirical analysis beyond the original paper. The GNN estimates show that the default option likewise raises the probability of purchase significantly. Both controlled direct effects are significantly positive under either mediation mechanism, and their ordering repeats the pattern found for the intensive session, so that the effect of the default option is again stronger among subjects who believe that most people have purchased insurance or that a severe disaster is unlikely. The natural direct effect is significantly positive under both mechanisms, raising the probability of purchase by about $7$ to $8$ percentage points, whereas both estimates of $ONIE(1)$ are close to zero and insignificant. The default option therefore appears to act directly, by lowering the cost of active choice, shifting the reference point or exploiting behavioral inertia, rather than through either perception. This echoes the results for the intensive session: for both treatments, the influence on purchase bypasses the two considered perception-based mediators.

Finally, the MLP and RF estimates agree with the GNN estimates in overall direction: most direct effect estimates are positive, and the natural indirect effect is generally close to zero under the perceived take-up and disaster probability mediators, so that the main conclusions do not depend on the particular method used to estimate the nuisance functions. The GNN estimates do differ somewhat for several controlled direct effects, most visibly under the default purchase treatment with the disaster probability mediator, as one would expect if individual decisions were jointly influenced by network position, neighbors' characteristics and the local structure of information transmission. Since the GNN, unlike the MLP and the RF, incorporates network neighborhood information into the estimation of the nuisance functions, we take it as the primary basis for our empirical analysis. Overall, our method reproduces the basic finding of the original paper that the intensive session raises the insurance take-up rate and further distinguishes the mediation channels: insurance knowledge is a key mechanism through which the intensive session operates, whereas perceived take-up and perceived disaster probability are not, and the default option changes purchase behavior directly rather than through these perceptions.

\section{Conclusion}
\label{ch5:sec:conclusion}
This paper develops a framework for causal mediation analysis in a single large observed network, allowing the treatment and the mediator to spill over simultaneously and the outcome to depend on individual characteristics and network structure in a general nonlinear fashion. Exposure and mediator mappings are used solely to define the own controlled direct effect and the own natural direct and indirect effects, without requiring the true interference mechanism to conform to a prespecified aggregation form. For estimation and inference, we construct doubly or multiply robust augmented inverse probability weighted estimators and use graph neural networks to learn the high-dimensional nuisance functions directly from the node features and the network. Under approximate neighborhood interference, weak network dependence and suitable first-stage conditions, we establish asymptotic normality and the consistency of a network HAC variance estimator. The simulations demonstrate the advantages of our GNN estimator over conventional ML estimators, and the empirical analysis validates its practical applicability in real-world settings. 

Several extensions are worth pursuing. First, a longitudinal version of the framework could accommodate time-varying treatments, mediators and outcomes, as well as an evolving network, thereby tracing how direct, indirect and spillover effects accumulate over time. Second, the framework could be extended to continuous or multi-valued treatments and mediators and to a fixed-dimensional vector of mediators, with corresponding modifications to the nuisance functions and score construction; further decomposition into mediator-specific pathways would permit a more detailed analysis of multiple mechanisms. Third, developing uniform inference, simultaneous testing and methods for heterogeneous mediation effects across network positions or subpopulations would further broaden the range of empirical questions addressed by the framework.

\begin{appendices}
\section{Asymptotic Theory of Natural Direct and Indirect Effects}
\label{ch5:subsec:asymptotic-of-natural-effects}

We now study the asymptotic properties of the estimators of the own natural direct and indirect effects at a fixed $t$. Since the estimators of the own natural direct effect and of the own natural indirect effect are both constructed from the same AIPW score estimator $\widehat{\tau}_i^N(d,d^*,t)$, their asymptotic theory is entirely analogous, and for brevity we discuss only the former. As in the main text, the following assumptions are required. The discussion of these assumptions parallels that in Section \ref{ch5:sec:asymptotic-theory} exactly, and to avoid repetition we omit it and state the assumptions alone.
\begin{assumption}[Exposure mapping]
\label{ch5:ass:own-natural-local-exposure-mapping}
There exists a fixed positive integer $K_T$ such that the exposure mapping $T_i$ is determined solely by the treatments of the individuals in the $K_T$-neighborhood. Specifically,
\[
T_i = t_n(i,\bD_{\mathcal{N}(i,K_T)\setminus\{i\}},\bA).
\]
\end{assumption}
 
To state the next assumption, define
\[
\varphi_i^N(t) \coloneq [\tau_i^N(1,0,t) - \tau_i^N(0,0,t)] - NDE(t),
\]
and
\[
\sigma_{n,N}^2 \coloneq \Var\bigg(\frac{1}{\sqrt{j_n}} \sum_{i\in\mathcal{J}_n}\varphi_i^N(t) \mid \bX,\bA \bigg).
\]
\begin{assumption}[Moment conditions]
  \label{ch5:ass:own-natural-effects-moments-overlap}
  (i) There exists a constant $M < \infty$ such that $|Y_i(\bd, \bm)| \leqslant M$ almost surely for all $n\in\mathbb{N}$, $i\in\mathcal{N}_n$, $d\in\{0,1\}^n$ and $m\in\{0,1\}^n$.\\
  (ii) There exists an interval $[\underline{\pi},\overline{\pi}] \subset (0,1)$ such that, for all $n\in\mathbb{N}$, $i\in\mathcal{N}_n$ and any $(d,d^*,t,m)$,
  \[
  e_i^N(d,t),\ \widehat{e}_i^N(d,t),\ p_i^N(d),\ \widehat{p}_i^N(d),\ q_i^N(m \mid d),\ \widehat{q}_i^N(m \mid d) \in [\underline{\pi},\overline{\pi}]
  \]
  and $j_n/n \geqslant \underline{\pi}$ almost surely.\\
  (iii) $\liminf_{n\to\infty}\sigma_{n,N}^2 > 0$ almost surely.
\end{assumption}
 
For any $x\in\mathbb{R}$ we define
\begin{gather*}
  \Psi_{\eta^N}(Z_i,x,d,d^*,t) \coloneq \big[x - \eta_i^N(d,t,M_i)\big]\bigg[\frac{\1\{D_i=d^*\}}{p_i^N(d^*)} - \frac{\1\{D_i=d,T_i=t\}}{e_i^N(d,t)}\frac{q_i^N(M_i \mid d^*)}{q_i^N(M_i \mid d)}\bigg], \\
  \Psi_{q^N}(Z_i,x_1,x_0,d,d^*,t) \coloneq \frac{\1\{D_i=d,T_i=t\}}{e_i^N(d,t)}\big[Y_i - \eta_i^N(d,t,M_i)\big]\bigg[\frac{x_1}{x_0} - \frac{q_i^N(M_i \mid d^*)}{q_i^N(M_i \mid d)}\bigg], \\
  \Psi_{e^N}(Z_i,x,d,d^*,t) \coloneq \1\{D_i=d,T_i=t\}\big[Y_i - \eta_i^N(d,t,M_i)\big]\frac{q_i^N(M_i \mid d^*)}{q_i^N(M_i \mid d)}\bigg[\frac{1}{x} - \frac{1}{e_i^N(d,t)}\bigg], \\
  \Psi_{p^N}(Z_i,x,d,d^*,t) \coloneq \1\{D_i=d^*\}\big[\eta_i^N(d,t,M_i) - \mu_i^N(d,d^*,t)\big]\bigg[\frac{1}{x} - \frac{1}{p_i^N(d^*)}\bigg], \\
  \Psi_{\mu^N}(Z_i,x,d,d^*,t) \coloneq \big[x - \mu_i^N(d,d^*,t)\big]\bigg[1 - \frac{1\{D_i=d^*\}}{p_i^N(d^*)}\bigg].
\end{gather*}
 
\begin{assumption}[GNN convergence rates]
  \label{ch5:ass:own-natural-effects-gnn-convergence-rates}
  For any $(d,d^*,t,m)$, the following conditions hold:\\
  (i) (Mean squared error)
  \begin{gather*}
    \frac{1}{j_n} \sum_{i\in\mathcal{J}_n}[\widehat{\eta}_i^N(d,t,m) - \eta_i^N(d,t,m)]^2 = o_p(n^{-1/2}), \\
    \frac{1}{j_n} \sum_{i\in\mathcal{J}_n}[\widehat{q}_i^N(m \mid d) - q_i^N(m \mid d)]^2 = o_p(n^{-1/2}), \\
    \frac{1}{j_n} \sum_{i\in\mathcal{J}_n}[\widehat{e}_i^N(d,t) - e_i^N(d,t)]^2 = o_p(n^{-1/2}), \\
    \frac{1}{j_n} \sum_{i\in\mathcal{J}_n}[\widehat{p}_i^N(d) - p_i^N(d)]^2 = o_p(n^{-1/2}), \\
    \frac{1}{j_n} \sum_{i\in\mathcal{J}_n}[\widehat{\mu}_i^N(d,d^*,t) - \mu_i^N(d,d^*,t)]^2 = o_p(n^{-1/2}).
  \end{gather*}
  (ii) (Stochastic equicontinuity)
  \begin{gather*}
    \frac{1}{j_n}\sum_{i\in\mathcal{J}_n}\Psi_{\eta^N}\big(Z_i,\widehat{\eta}_i^N(d,t,M_i),d,d^*,t\big) = o_p(n^{-1/2}), \\
    \frac{1}{j_n}\sum_{i\in\mathcal{J}_n}\Psi_{q^N}\big(Z_i,\widehat{q}_i^N(M_i \mid d^*),\widehat{q}_i^N(M_i \mid d),d,d^*,t\big) = o_p(n^{-1/2}), \\
    \frac{1}{j_n}\sum_{i\in\mathcal{J}_n}\Psi_{e^N}\big(Z_i,\widehat{e}_i^N(d,t),d,d^*,t\big) = o_p(n^{-1/2}), \\
    \frac{1}{j_n}\sum_{i\in\mathcal{J}_n}\Psi_{p^N}\big(Z_i,\widehat{p}_i^N(d^*),d,d^*,t\big) = o_p(n^{-1/2}), \\
    \frac{1}{j_n}\sum_{i\in\mathcal{J}_n}\Psi_{\mu^N}\big(Z_i,\widehat{\mu}_i^N(d,d^*,t),d,d^*,t\big) = o_p(n^{-1/2}).
  \end{gather*}
\end{assumption}
The $\psi$-dependence coefficient required by the next assumption is defined as
\begin{equation}
  \psi_n(r) \coloneq \begin{cases}
    1, & 0 \leqslant r \leqslant 4K_T, \\
    \gamma_n(r/4), & r > 4K_T.
  \end{cases}
  \label{ch5:eq:own-natural-effects-psi-dependence-coefficient}
\end{equation}
\begin{assumption}[$\psi$-dependence]
  \label{ch5:ass:own-natural-effects-weak-dependence}
  (i) $\sup_{n \in \mathbb{N}} \max_{r \geqslant 1} \psi_n(r) < \infty$ almost surely.\\
(ii) For some fixed $p>4$, some sequence of positive numbers $v_n \to \infty$ and any $k \in \{1, 2\}$,
\[
\frac{1}{n^{k/2}} \sum_{r=0}^{\infty}c_n(r, v_n, k) \psi_n(r)^{1 - (2+k)/p} = o(1) \quad \text{and} \quad n^{3/2} \psi_n(v_n)^{1 - 1/p} = o(1) \quad \text{a.s.}
\]
\end{assumption}
Under the assumptions above we obtain the asymptotic normality of the AIPW estimator of the NDE.
\begin{theorem}
  \label{ch5:thm:own-nde-asymptotic-normality}
  Under Assumptions \ref{ch5:ass:consistency}, \ref{ch5:ass:error-term-independence}, \ref{ch5:ass:own-natural-effects-conditional-independence}, \ref{ch5:ass:own-natural-local-exposure-mapping}, \ref{ch5:ass:own-natural-effects-moments-overlap}, \ref{ch5:ass:approximate-neighborhood-interference}, \ref{ch5:ass:own-natural-effects-gnn-convergence-rates} and \ref{ch5:ass:own-natural-effects-weak-dependence}, for any fixed $t$,
\[
\sigma_{n,N}^{-1}\sqrt{j_n}\big(\widehat{NDE}(t)-NDE(t)\big) \xrightarrow{d} \mathcal{N}(0,1).
\]
\end{theorem}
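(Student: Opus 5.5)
The argument mirrors the proof of Theorem \ref{ch5:thm:own-cde-asymptotic-normality} and has three steps. First, I decompose the estimation error into an oracle term and a remainder. Second, I show that the remainder is $o_p(n^{-1/2})$ using the rate and equicontinuity conditions. Third, I apply the network central limit theorem of \citet{kojevnikov2021limit} to the oracle term, after establishing $\psi$-dependence of the centered scores $\varphi_i^N(t)$. Throughout, $NDE(t)$ denotes $ONDE(t)$.

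By Theorem \ref{ch5:thm:own-natural-effects-identification}, $ONDE(t)=j_n^{-1}\sum_{i\in\mathcal{J}_n}[\mu_i^N(1,0,t)-\mu_i^N(0,0,t)]$. I will first check that $\E[\tau_i^N(d,d^*,t)\mid\bX,\bA]=\mu_i^N(d,d^*,t)$, so that $\varphi_i^N(t)$ has conditional mean zero once averaged. This uses $M_i\indep T_i\mid D_i,\bX,\bA$, which holds under Assumption \ref{ch5:ass:error-term-independence} and the model. With this in hand, the estimation error splits as
\[
\sqrt{j_n}\big(\widehat{NDE}(t)-NDE(t)\big)=\frac{1}{\sqrt{j_n}}\sum_{i\in\mathcal{J}_n}\varphi_i^N(t)+\sqrt{j_n}\,R_n,
\]
where $R_n$ collects $j_n^{-1}\sum_i[\widehat\tau_i^N-\tau_i^N]$ for $(d,d^*)=(1,0)$ and $(0,0)$.

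For each $(d,d^*)$, I expand $\widehat\tau_i^N-\tau_i^N$ algebraically, perturbing one nuisance at a time. The first-order terms are exactly $\Psi_{\mu^N}$, $\Psi_{p^N}$, $\Psi_{\eta^N}$, $\Psi_{e^N}$ and $\Psi_{q^N}$, each evaluated at the estimated nuisance. These are $o_p(n^{-1/2})$ by Assumption \ref{ch5:ass:own-natural-effects-gnn-convergence-rates}(ii), combined with $j_n/n\geqslant\underline\pi$. The leftover terms are products of two estimation errors: $(\widehat p-p)(\widehat\eta-\eta)$, $(\widehat p-p)(\widehat\mu-\mu)$, $(\widehat e-e)(\widehat\eta-\eta)$, $(\widehat q-q)(\widehat\eta-\eta)$, $(\widehat e-e)(\widehat q-q)$, and so on. Each carries coefficients bounded via overlap (Assumption \ref{ch5:ass:own-natural-effects-moments-overlap}(ii)) and bounded outcomes (part (i)). Cauchy--Schwarz and the MSE rates in part (i) then make each of them $o_p(n^{-1/2})$. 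The ratio $\widehat q(\cdot\mid d^*)/\widehat q(\cdot\mid d)$ needs care: I linearize it as
\[
\frac{\widehat q^*}{\widehat q}-\frac{q^*}{q}=\frac{\widehat q^*-q^*}{\widehat q}-\frac{q^*(\widehat q-q)}{q\,\widehat q}.
\]
Its first-order piece is absorbed into $\Psi_{q^N}$, and the residual is quadratic with bounded denominators. Hence $\sqrt{j_n}R_n=o_p(1)$.

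For the oracle term, I show that $\{\varphi_i^N(t)\}$ is conditionally $\psi$-dependent with coefficient $\psi_n(r)$ given in (\ref{ch5:eq:own-natural-effects-psi-dependence-coefficient}). Conditional on $(\bX,\bA)$, each $D_i$ and $M_i$ depends only on the unit's own errors $(\nu_i,\omega_i)$, and $T_i$ depends only on treatments in $\mathcal{N}(i,K_T)$. So apart from $Y_i$, every component of $\tau_i^N$ is a function of errors in $\mathcal{N}(i,K_T)$. I replace $Y_i$ by the local model $f_{n,r/4}$ from Assumption \ref{ch5:ass:approximate-neighborhood-interference}. This makes $\varphi_i^N$ approximable by a function of errors within radius $\max\{r/4,K_T\}$, with $L^1$ error at most $C\gamma_n(r/4)$, using bounded inverse weights. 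Two units farther apart than $r$ then have independent local approximations by Assumption \ref{ch5:ass:error-term-independence}, which yields the covariance bound in the Kojevnikov et al.\ sense, following the argument of \citet{leung2022causal}. With bounded moments, Assumption \ref{ch5:ass:own-natural-effects-weak-dependence}, and $\liminf\sigma_{n,N}^2>0$, their CLT gives $\sigma_{n,N}^{-1}j_n^{-1/2}\sum_i\varphi_i^N(t)\to\mathcal{N}(0,1)$. Slutsky's lemma then completes the proof.

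I expect the main obstacle to be the $\psi$-dependence step. The score contains nested nonlinear dependence through $T_i$ and through the ratio of conditional probabilities, and the verification must handle Lipschitz test functionals of sets of units rather than just pairs. I would try to reduce it, as in the CDE proof, to bounding $\E|\tau_i^N-\tau_{i,r}^N|\lesssim\gamma_n(r/4)$ and then cite the generic lemma that ANI-type local approximability implies $\psi$-dependence.
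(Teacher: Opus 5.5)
Your proposal is correct and follows essentially the same route as the paper: the same oracle-plus-remainder decomposition, with the five $\Psi$ stochastic-equicontinuity terms and a collection of second-order product terms handled by Cauchy--Schwarz and the MSE rates, and the same $\psi$-dependence argument (localize $Y_i$ via $f_{n,r/4}$, use bounded inverse weights and independent errors) feeding the Kojevnikov--Marmer--Song CLT and then Slutsky. Two minor remarks: because $\Psi_{q^N}$ is defined with the full ratio difference, no linearization of $\widehat q^N(\cdot\mid d^*)/\widehat q^N(\cdot\mid d)$ is needed beyond the Lipschitz bound used for the product terms, and $\E[\tau_i^N(d,d^*,t)\mid\bX,\bA]=\mu_i^N(d,d^*,t)$ follows from iterated expectations alone, without invoking $M_i\indep T_i\mid D_i,\bX,\bA$.
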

 
We now construct an estimator of the asymptotic variance. Define the residual vector
\begin{equation*}
\widetilde{\tau} = \big(\widehat{\tau}_i^N(1,0,t) - \widehat{\tau}_i^N(0,0,t) - \widehat{\mu}_i^N(1,0,t) + \widehat{\mu}_i^N(0,0,t) \big)_{i \in \mathcal{J}_n},
\end{equation*}
and the HAC variance estimator
\[
\widehat{\sigma}_{N}^2 \coloneq \max\{\widehat{\sigma}_U^2, \widehat{\sigma}_{PD}^2\} \quad \text{where} \quad \widehat{\sigma}_q^2 \coloneq \frac{1}{j_n}\, \widetilde{\tau}' K_q\, \widetilde{\tau}, \quad q \in \{U, PD\},
\]
where $K_U$ and $K_{PD}$ are as defined for the variance estimator of the previous section. We impose the following assumption in order to obtain a consistent estimator of the asymptotic variance.
\begin{assumption}
\label{ch5:ass:own-natural-effects-hac-consistency}
(i) There exists $M > 0$ such that, for all $n \in \mathbb{N}$, $i \in N_n$ and any $(d,d^*,t,m)$, $\max\{|Y_i|, |\widehat{\mu}_i^N(d,d^*,t)|, |\widehat{\eta}_i^N(d, t, m)|\} < M$ almost surely.\\
(ii) For some $p>4$ and some positive sequence $b_n \to \infty$, $\displaystyle n^{-1} \sum_{r=0}^{\infty} c_n(r, b_n, 2)\, \psi_n(r)^{1 - 4/p} = o(1)$, $\sum_{r>b_n}\delta_n^B(r,1)\psi_n(r)^{1-2/p}=o(1)$ and $\sum_{r=0}^{n}\psi_n(r)^{1-2/p} n^{-1}\sum_{i=1}^n\sum_{j\in\mathcal{N}^B(i,r)}|K_{PD,ij}-1| = o(1)$ almost surely.\\
(iii) $\displaystyle n^{-1} \sum_{i=1}^n n(i, b_n)^2 = O_p(\sqrt{n})$.
\end{assumption}
 
\begin{theorem}
  \label{ch5:thm:own-nde-variance-consistency}
Under Assumption \ref{ch5:ass:own-natural-effects-hac-consistency} and the assumptions of Theorem \ref{ch5:thm:own-nde-asymptotic-normality}, $\widehat{\sigma}_{N}^2 = \sigma_{n,N}^2+o_p(1)$ for any fixed $t$.
\end{theorem}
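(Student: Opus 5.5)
The argument mirrors the proof of Theorem \ref{ch5:thm:own-cde-variance-consistency}. I would first compare the feasible residual vector $\widetilde{\tau}$ with its infeasible oracle counterpart. Write $\widetilde{\tau}_i$ for the $i$th entry of $\widetilde{\tau}$, and let $\tau^\circ_i$ be the same expression with every estimated nuisance function replaced by its true value:
\[
\tau^\circ_i \coloneq \tau_i^N(1,0,t)-\tau_i^N(0,0,t)-\mu_i^N(1,0,t)+\mu_i^N(0,0,t).
\]
Conditional on $\bX,\bA$, the quantity $\mu_i^N(1,0,t)-\mu_i^N(0,0,t)$ is a constant. Hence $\tau^\circ_i$ equals $\varphi_i^N(t)$ plus a deterministic shift $c_i$ whose average over $\mathcal{J}_n$ is zero. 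The shift enters the variance computation through the kernel; I would handle it as in the OCDE case, either by noting that the target $\sigma_{n,N}^2$ is the conditional variance, so the centered version is the relevant one, or by absorbing it as the paper does for $\widetilde{\tau}$ in the OCDE estimator. The proof then splits into two parts: (a) the oracle HAC estimators $j_n^{-1}\tau^{\circ\prime}K_q\tau^\circ$, $q\in\{U,PD\}$, are consistent for $\sigma_{n,N}^2$; (b) $j_n^{-1}(\widetilde{\tau}'K_q\widetilde{\tau}-\tau^{\circ\prime}K_q\tau^\circ)=o_p(1)$. Since $\max$ is continuous, consistency of both $\widehat{\sigma}_U^2$ and $\widehat{\sigma}_{PD}^2$ then gives $\widehat{\sigma}_N^2=\sigma_{n,N}^2+o_p(1)$.

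For (a), I would reuse the $\psi$-dependence established in the proof of Theorem \ref{ch5:thm:own-nde-asymptotic-normality}: under Assumptions \ref{ch5:ass:error-term-independence}, \ref{ch5:ass:approximate-neighborhood-interference} and \ref{ch5:ass:own-natural-local-exposure-mapping}, the scores $\{\varphi_i^N(t)\}$ are $\psi$-dependent with coefficient (\ref{ch5:eq:own-natural-effects-psi-dependence-coefficient}). They also have bounded moments of every order, by the boundedness in Assumption \ref{ch5:ass:own-natural-effects-moments-overlap}(i)--(ii) and Assumption \ref{ch5:ass:own-natural-effects-hac-consistency}(i). Consistency of the uniform-kernel estimator then follows from Proposition 4.1 of \citet{kojevnikov2021limit}, using the first two conditions of Assumption \ref{ch5:ass:own-natural-effects-hac-consistency}(ii). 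For the positive definite kernel, I would bound the difference $j_n^{-1}\tau^{\circ\prime}(K_{PD}-K_U)\tau^\circ$ in expectation by covariance inequalities for $\psi$-dependent variables, grouping pairs by their distance $r$. This yields a bound of order
\[
\sum_r\psi_n(r)^{1-2/p}n^{-1}\sum_i\sum_{j\in\mathcal{N}^B(i,r)}|K_{PD,ij}-1|,
\]
which is $o(1)$ by the third condition. Pairs beyond $b_n$, where $K_{U,ij}=0$, are controlled by the second condition.

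Part (b) is the main obstacle, because the residual involves many nuisance estimates through products and ratios. I would write $\widetilde{\tau}_i=\tau^\circ_i+R_i$. Using the overlap bounds in Assumption \ref{ch5:ass:own-natural-effects-moments-overlap}(ii) and the boundedness of $Y_i,\widehat{\mu}_i^N,\widehat{\eta}_i^N$, I would bound $|R_i|$ by a constant times the sum of the absolute estimation errors of $\widehat{\mu}^N,\widehat{\eta}^N,\widehat{q}^N,\widehat{e}^N,\widehat{p}^N$ at $i$, over the finitely many arguments involved. The ratio $\widehat q(\cdot\mid d^*)/\widehat q(\cdot\mid d)$ is Lipschitz on $[\underline{\pi},\overline{\pi}]^2$, so it is covered by the same bound. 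It then suffices to show $j_n^{-1}|R'K_qR|$ and $j_n^{-1}|\tau^{\circ\prime}K_qR|$ are $o_p(1)$. Since the entries of $K_q$ lie in $[0,1]$ and vanish when $\ell_{\bA}(i,j)>b_n$, Cauchy--Schwarz gives
\[
j_n^{-1}|R'K_qR|\le j_n^{-1}\sum_i|R_i|\sum_{j\in\mathcal{N}(i,b_n)}|R_j|\lesssim \Big(n^{-1}\sum_i n(i,b_n)^2\Big)^{1/2}\Big(n^{-1}\sum_iR_i^2\Big)^{1/2}\cdot \max_i|R_i|^{0}\ldots
\]
The last factor indicates the step I expect to need care. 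A cleaner route is to bound $\sum_i|R_i|\sum_{j\in\mathcal{N}(i,b_n)}|R_j|\le \big(\sum_iR_i^2\big)^{1/2}\big(\sum_i(\sum_{j\in\mathcal{N}(i,b_n)}|R_j|)^2\big)^{1/2}$ and then apply $(\sum_{j\in\mathcal{N}(i,b_n)}|R_j|)^2\le n(i,b_n)\sum_{j\in\mathcal{N}(i,b_n)}R_j^2$. Exchanging the order of summation gives a factor $\max_j\sum_{i:j\in\mathcal{N}(i,b_n)}n(i,b_n)$. Rather than use this maximum, I would combine the boundedness of $R_j$ with Hölder's inequality to reach $(n^{-1}\sum_i n(i,b_n)^2)^{1/2}(n^{-1}\sum_iR_i^2)^{1/2}=O_p(n^{1/4})o_p(n^{-1/4})=o_p(1)$, using Assumption \ref{ch5:ass:own-natural-effects-hac-consistency}(iii) and the MSE rates in Assumption \ref{ch5:ass:own-natural-effects-gnn-convergence-rates}(i). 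The cross term $\tau^{\circ\prime}K_qR$ is handled the same way, since $\tau^\circ$ is bounded. Here again the $O_p(\sqrt{n})$ neighborhood growth balances against the $o_p(n^{-1/2})$ mean squared error, exactly as in the OCDE case.
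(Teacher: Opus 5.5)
Your plan follows the paper's proof. For this theorem the paper simply defers to the argument for Theorem \ref{ch5:thm:own-cde-variance-consistency}, which has four parts:
\begin{itemize}
\item write $\sigma_{n,N}^2$ as the conditional variance of the oracle residual sum;
\item get the uniform-kernel oracle limit from Proposition 4.1 of \citet{kojevnikov2021limit};
\item treat $K_{PD}$ separately;
\item bound feasible-minus-oracle by $(j_n^{-1}\sum_i R_i^2)^{1/2}(j_n^{-1}\sum_i n(i,b_n)^2)^{1/2}$ times a bounded maximum, i.e.\ $o_p(n^{-1/4})O_p(n^{1/4})$.
\end{itemize}
Your part (b) is exactly Step 2 of that proof, once you bound the inner sum by $n(i,b_n)\max_j|R_j|$ and apply Cauchy--Schwarz once. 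The detour through $\max_j\sum_{i:j\in\mathcal{N}(i,b_n)}n(i,b_n)$ is unnecessary.

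The worry about the deterministic shift is also moot. Because $\E[\tau_i^N(d,d^*,t)\mid\bX,\bA]=\mu_i^N(d,d^*,t)$, your $\tau^\circ_i$ already equals $\varphi_i^N(t)-\E[\varphi_i^N(t)\mid\bX,\bA]$. So no shift enters the quadratic form. Its only role is that $\sum_{i\in\mathcal{J}_n}c_i=0$ gives $\sigma_{n,N}^2=\Var(j_n^{-1/2}\sum_{i\in\mathcal{J}_n}\tau^\circ_i\mid\bX,\bA)$.

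The one genuine gap is the positive-definite kernel step. Write $u_i\coloneq\sqrt{n/j_n}\,\1\{i\in\mathcal{J}_n\}\tau^\circ_i$. Bounding $j_n^{-1}\tau^{\circ\prime}(K_{PD}-K_U)\tau^\circ$ ``in expectation'' with pairwise covariance inequalities controls only its conditional mean, $n^{-1}\sum_{\ell_{\bA}(i,j)\leqslant b_n}(K_{PD,ij}-1)\E(u_iu_j\mid\bX,\bA)$. That mean is indeed $o(1)$ by the third condition of Assumption \ref{ch5:ass:own-natural-effects-hac-consistency}(ii). The second condition plays no role in the difference, since $K_{PD,ij}=K_{U,ij}=0$ whenever $\ell_{\bA}(i,j)>b_n$.

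Bounds on $\E(u_iu_j\mid\bX,\bA)$ say nothing about whether the random quadratic form concentrates around its mean. They do not bound $\E|\cdot|$ of it either. Proposition 4.1 cannot be invoked here, because $K_{PD,ij}$ is not a function of $\ell_{\bA}(i,j)/b_n$ alone.

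You need the fluctuation bound the paper uses for $A_{n,1}$. The weights lie in $[-1,1]$ and vanish beyond distance $b_n$. Combine the fourth-order bound $|\Cov(u_iu_j,u_ku_l\mid\bX,\bA)|\leqslant C\psi_n(\ell_{\bA}(\{i,j\},\{k,l\}))^{1-4/p}$ (Corollary A.2 of \citet{kojevnikov2021limit}) with the count $H_n(r,b_n)\leqslant 4n\,c_n(r,b_n,2)$. This gives a conditional variance of order $n^{-1}\sum_r c_n(r,b_n,2)\psi_n(r)^{1-4/p}$, which is $o(1)$ by the first condition.

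With this added, comparing $K_{PD}$ to $K_U$ is a valid alternative to the paper's direct split $\widetilde\sigma^2_{PD}-\sigma^2_{n,N}=A_{n,1}+A_{n,2}$. It lets Proposition 4.1 absorb the pairs beyond $b_n$, instead of summing $(K_{PD,ij}-1)\E(u_iu_j\mid\bX,\bA)$ over all pairs. It still needs the same variance argument.
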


\section{Identification, Estimation, and Asymptotic Theory for Additional Effects}
\label{ch5:subsec:other-effects}
\subsection{Own interventional direct/indirect effects}
In addition to controlled and natural effects, we consider the own interventional direct and indirect effects. For any individual $i$ and $d\in\{0,1\}$, let $G_i(d)$ be an auxiliary random variable such that
\[
G_i(d)\mid\bX,\bA\stackrel{d}{=}M_i(d)\mid\bX,\bA,
\]
and, conditional on $\bX,\bA$, $G_i(d)$ is independent of all other variables. Thus, conditional on $\bX,\bA$, $G_i(d)$ represents an independent stochastic intervention drawn from the conditional distribution of $M_i(d)$. Define the individual-level interventional potential outcome
\[
\begin{aligned}
Y_i^I(d_i,d_i^*,t) \coloneq{}& \sum_{\bdmi:T_i(\bdmi)=t} \E\!\left[Y_i\!\left(d_i,\bdmi,G_i(d_i^*),\bMmi(\bdmi)\right) \mid\bX,\bA \right]  \\
&\qquad\times \P(\bDmi=\bdmi\mid T_i=t,\bX,\bA),
\end{aligned}
\]
and let
\begin{gather*}
OIDE(t)\coloneq \frac{1}{j_n}\sum_{i\in\mathcal{J}_n} \big[Y_i^I(1,0,t)-Y_i^I(0,0,t)\big],\\
OIIE(t)\coloneq \frac{1}{j_n}\sum_{i\in\mathcal{J}_n} \big[Y_i^I(1,1,t)-Y_i^I(1,0,t)\big].
\end{gather*}
Here, $OIDE(t)$ is the average effect of changing own treatment from 0 to 1 while stochastically intervening on the individual's own mediator according to its conditional distribution under no treatment; $OIIE(t)$ is the average effect, with own treatment fixed at 1, of changing the stochastic intervention distribution of the own mediator from its distribution under no treatment to that under treatment.

\begin{assumption}
\label{ch5:ass:own-interventional-effects-independence}
For every admissible $i,d_i,d_i^*,\bdmi$, and $m_i$, conditions (\ref{ch5:eq:own-natural-treatment-outcome-independence})--(\ref{ch5:eq:own-natural-treatment-mediator-independence}) hold.
\end{assumption}

Unlike identification of the natural direct and indirect effects, Assumption \ref{ch5:ass:own-interventional-effects-independence} does not require the cross-world conditional independence condition in (\ref{ch5:eq:own-natural-cross-world-independence}). Define the observed-data functional
\[
\mu_i^I(d_i,d_i^*,t) \coloneq \sum_{m_i\in\{0,1\}} \E(Y_i\mid D_i=d_i,T_i=t,M_i=m_i,\bX,\bA) \P(M_i=m_i\mid D_i=d_i^*,\bX,\bA),
\]
and
\begin{gather*}
OIDE^{obs}(t)\coloneq \frac{1}{j_n}\sum_{i\in\mathcal{J}_n} \big[\mu_i^I(1,0,t)-\mu_i^I(0,0,t)\big],\\
OIIE^{obs}(t)\coloneq \frac{1}{j_n}\sum_{i\in\mathcal{J}_n} \big[\mu_i^I(1,1,t)-\mu_i^I(1,0,t)\big].
\end{gather*}

\begin{theorem}
\label{ch5:thm:own-interventional-effects-identification}
Under Assumptions \ref{ch5:ass:consistency}, \ref{ch5:ass:error-term-independence}, and \ref{ch5:ass:own-interventional-effects-independence}, for any $i,d_i,d_i^*,t$,
\[
\mu_i^I(d_i,d_i^*,t)=Y_i^I(d_i,d_i^*,t),
\]
and hence
\[
OIDE^{obs}(t)=OIDE(t), \qquad OIIE^{obs}(t)=OIIE(t).
\]
\end{theorem}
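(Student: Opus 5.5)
We will prove $\mu_i^I(d_i,d_i^*,t)=Y_i^I(d_i,d_i^*,t)$ for each fixed $i$; the statements about $OIDE$ and $OIIE$ then follow by averaging over $\mathcal{J}_n$. The plan is to start from the definition of $Y_i^I$ and remove the auxiliary variable $G_i(d_i^*)$ by conditioning on its value. Since $G_i(d_i^*)$ is binary and, given $\bX,\bA$, independent of everything else, we can write
\[
\E\big[Y_i(d_i,\bdmi,G_i(d_i^*),\bMmi(\bdmi))\mid\bX,\bA\big]=\sum_{m_i\in\{0,1\}}\E\big[Y_i(d_i,\bdmi,m_i,\bMmi(\bdmi))\mid\bX,\bA\big]\,\P(M_i(d_i^*)=m_i\mid\bX,\bA).
\]
Independence is what lets us drop the event $\{G_i=m_i\}$ from the conditioning, and the distributional equality lets us replace the law of $G_i(d_i^*)$ by that of $M_i(d_i^*)$. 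By (\ref{ch5:eq:own-natural-treatment-mediator-independence}) and Assumption \ref{ch5:ass:consistency}, $\P(M_i(d_i^*)=m_i\mid\bX,\bA)=\P(M_i=m_i\mid D_i=d_i^*,\bX,\bA)$. So the mediator weights already match those in $\mu_i^I$, and no cross-world condition is needed at any point.

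It then remains to show, for each fixed $m_i$,
\[
\sum_{\bdmi:T_i(\bdmi)=t}\E\big[Y_i(d_i,\bdmi,m_i,\bMmi(\bdmi))\mid\bX,\bA\big]\,\P(\bDmi=\bdmi\mid T_i=t,\bX,\bA)=\E(Y_i\mid D_i=d_i,T_i=t,M_i=m_i,\bX,\bA).
\]
We would work from the right-hand side. First decompose it by the value of $\bDmi$ within the event $\{T_i=t\}$. On $\{\bD=\bd,\ M_i=m_i\}$, Assumption \ref{ch5:ass:consistency} together with $M_j=M_j(d_j)$ gives $Y_i=Y_i(d_i,\bdmi,m_i,\bMmi(\bdmi))$. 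Next apply (\ref{ch5:eq:own-natural-mediator-outcome-independence}) to drop the conditioning on $M_i$, and then (\ref{ch5:eq:own-natural-treatment-outcome-independence}) to drop the conditioning on $\bD$. This leaves the right-hand side equal to
\[
\sum_{\bdmi:T_i(\bdmi)=t}\E\big[Y_i(d_i,\bdmi,m_i,\bMmi(\bdmi))\mid\bX,\bA\big]\,\P(\bDmi=\bdmi\mid D_i=d_i,T_i=t,M_i=m_i,\bX,\bA).
\]

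The main obstacle is showing that this conditional weight equals $\P(\bDmi=\bdmi\mid T_i=t,\bX,\bA)$. For this we would use Assumption \ref{ch5:ass:error-term-independence} together with the structural models (\ref{ch5:eq:potential-mediator-model})--(\ref{ch5:eq:treatment-assignment-model}). Given $\bX,\bA$, the pair $(D_i,M_i)$ is a function of $(\nu_i,\omega_i)$ alone, while $\bDmi$ is a function of $(\nu_j)_{j\ne i}$, and these two sets of errors are independent. Hence $(D_i,M_i)\indep\bDmi\mid\bX,\bA$. Because $T_i$ is a function of $\bDmi$ (and $\bA$), the joint probability factors and the weight reduces to $\P(\bDmi=\bdmi\mid T_i=t,\bX,\bA)$. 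Summing over $m_i$ with the mediator weights from the first step then yields $\mu_i^I=Y_i^I$, and positivity of the conditioning events is implicitly assumed throughout.
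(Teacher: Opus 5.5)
Your proposal is correct and follows essentially the same route as the paper, which reuses the three identities from the proof of Theorem~\ref{ch5:thm:own-natural-effects-identification}. These are the outcome regression via (\ref{ch5:eq:own-natural-treatment-outcome-independence})--(\ref{ch5:eq:own-natural-mediator-outcome-independence}), the peer-treatment weights via Assumption~\ref{ch5:ass:error-term-independence}, and the mediator law via (\ref{ch5:eq:own-natural-treatment-mediator-independence}); the paper then swaps the law of $M_i(d_i^*)$ for that of $G_i(d_i^*)$. You run the argument in the opposite direction, starting from $Y_i^I$, and you make explicit a step the paper leaves implicit: the conditional independence of $G_i(d_i^*)$ is what lets the expectation factor into the weighted sum over $m_i$.
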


Because $\mu_i^I(d,d^*,t)$ has the same observed-data representation as $\mu_i^N(d,d^*,t)$ in (\ref{ch5:eq:own-natural-effects-observed-functional}), the AIPW score constructed in the main text for the natural direct and indirect effects can be used directly under the corresponding overlap conditions. To emphasize that the score is now associated with an interventional effect, write
\[
\widehat{\tau}_i^I(d,d^*,t) \coloneq \widehat{\tau}_i^N(d,d^*,t).
\]
Accordingly, the AIPW estimators of $OIDE(t)$ and $OIIE(t)$ are, respectively,
\begin{gather*}
\widehat{OIDE}(t)\coloneq \frac{1}{j_n}\sum_{i\in\mathcal{J}_n} \big[\widehat{\tau}_i^I(1,0,t) - \widehat{\tau}_i^I(0,0,t)\big],\\
\widehat{OIIE}(t)\coloneq \frac{1}{j_n}\sum_{i\in\mathcal{J}_n} \big[\widehat{\tau}_i^I(1,1,t) - \widehat{\tau}_i^I(1,0,t)\big],
\end{gather*}
and inherit the multiple-robustness property of $\widehat{\tau}_i^N(d,d^*,t)$.

\subsection{Spillover controlled direct effect}
The spillover controlled direct effect measures the average effect on an individual's outcome of changing the treatment exposure mapping for other individuals from $t^*$ to $t$, while holding the individual's own treatment, own mediator, and mediator exposure mapping fixed at $d$, $m$, and $s$, respectively. It is defined as
\[
SCDE(d,m,s,t,t^*) \coloneq \frac{1}{j_n}\sum_{i\in\mathcal{J}_n} \big[Y_i^C(d,t,m,s)-Y_i^C(d,t^*,m,s)\big].
\]

The conditions required to identify the spillover controlled direct effect are the same as those required for the own controlled direct effect. For completeness, we restate them here.

\begin{assumption}
\label{ch5:ass:spillover-cde-conditional-independence}
For any $\bd,\bm$, and $i$,
\begin{gather*}
Y_i(\bd,\bm)\indep\bD\mid\bX,\bA,\\
Y_i(\bd,\bm)\indep\bM\mid\bD,\bX,\bA.
\end{gather*}
\end{assumption}

Assumption \ref{ch5:ass:spillover-cde-conditional-independence} requires that, conditional on $\bX,\bA$, there be no unobserved common causes of an individual's potential outcome and the treatment vector and, conditional further on the treatment vector $\bD$, no unobserved common causes of the individual's potential outcome and the mediator vector. Define the observed-data functional
\[
\mu_i^C(d,t,m,s) \coloneq \E\!\left(Y_i \mid D_i=d,T_i=t,M_i=m,S_i=s,\bX,\bA \right),
\]
and
\[
SCDE^{obs}(d,m,s,t,t^*) \coloneq \frac{1}{j_n}\sum_{i\in\mathcal{J}_n} \big[\mu_i^C(d,t,m,s) - \mu_i^C(d,t^*,m,s)\big].
\]

\begin{theorem}
\label{ch5:thm:spillover-cde-identification}
Under Assumptions \ref{ch5:ass:consistency}, \ref{ch5:ass:error-term-independence}, and \ref{ch5:ass:spillover-cde-conditional-independence}, for any $i,d,t,m,s$,
\[
\mu_i^C(d,t,m,s)=Y_i^C(d,t,m,s).
\]
Consequently, for any $d,m,s,t,t^*$,
\[
SCDE^{obs}(d,m,s,t,t^*) = SCDE(d,m,s,t,t^*).
\]
\end{theorem}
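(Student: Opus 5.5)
We prove the pointwise identity $\mu_i^C(d,t,m,s)=Y_i^C(d,t,m,s)$ for every $i$ and $(d,t,m,s)$; the statement about $SCDE$ then follows at once by averaging over $i\in\mathcal{J}_n$ and taking the difference between $t$ and $t^*$. The argument is the same as the one behind Theorem \ref{ch5:thm:own-cde-identification}, since Assumption \ref{ch5:ass:spillover-cde-conditional-independence} coincides with Assumption \ref{ch5:ass:own-cde-conditional-independence}. Throughout we condition on $\bX,\bA$.

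The plan is to start from the observed side and expand the conditioning event $\{D_i=d,T_i=t,M_i=m,S_i=s\}$ as a disjoint union over configurations. By the tower property,
\[
\mu_i^C(d,t,m,s)=\sum_{\bdmi:T_i(\bdmi)=t}\ \sum_{\bmmi:S_i(\bmmi)=s}\E(Y_i\mid \bD=(d,\bdmi),\bM=(m,\bmmi),\bX,\bA)\,w_i(\bdmi,\bmmi),
\]
where $w_i(\bdmi,\bmmi)=\P(\bDmi=\bdmi,\bMmi=\bmmi\mid D_i=d,M_i=m,T_i=t,S_i=s,\bX,\bA)$. On the event $\{\bD=\bd,\bM=\bm\}$, Assumption \ref{ch5:ass:consistency} gives $Y_i=Y_i(\bd,\bm)$. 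We then use the second part of Assumption \ref{ch5:ass:spillover-cde-conditional-independence} to drop the conditioning on $\bM=\bm$. Next, the first part lets us drop $\bD=\bd$, provided we rewrite $\E[Y_i(\bd,\bm)\mid \bD,\bM,\ldots]$ as $\E[Y_i(\bd,\bm)\mid\bD,\ldots]$ and then as $\E[Y_i(\bd,\bm)\mid\bX,\bA]$. Combining the two steps requires the joint statement $Y_i(\bd,\bm)\indep(\bD,\bM)\mid\bX,\bA$, which follows from the two conditional independences by the contraction property. Positivity of the conditioning events is implicit in the overlap condition.

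The main obstacle is matching the weights. $Y_i^C$ uses $\P(\bDmi=\bdmi,\bMmi=\bmmi\mid T_i=t,S_i=s,\bX,\bA)$, whereas the observed expansion conditions additionally on $D_i=d$ and $M_i=m$. We handle this with Assumption \ref{ch5:ass:error-term-independence} and the structural models. Under (\ref{ch5:eq:treatment-assignment-model}), $D_j=h_n(j,\bX,\bA,\nu_j)$, and under Assumption \ref{ch5:ass:consistency}, $M_j=g_n(j,D_j,\bX,\bA,\omega_j)$. Hence, conditional on $\bX,\bA$, the pair $(D_i,M_i)$ is a function of $(\nu_i,\omega_i)$ only, and $(\bDmi,\bMmi)$ is a function of $(\nu_j,\omega_j)_{j\neq i}$. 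These are independent by Assumption \ref{ch5:ass:error-term-independence}. Since $T_i,S_i$ are functions of $(\bDmi,\bMmi)$ and $\bA$, it follows that $(D_i,M_i)\indep(\bDmi,\bMmi,T_i,S_i)\mid\bX,\bA$. Therefore $w_i(\bdmi,\bmmi)$ equals $\P(\bDmi=\bdmi,\bMmi=\bmmi\mid T_i=t,S_i=s,\bX,\bA)$.

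Substituting this into the expansion yields exactly the definition of $Y_i^C(d,t,m,s)$ and completes the proof. The only delicate points are the independence of the weights just described and the validity of the contraction step; each of these should be stated explicitly.
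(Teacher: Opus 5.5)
Your proposal is correct and follows the paper's argument. The paper proves the pointwise identity $\mu_i^C(d,t,m,s)=Y_i^C(d,t,m,s)$ inside the proof of Theorem \ref{ch5:thm:own-cde-identification} using the same three steps: expanding over configurations with $T_i(\bdmi)=t$ and $S_i(\bmmi)=s$; using consistency plus the two conditional independences to turn each regression term into $\E[Y_i(\bd,\bm)\mid\bX,\bA]$; and using Assumption \ref{ch5:ass:error-term-independence} with the structural models to drop $D_i=d$ and $M_i=m$ from the weights, after which the $SCDE$ statement follows by averaging. Your appeal to contraction is harmless but unnecessary, since applying the two conditional independences one after the other already gives $\E[Y_i(\bd,\bm)\mid\bX,\bA]$.
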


Theorem \ref{ch5:thm:spillover-cde-identification} shows that the spillover controlled direct effect is identified by the difference between observed conditional means at two treatment-exposure levels. Under the corresponding overlap conditions, using the individual-level AIPW score $\widehat{\tau}_i^C(d,t,m,s)$ defined in Section \ref{ch5:subsec:network-aipw-estimation}, we obtain
\[
\widehat{SCDE}(d,m,s,t,t^*) \coloneq \frac{1}{j_n}\sum_{i\in\mathcal{J}_n} \big[\widehat{\tau}_i^C(d,t,m,s) - \widehat{\tau}_i^C(d,t^*,m,s)\big].
\]
This estimator inherits the double-robustness property of $\widehat{\tau}_i^C(d,t,m,s)$: if either the outcome-regression nuisance function or the corresponding propensity-score nuisance function is correctly specified, the conditional expectation of the corresponding AIPW score equals the spillover controlled direct effect.

\subsection{Spillover natural direct/indirect effects}
Fix the individual's own treatment at $d$. The spillover natural direct effect measures the average effect on the individual's outcome of changing the treatment exposure of other individuals while holding their mediators at the natural levels induced by a specified treatment exposure. The spillover natural indirect effect measures the average effect generated by changes in the natural levels of other individuals' mediators as treatment exposure changes, while holding the treatment exposure of others fixed. Define
\begin{equation}
\begin{aligned}
Y_i^{SN}(d_i,t,t^*) \coloneq & \sum_{\bdmi:T_i(\bdmi)=t}\sum_{\bdmi^*:T_i(\bdmi^*)=t^*}\E[Y_i(d_i,\bdmi,M_i(d_i),\bMmi(\bdmi^*)) \mid \bX,\bA] \\
& \times \P(\bDmi=\bdmi\mid T_i=t,\bX,\bA) \P(\bDmi=\bdmi^*\mid T_i=t^*,\bX,\bA).
\end{aligned}
\end{equation}
Correspondingly, define
\begin{gather*}
SNDE(d,t_1,t_0,t^*) \coloneq \frac{1}{j_n}\sum_{i\in\mathcal{J}_n} \big[Y_i^{SN}(d,t_1,t^*) - Y_i^{SN}(d,t_0,t^*)\big], \\
SNIE(d,t,t_1^*,t_0^*) \coloneq \frac{1}{j_n}\sum_{i\in\mathcal{J}_n} \big[Y_i^{SN}(d,t,t_1^*) - Y_i^{SN}(d,t,t_0^*)\big].
\end{gather*}
Here, $SNDE(d,t_1,t_0,t^*)$ compares the average effect of changing the treatment exposure of other individuals from $t_0$ to $t_1$ while holding their mediators at the natural distribution under treatment exposure $t^*$; $SNIE(d,t,t_1^*,t_0^*)$ compares the average effect of changing the natural distribution of other individuals' mediators from that induced by $t_0^*$ to that induced by $t_1^*$ while holding their treatment exposure fixed at $t$.

\begin{assumption}
\label{ch5:ass:spillover-natural-effects-independence}
For any $i\in\mathcal{N}_n$, $d_i,m_i\in\{0,1\}$, $\bdmi,\bdmi^*\in\{0,1\}^{n-1}$, $\bmmi\in\{0,1\}^{n-1}$, and any $j\in\mathcal{N}_n$ and $d_j\in\{0,1\}$,
\begin{gather}
Y_i(d_i,\bdmi,m_i,\bmmi) \indep \bD \mid \bX,\bA, \label{ch5:eq:spillover-natural-treatment-outcome-independence} \\
Y_i(d_i,\bdmi,m_i,\bmmi) \indep \bM \mid \bD,\bX,\bA, \label{ch5:eq:spillover-natural-mediator-outcome-independence} \\
M_j(d_j) \indep D_j \mid \bX,\bA, \label{ch5:eq:spillover-natural-treatment-mediator-independence} \\
Y_i(d_i,\bdmi,m_i,\bmmi) \indep \big(M_i(d_i),\bMmi(\bdmi^*) \big) \mid \bX,\bA.
\label{ch5:eq:spillover-natural-cross-world-independence}
\end{gather}
\end{assumption}

The first three conditions in Assumption \ref{ch5:ass:spillover-natural-effects-independence} rule out unobserved treatment--outcome, mediator--outcome, and treatment--mediator confounding, respectively. Condition (\ref{ch5:eq:spillover-natural-cross-world-independence}) is a joint cross-world independence condition that permits the own natural mediator $M_i(d_i)$ and the natural mediators of other individuals under a different treatment assignment, $\bMmi(\bdmi^*)$, to enter the potential outcome simultaneously. Define the following observed-data nuisance functions:
\begin{gather*}
\eta_i^{SN}(d,\bdmi,m,\bmmi) \coloneq \E\!\left(Y_i \mid D_i=d,\bDmi=\bdmi,M_i=m,\bMmi=\bmmi,\bX,\bA \right), \\
q_i^{SN}(m\mid d) \coloneq \P(M_i=m\mid D_i=d,\bX,\bA), \\
\alpha_i^{SN}(\bdmi\mid t) \coloneq \P(\bDmi=\bdmi\mid T_i=t,\bX,\bA), \\
r_i^{SN}(\bmmi\mid t) \coloneq \P(\bMmi=\bmmi\mid T_i=t,\bX,\bA).
\end{gather*}
Let
\begin{equation}
\begin{aligned}
\mu_i^{SN}(d,t,t^*) \coloneq{}\, &\sum_{\bdmi:T_i(\bdmi)=t} \sum_{m\in\{0,1\}} \sum_{\bmmi\in\{0,1\}^{n-1}} \eta_i^{SN}(d,\bdmi,m,\bmmi) \\
&\quad\times q_i^{SN}(m\mid d) \alpha_i^{SN}(\bdmi\mid t) r_i^{SN}(\bmmi\mid t^*),
\end{aligned}
\end{equation}
and define
\begin{gather*}
SNDE^{obs}(d,t_1,t_0,t^*) \coloneq \frac{1}{j_n}\sum_{i\in\mathcal{J}_n} \big[\mu_i^{SN}(d,t_1,t^*) - \mu_i^{SN}(d,t_0,t^*)\big], \\
SNIE^{obs}(d,t,t_1^*,t_0^*) \coloneq \frac{1}{j_n}\sum_{i\in\mathcal{J}_n} \big[\mu_i^{SN}(d,t,t_1^*) - \mu_i^{SN}(d,t,t_0^*)\big].
\end{gather*}

\begin{theorem}
\label{ch5:thm:spillover-natural-effects-identification}
Under Assumptions \ref{ch5:ass:consistency}, \ref{ch5:ass:error-term-independence}, and \ref{ch5:ass:spillover-natural-effects-independence}, for any $i,d,t,t^*$,
\[
\mu_i^{SN}(d,t,t^*) = Y_i^{SN}(d,t,t^*).
\]
Consequently,
\begin{gather*}
SNDE^{obs}(d,t_1,t_0,t^*) = SNDE(d,t_1,t_0,t^*), \\
SNIE^{obs}(d,t,t_1^*,t_0^*) = SNIE(d,t,t_1^*,t_0^*).
\end{gather*}
\end{theorem}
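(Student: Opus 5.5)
We prove the pointwise identity $\mu_i^{SN}(d,t,t^*)=Y_i^{SN}(d,t,t^*)$; the statements about $SNDE^{obs}$ and $SNIE^{obs}$ then follow by averaging over $i\in\mathcal{J}_n$ and taking differences. Fix $i$, $d$, $t$, $t^*$, and a configuration $\bdmi$ with $T_i(\bdmi)=t$. The plan is to rewrite, for each such $\bdmi$, the counterfactual mean $\E[Y_i(d,\bdmi,M_i(d),\bMmi(\bdmi^*))\mid\bX,\bA]$ in terms of observables. We then sum against $\P(\bDmi=\bdmi\mid T_i=t,\bX,\bA)\,\P(\bDmi=\bdmi^*\mid T_i=t^*,\bX,\bA)$ and match the resulting expression term by term with the definition of $\mu_i^{SN}$.

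\emph{Step 1 (disintegrate over the mediators).} Condition on the joint value $(M_i(d),\bMmi(\bdmi^*))=(m,\bmmi)$. By the cross-world condition (\ref{ch5:eq:spillover-natural-cross-world-independence}),
\[
\E[Y_i(d,\bdmi,M_i(d),\bMmi(\bdmi^*))\mid\bX,\bA]=\sum_{m,\bmmi}\E[Y_i(d,\bdmi,m,\bmmi)\mid\bX,\bA]\,\P(M_i(d)=m,\bMmi(\bdmi^*)=\bmmi\mid\bX,\bA).
\]
Each $M_j(d_j)=g_n(j,d_j,\bX,\bA,\omega_j)$ depends only on $\omega_j$. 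By Assumption \ref{ch5:ass:error-term-independence}, the mediators are therefore conditionally independent across individuals, so the joint mediator law factorizes as $\P(M_i(d)=m\mid\bX,\bA)\prod_{j\ne i}\P(M_j(d_j^*)=m_j\mid\bX,\bA)$.

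\emph{Step 2 (identify each factor).} By (\ref{ch5:eq:spillover-natural-treatment-mediator-independence}) and consistency, $\P(M_j(d_j)=m_j\mid\bX,\bA)=\P(M_j=m_j\mid D_j=d_j,\bX,\bA)$, which equals $q_j^{SN}(m_j\mid d_j)$. For the outcome factor we follow the argument of Theorem \ref{ch5:thm:own-cde-identification}. Using (\ref{ch5:eq:spillover-natural-treatment-outcome-independence}) we condition on $\bD=(d,\bdmi)$; using (\ref{ch5:eq:spillover-natural-mediator-outcome-independence}) we further condition on $\bM=(m,\bmmi)$. Consistency then gives
\[
\E[Y_i(d,\bdmi,m,\bmmi)\mid\bX,\bA]=\eta_i^{SN}(d,\bdmi,m,\bmmi).
\]
Positivity of these conditioning events is implicit in the definitions and will be assumed.

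\emph{Step 3 (reassemble; main obstacle).} We must show that, after averaging over $\bdmi^*$ with weights $\P(\bDmi=\bdmi^*\mid T_i=t^*,\bX,\bA)$, the product $\prod_{j\neq i}\P(M_j=m_j\mid D_j=d_j^*,\bX,\bA)$ becomes $r_i^{SN}(\bmmi\mid t^*)$. First, $(M_j)_j$ given $\bD$ depends only on the $\omega$'s. Because $\bn$ is independent of $\bo$ given $\bX,\bA$ (Assumption \ref{ch5:ass:error-term-independence}), we have
\[
\P(\bMmi=\bmmi\mid\bDmi=\bdmi^*,\bX,\bA)=\prod_{j\ne i}\P(M_j=m_j\mid D_j=d_j^*,\bX,\bA).
\]
Here we use that $M_j$ depends only on $D_j$ and $\omega_j$, and that $\omega_j$ is independent of $\bn$. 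Second, since $T_i$ is a function of $\bDmi$, the law of total probability yields
\[
\sum_{\bdmi^*}\P(\bMmi=\bmmi\mid\bDmi=\bdmi^*,\bX,\bA)\,\P(\bDmi=\bdmi^*\mid T_i=t^*,\bX,\bA)=r_i^{SN}(\bmmi\mid t^*).
\]
Summing over $\bdmi^*$ inside the sum over $\bmmi$ is legitimate because only the mediator law depends on $\bdmi^*$. The own-mediator factor gives $q_i^{SN}(m\mid d)$, and the outer weights give $\alpha_i^{SN}(\bdmi\mid t)$, which completes the match with $\mu_i^{SN}(d,t,t^*)$.

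I expect the delicate part to be this bookkeeping in Step 3. One must check that the own mediator $M_i(d)$ and the others' $\bMmi(\bdmi^*)$ are drawn under different treatment configurations yet still factorize, which the cross-world condition together with the independence of the $\omega_j$ ensures. One must also justify the exchange of the sums over $\bdmi^*$ and $\bmmi$, which is routine because all sums are finite.
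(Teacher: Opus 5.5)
Your proof is correct and uses the same ingredients as the paper's proof. These are (\ref{ch5:eq:spillover-natural-treatment-outcome-independence})--(\ref{ch5:eq:spillover-natural-mediator-outcome-independence}) with consistency for $\eta_i^{SN}$, (\ref{ch5:eq:spillover-natural-treatment-mediator-independence}) for the mediator laws, Assumption \ref{ch5:ass:error-term-independence} for the factorization of $M_i(d)$ and $\bMmi(\bdmi^*)$, the law of total probability for $r_i^{SN}$, and the cross-world condition (\ref{ch5:eq:spillover-natural-cross-world-independence}) to disintegrate over the mediators. The only difference is direction: you assemble from $Y_i^{SN}$ toward $\mu_i^{SN}$, while the paper goes the other way.

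One justification in Step 3 is wrong, though the claim it supports is true. You write that $\bn$ is independent of $\bo$ given $\bX,\bA$ by Assumption \ref{ch5:ass:error-term-independence}, and that $\omega_j$ is independent of $\bn$. That assumption only makes the triples $(\varepsilon_j,\omega_j,\nu_j)$ independent \emph{across} individuals. It says nothing about $\omega_j$ versus $\nu_j$ for the same $j$; if it did, condition (\ref{ch5:eq:spillover-natural-treatment-mediator-independence}) would be superfluous. Your remark that given $\bD$ the mediators ``depend only on the $\omega$'s'' has the same problem: conditioning on $D_j$ can shift the law of $\omega_j$.

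The displayed identity $\P(\bMmi=\bmmi\mid\bDmi=\bdmi^*,\bX,\bA)=\prod_{j\neq i}\P(M_j=m_j\mid D_j=d_j^*,\bX,\bA)$ still holds. Each pair $(M_j,D_j)$, $j\neq i$, is a function of $(\omega_j,\nu_j)$ given $\bX,\bA$, so the pairs are conditionally independent across $j$. The conditional probability is therefore a ratio of products, $\prod_{j\neq i}\P(M_j=m_j,D_j=d_j^*\mid\bX,\bA)\big/\prod_{j\neq i}\P(D_j=d_j^*\mid\bX,\bA)$, which reduces to the product of conditionals. Alternatively, as the paper does, combine Assumption \ref{ch5:ass:error-term-independence} with (\ref{ch5:eq:spillover-natural-treatment-mediator-independence}) to get $\bMmi(\bdmi^*)\indep\bDmi\mid\bX,\bA$. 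With that replacement the rest of your bookkeeping goes through unchanged.
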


Under the corresponding overlap conditions, AIPW estimators can be constructed from the identification functional above. To this end, define
\begin{align*}
\xi_i^{SN}(d,t^*,\bdmi,m) &\coloneq \sum_{\bmmi\in\{0,1\}^{n-1}} \eta_i^{SN}(d,\bdmi,m,\bmmi) r_i^{SN}(\bmmi\mid t^*), \\
\zeta_i^{SN}(d,t,\bmmi) &\coloneq \sum_{\bdmi:T_i(\bdmi)=t} \sum_{m\in\{0,1\}} \eta_i^{SN}(d,\bdmi,m,\bmmi) \alpha_i^{SN}(\bdmi\mid t) q_i^{SN}(m\mid d),
\end{align*}
and
\begin{gather*}
\rho_i^{SN}(\bmmi\mid d,\bdmi,m) \coloneq \P\!\left(\bMmi=\bmmi \mid D_i=d,\bDmi=\bdmi,M_i=m,\bX,\bA \right), \\
e_i^{SN}(d,t) \coloneq \P(D_i=d,T_i=t\mid\bX,\bA), \\
p_i^T(t) \coloneq \P(T_i=t\mid\bX,\bA).
\end{gather*}
By the preceding definitions,
\begin{align*}
\mu_i^{SN}(d,t,t^*) ={}& \sum_{\bdmi:T_i(\bdmi)=t} \sum_{m\in\{0,1\}} \xi_i^{SN}(d,t^*,\bdmi,m) \alpha_i^{SN}(\bdmi\mid t) q_i^{SN}(m\mid d) \\
={}& \sum_{\bmmi\in\{0,1\}^{n-1}} \zeta_i^{SN}(d,t,\bmmi) r_i^{SN}(\bmmi\mid t^*).
\end{align*}
Define the individual-level AIPW score
\begin{align*}
\tau_i^{SN}(d,t,t^*) \coloneq & \mu_i^{SN}(d,t,t^*) + \frac{\1\{D_i=d,T_i=t\}}{e_i^{SN}(d,t)}\big[\xi_i^{SN}(d,t^*,\bDmi,M_i)-\mu_i^{SN}(d,t,t^*)\big]\\
& + \frac{\1\{T_i=t^*\}}{p_i^T(t^*)}\big[\zeta_i^{SN}(d,t,\bMmi)-\mu_i^{SN}(d,t,t^*)\big]\\
& + \frac{\1\{D_i=d,T_i=t\}}{e_i^{SN}(d,t)}\frac{r_i^{SN}(\bMmi\mid t^*)}{\rho_i^{SN}(\bMmi\mid d,\bDmi,M_i)}\big[Y_i-\eta_i^{SN}(d,\bDmi,M_i,\bMmi)\big].
\end{align*}
By iterated expectations, one can verify that
\[
\E\!\left[\tau_i^{SN}(d,t,t^*) \mid \bX,\bA \right] = \mu_i^{SN}(d,t,t^*).
\]
Let $\widehat{\tau}_i^{SN}(d,t,t^*)$ denote the score obtained by replacing all nuisance functions in the preceding display with their corresponding estimators. The AIPW estimators of the spillover natural direct and indirect effects are then, respectively,
\begin{gather*}
\widehat{SNDE}(d,t_1,t_0,t^*) \coloneq \frac{1}{j_n}\sum_{i\in\mathcal{J}_n} \big[\widehat{\tau}_i^{SN}(d,t_1,t^*) - \widehat{\tau}_i^{SN}(d,t_0,t^*)\big], \\
\widehat{SNIE}(d,t,t_1^*,t_0^*) \coloneq \frac{1}{j_n}\sum_{i\in\mathcal{J}_n} \big[\widehat{\tau}_i^{SN}(d,t,t_1^*) - \widehat{\tau}_i^{SN}(d,t,t_0^*)\big].
\end{gather*}
It is important to note that $\eta_i^{SN}$, $\rho_i^{SN}$, and $r_i^{SN}$ involve the full vectors $\bDmi$ or $\bMmi$, so the identification formula and estimator above involve high-dimensional joint distributions. In large networks, implementation therefore requires suitable structured modeling or dimension-reduction restrictions on these high-dimensional nuisance functions.

\subsection{Spillover interventional direct/indirect effects}
Identification of the spillover natural direct and indirect effects relies on cross-world independence. To avoid this condition and obtain causal estimands based on a low-dimensional mediator mapping, let $\mathcal{S}_i$ denote the support of $S_i$. For any $t$, let the auxiliary random variable $H_i(t)$ satisfy
\[
H_i(t)\mid\bX,\bA \stackrel{d}{=} S_i\mid T_i=t,\bX,\bA,
\]
and suppose that, conditional on $\bX,\bA$, $H_i(t)$ is independent of all other variables. Thus, $H_i(t)$ is an independent stochastic mediator intervention drawn from the conditional distribution of the mediator mapping under treatment exposure $t$. Define
\begin{equation}
\begin{aligned}
Y_i^{SI}(d_i,t,t^*) \coloneq &\E[Y_i^C(d_i,t,M_i(d_i),H_i(t^*)) \mid \bX,\bA] \\
= & \sum_{m\in\{0,1\}} \sum_{s\in\mathcal{S}_i} Y_i^C(d_i,t,m,s)\P(M_i(d_i)=m\mid\bX,\bA) \P(H_i(t^*)=s\mid\bX,\bA).
\end{aligned}
\end{equation}
Correspondingly, define
\begin{gather*}
SIDE(d,t_1,t_0,t^*) \coloneq \frac{1}{j_n}\sum_{i\in\mathcal{J}_n} \big[Y_i^{SI}(d,t_1,t^*) - Y_i^{SI}(d,t_0,t^*)\big], \\
SIIE(d,t,t_1^*,t_0^*) \coloneq \frac{1}{j_n}\sum_{i\in\mathcal{J}_n} \big[Y_i^{SI}(d,t,t_1^*) - Y_i^{SI}(d,t,t_0^*)\big].
\end{gather*}
Here, $SIDE(d,t_1,t_0,t^*)$ compares treatment exposures $t_1$ and $t_0$ for other individuals while holding the distribution of the stochastic mediator intervention fixed at that associated with treatment exposure $t^*$; $SIIE(d,t,t_1^*,t_0^*)$ measures the average effect of changing the stochastic mediator-intervention distribution from that associated with $t_0^*$ to that associated with $t_1^*$ while holding the treatment exposure of other individuals fixed at $t$.

\begin{assumption}
\label{ch5:ass:spillover-interventional-effects-independence}
For any $i\in\mathcal{N}_n$, $d_i,m_i\in\{0,1\}$, $\bdmi\in\{0,1\}^{n-1}$, and $\bmmi\in\{0,1\}^{n-1}$,
\begin{gather}
Y_i(d_i,\bdmi,m_i,\bmmi) \indep \bD \mid \bX,\bA, \label{ch5:eq:spillover-interventional-treatment-outcome-independence} \\
Y_i(d_i,\bdmi,m_i,\bmmi) \indep \bM \mid \bD,\bX,\bA, \label{ch5:eq:spillover-interventional-mediator-outcome-independence} \\
M_i(d_i) \indep D_i \mid \bX,\bA.
\label{ch5:eq:spillover-interventional-treatment-mediator-independence}
\end{gather}
\end{assumption}

Unlike Assumption \ref{ch5:ass:spillover-natural-effects-independence}, Assumption \ref{ch5:ass:spillover-interventional-effects-independence} does not require cross-world independence. This is because $H_i(t^*)$ is not a natural mediator from another counterfactual world, but an independent draw from the observed distribution $S_i\mid T_i=t^*,\bX,\bA$. Define
\begin{gather*}
\eta_i^{SI}(d,t,s) \coloneq \E(Y_i\mid D_i=d,T_i=t,S_i=s,\bX,\bA), \\
r_i^{SI}(s\mid t) \coloneq \P(S_i=s\mid T_i=t,\bX,\bA), \\
\mu_i^{SI}(d,t,t^*) \coloneq \sum_{s\in\mathcal{S}_i} \eta_i^{SI}(d,t,s) r_i^{SI}(s\mid t^*),
\end{gather*}
and
\begin{gather*}
SIDE^{obs}(d,t_1,t_0,t^*) \coloneq \frac{1}{j_n}\sum_{i\in\mathcal{J}_n} \big[\mu_i^{SI}(d,t_1,t^*) - \mu_i^{SI}(d,t_0,t^*)\big], \\
SIIE^{obs}(d,t,t_1^*,t_0^*) \coloneq \frac{1}{j_n}\sum_{i\in\mathcal{J}_n} \big[\mu_i^{SI}(d,t,t_1^*) - \mu_i^{SI}(d,t,t_0^*)\big].
\end{gather*}

\begin{theorem}
\label{ch5:thm:spillover-interventional-effects-identification}
Under Assumptions \ref{ch5:ass:consistency}, \ref{ch5:ass:error-term-independence}, and \ref{ch5:ass:spillover-interventional-effects-independence}, for any $i,d,t,t^*$,
\[
\mu_i^{SI}(d,t,t^*) = Y_i^{SI}(d,t,t^*).
\]
Consequently,
\begin{gather*}
SIDE^{obs}(d,t_1,t_0,t^*) = SIDE(d,t_1,t_0,t^*), \\
SIIE^{obs}(d,t,t_1^*,t_0^*) = SIIE(d,t,t_1^*,t_0^*).
\end{gather*}
\end{theorem}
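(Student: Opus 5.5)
The plan is to reduce the statement to the already-established identification of the controlled quantities, Theorem \ref{ch5:thm:spillover-cde-identification}. That theorem gives $Y_i^C(d,t,m,s)=\mu_i^C(d,t,m,s)$ for every $(d,t,m,s)$, and its assumptions are exactly conditions (\ref{ch5:eq:spillover-interventional-treatment-outcome-independence})--(\ref{ch5:eq:spillover-interventional-mediator-outcome-independence}) together with Assumptions \ref{ch5:ass:consistency} and \ref{ch5:ass:error-term-independence}. Using the second line of the definition of $Y_i^{SI}$, we write
\[
Y_i^{SI}(d,t,t^*)=\sum_{m}\sum_{s\in\mathcal{S}_i}\mu_i^C(d,t,m,s)\,\P(M_i(d)=m\mid\bX,\bA)\,\P(S_i=s\mid T_i=t^*,\bX,\bA),
\]
where the last factor uses $H_i(t^*)\mid\bX,\bA\stackrel{d}{=}S_i\mid T_i=t^*,\bX,\bA$. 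The factorization in that line relies on $G$-type auxiliary variables being independent of everything given $(\bX,\bA)$, which the definition grants. The remaining work is to show that this double sum equals $\mu_i^{SI}(d,t,t^*)=\sum_s\eta_i^{SI}(d,t,s)\,r_i^{SI}(s\mid t^*)$. So it suffices to prove, for each $s$,
\[
\sum_m \mu_i^C(d,t,m,s)\,\P(M_i(d)=m\mid\bX,\bA)=\eta_i^{SI}(d,t,s).
\]

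First, by (\ref{ch5:eq:spillover-interventional-treatment-mediator-independence}) and consistency, $\P(M_i(d)=m\mid\bX,\bA)=\P(M_i=m\mid D_i=d,\bX,\bA)$. Next, by iterated expectations,
\[
\eta_i^{SI}(d,t,s)=\sum_m \mu_i^C(d,t,m,s)\,\P(M_i=m\mid D_i=d,T_i=t,S_i=s,\bX,\bA).
\]
The identity therefore reduces to the conditional independence
\[
M_i\indep (T_i,S_i)\mid D_i,\bX,\bA .
\]
I would derive this from the structural model and Assumption \ref{ch5:ass:error-term-independence}. Given $(\bX,\bA)$, $M_i=g_n(i,D_i,\bX,\bA,\omega_i)$ is a function of $(\nu_i,\omega_i)$. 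Meanwhile $T_i$ is a function of $(\nu_j)_{j\ne i}$, and $S_i$ is a function of $(\nu_j,\omega_j)_{j\ne i}$, because $M_j=g_n(j,h_n(j,\bX,\bA,\nu_j),\bX,\bA,\omega_j)$. Mutual independence across units then gives $(\nu_i,\omega_i)\indep(T_i,S_i)\mid\bX,\bA$, and conditioning additionally on $D_i$ (a function of $\nu_i$) preserves this. This is the same argument the paper already invokes for $M_i\indep T_i\mid D_i,\bX,\bA$, extended to include $S_i$.

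The main obstacle is that, with the mediator conditional at $(D_i=d,T_i=t,S_i=s)$ rather than $(D_i=d)$, the weight in $\eta_i^{SI}$ must match the one produced by $Y_i^{SI}$. The step above resolves this precisely because $S_i$ depends only on others' errors. One also needs positivity so that all conditional expectations are well defined; I would assume it implicitly, as the paper does. Finally, averaging the unit-level identity over $i\in\mathcal{J}_n$ and taking the stated differences yields $SIDE^{obs}=SIDE$ and $SIIE^{obs}=SIIE$.
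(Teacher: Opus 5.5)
Your proposal is correct and follows essentially the same route as the paper's proof. Both arguments use the controlled-effect identification $\mu_i^C(d,t,m,s)=Y_i^C(d,t,m,s)$, then iterated expectations over $M_i$ inside $\eta_i^{SI}(d,t,s)$, then the conditional independence $M_i \indep (T_i,S_i)\mid D_i,\bX,\bA$ (which follows from Assumption~\ref{ch5:ass:error-term-independence} and the structural model) combined with consistency and the treatment--mediator condition, and finally the distributional definition of $H_i(t^*)$. Your explicit weak-union derivation of that conditional independence from the error terms simply spells out a step the paper states briefly.
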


Under the corresponding overlap conditions, further define
\begin{gather*}
\rho_i^{SI}(s\mid d,t) \coloneq \P(S_i=s\mid D_i=d,T_i=t,\bX,\bA), \\
e_i^{SI}(d,t) \coloneq \P(D_i=d,T_i=t\mid\bX,\bA), \\
p_i^T(t) \coloneq \P(T_i=t\mid\bX,\bA).
\end{gather*}
By Assumption \ref{ch5:ass:error-term-independence} and the structural model in this paper,
\[
\rho_i^{SI}(s\mid d,t) = r_i^{SI}(s\mid t).
\]
Define the individual-level AIPW score
\begin{align*}
\tau_i^{SI}(d,t,t^*) \coloneq & \mu_i^{SI}(d,t,t^*) + \frac{\1\{T_i=t^*\}}{p_i^T(t^*)} \big[\eta_i^{SI}(d,t,S_i) - \mu_i^{SI}(d,t,t^*) \big] \\
& + \frac{\1\{D_i=d,T_i=t\}}{e_i^{SI}(d,t)} \frac{r_i^{SI}(S_i\mid t^*)} {\rho_i^{SI}(S_i\mid d,t)} \big[Y_i-\eta_i^{SI}(d,t,S_i) \big].
\end{align*}
By iterated expectations, one can verify that
\[
\E\!\left[\tau_i^{SI}(d,t,t^*) \mid \bX,\bA \right] = \mu_i^{SI}(d,t,t^*).
\]
Let $\widehat{\tau}_i^{SI}(d,t,t^*)$ denote the score obtained by replacing all nuisance functions in the preceding display with their corresponding estimators. The AIPW estimators of the spillover interventional direct and indirect effects are then, respectively,
\begin{gather*}
\widehat{SIDE}(d,t_1,t_0,t^*) \coloneq \frac{1}{j_n}\sum_{i\in\mathcal{J}_n} \big[\widehat{\tau}_i^{SI}(d,t_1,t^*) - \widehat{\tau}_i^{SI}(d,t_0,t^*)\big], \\
\widehat{SIIE}(d,t,t_1^*,t_0^*) \coloneq \frac{1}{j_n}\sum_{i\in\mathcal{J}_n} \big[\widehat{\tau}_i^{SI}(d,t,t_1^*) - \widehat{\tau}_i^{SI}(d,t,t_0^*)\big].
\end{gather*}

\subsection{Asymptotic properties of the preceding estimators}
Each estimator above is a linear combination of finitely many individual-level AIPW scores, so their asymptotic theory can be stated in a unified form. Let
\[
\mathcal{Q} \coloneq \{\mathrm{OIDE},\mathrm{OIIE},\mathrm{SCDE}, \mathrm{SNDE},\mathrm{SNIE},\mathrm{SIDE},\mathrm{SIIE}\}.
\]
For any $Q\in\mathcal{Q}$ and any fixed treatment, mediator, and exposure levels, let $\theta_Q$ and $\widehat{\theta}_Q$ denote the corresponding causal estimand and estimator, respectively, and write
\[
\widehat{\theta}_Q = \frac{1}{j_n} \sum_{i\in\mathcal{J}_n} \widehat{\Delta}_{i,Q},
\]
where $\widehat{\Delta}_{i,Q}$ denotes the individual-level score contrast appearing in brackets in the corresponding estimator. Let $\Delta_{i,Q}$ be the score contrast obtained by replacing all estimated nuisance functions with their true values, and define
\[
m_{i,Q} \coloneq \E(\Delta_{i,Q}\mid\bX,\bA), \qquad \varphi_{i,Q} \coloneq \Delta_{i,Q}-m_{i,Q},
\]
and
\[
\sigma_{n,Q}^2 \coloneq \Var\bigg(\frac{1}{\sqrt{j_n}} \sum_{i\in\mathcal{J}_n} \varphi_{i,Q} \mid \bX,\bA \bigg).
\]
By the identification results above,
\[
\theta_Q = \frac{1}{j_n} \sum_{i\in\mathcal{J}_n} m_{i,Q}.
\]

\begin{theorem}
\label{ch5:thm:other-effects-asymptotic-normality}
Suppose that the corresponding identification conditions hold and that the effect-specific analogues of the moment, overlap, approximate-neighborhood-interference, nuisance-function convergence-rate, and $\psi$-dependence conditions used in Theorem \ref{ch5:thm:own-cde-asymptotic-normality} or Theorem \ref{ch5:thm:own-nde-asymptotic-normality} hold, so that
\[
\frac{1}{\sqrt{j_n}} \sum_{i\in\mathcal{J}_n} \big(\widehat{\Delta}_{i,Q} - \Delta_{i,Q} \big) = o_p(1), \qquad \liminf_{n\to\infty} \sigma_{n,Q}^2 > 0.
\]
Then, conditional on $(\bX,\bA)$, for any $Q\in\mathcal{Q}$ and any fixed treatment, mediator, and exposure levels,
\[
\sigma_{n,Q}^{-1} \sqrt{j_n} \big(\widehat{\theta}_Q-\theta_Q \big) \xrightarrow{d} \mathcal{N}(0,1).
\]
\end{theorem}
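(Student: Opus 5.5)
The plan is to reduce the statement to a network central limit theorem for the oracle scores, since the nuisance remainder is assumed negligible. Decompose
\[
\sqrt{j_n}\big(\widehat{\theta}_Q-\theta_Q\big) = \frac{1}{\sqrt{j_n}}\sum_{i\in\mathcal{J}_n}\big(\widehat{\Delta}_{i,Q}-\Delta_{i,Q}\big) + \frac{1}{\sqrt{j_n}}\sum_{i\in\mathcal{J}_n}\varphi_{i,Q},
\]
using $\theta_Q = j_n^{-1}\sum_{i\in\mathcal{J}_n} m_{i,Q}$, which follows from the relevant identification result: Theorem \ref{ch5:thm:own-interventional-effects-identification}, \ref{ch5:thm:spillover-cde-identification}, \ref{ch5:thm:spillover-natural-effects-identification} or \ref{ch5:thm:spillover-interventional-effects-identification}. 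It also uses the unbiasedness $\E[\tau_i\mid\bX,\bA]=\mu_i$ of each score stated after its definition. The first term is $o_p(1)$ by hypothesis. Because $\liminf\sigma_{n,Q}^2>0$, multiplying by $\sigma_{n,Q}^{-1}$ keeps it $o_p(1)$. Slutsky's lemma then reduces the claim to showing that $\sigma_{n,Q}^{-1} j_n^{-1/2}\sum_{i\in\mathcal{J}_n}\varphi_{i,Q}\xrightarrow{d}\mathcal{N}(0,1)$ conditional on $(\bX,\bA)$.

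For this, I would follow the argument behind Theorems \ref{ch5:thm:own-cde-asymptotic-normality} and \ref{ch5:thm:own-nde-asymptotic-normality} and apply the central limit theorem of \citet{kojevnikov2021limit} for $\psi$-dependent arrays. There are three inputs.

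\textbf{Moments.} Each $\varphi_{i,Q}$ is a finite linear combination of oracle scores. By the overlap bounds, inverse weights and density ratios are bounded. Together with the $p$th-moment (or boundedness) condition on potential outcomes, this gives $\sup_i\E[|\varphi_{i,Q}|^p\mid\bX,\bA]<\infty$ for some $p>4$.

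\textbf{Dependence.} I need to verify that $\{\varphi_{i,Q}\}$ is $\psi$-dependent with coefficient $\psi_n(r)$ equal to $1$ for $r\le 4K_0$ and $\gamma_n(r/4)$ beyond that. Take $i,j$ at distance $r$. Approximate $Y_i$ by the local $f_{n,\lfloor r/4\rfloor}$ from Assumption \ref{ch5:ass:approximate-neighborhood-interference}. Every other ingredient of the score is already a function of variables in a fixed neighborhood: $D_i$, $M_i$, $T_i$, $S_i$ and the conditioning events, by the locality of the exposure and mediator mappings. By Assumption \ref{ch5:ass:error-term-independence} and the structural models for $D$ and $M$, the local versions for $i$ and $j$ are built from disjoint sets of independent primitives $(\varepsilon_k,\omega_k,\nu_k)$ once $r>4K_0$, so they are conditionally independent. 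The covariance of bounded Lipschitz transformations is then controlled by the approximation error, which is of order $\gamma_n(r/4)$ after a truncation argument using the $p$th moments.

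\textbf{Rate conditions.} The effect-specific $\psi$-dependence assumptions supply condition ND of \citet{kojevnikov2021limit}, and the CLT follows.

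The main obstacle is the dependence step for the spillover natural-effect scores. There $\eta_i^{SN}$, $\rho_i^{SN}$ and $r_i^{SN}$ depend on the full vectors $\bDmi$ and $\bMmi$, so the score is not a local function of the data, and the locality used above fails. I would handle this by making the approximate-neighborhood-interference analogue effect-specific: I would require that these nuisance functionals, evaluated at the realized data, admit $r$-local approximations with error $\gamma_n(r)$. This is exactly what the hypothesis's ``effect-specific analogues'' licenses. With that in place, the same coupling argument goes through unchanged, and the theorem follows uniformly over $Q\in\mathcal{Q}$.
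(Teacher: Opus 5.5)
Your proposal is correct and follows essentially the paper's route. You split $\sqrt{j_n}(\widehat{\theta}_Q-\theta_Q)$ exactly into the assumed-negligible remainder plus the oracle sum of the conditionally centered $\varphi_{i,Q}$, establish conditional $\psi$-dependence through local exposure mappings, error independence and the approximate-neighborhood-interference coupling, and then invoke the network central limit theorem of \citet{kojevnikov2021limit} and Slutsky's lemma. The paper only sketches this argument: OIDE/OIIE and SCDE inherit Theorems \ref{ch5:thm:own-nde-asymptotic-normality} and \ref{ch5:thm:own-cde-asymptotic-normality} directly, and, as you do, it imposes extra structured or dimension-reduction conditions on the non-local SN nuisance functions.
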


Let $\widehat{m}_{i,Q}$ denote the conditional mean contrast obtained by replacing the nuisance functions in $m_{i,Q}$ with their corresponding estimators. Replace the residual vector in (\ref{ch5:eq:own-cde-hac-variance-estimator}) by
\[
\widetilde{\boldsymbol{\Delta}}_Q \coloneq \big(\widehat{\Delta}_{i,Q} - \widehat{m}_{i,Q} \big)_{i\in\mathcal{J}_n},
\]
and denote the resulting network HAC variance estimator by $\widehat{\sigma}_Q^2$. Under the effect-specific analogues of the network HAC conditions in Assumption \ref{ch5:ass:own-cde-hac-consistency} or Assumption \ref{ch5:ass:own-natural-effects-hac-consistency}, we have
\[
\widehat{\sigma}_Q^2 = \sigma_{n,Q}^2 + o_p(1),
\]
and therefore
\[
\widehat{\theta}_Q \pm z_{1-\alpha/2} \frac{\widehat{\sigma}_Q}{\sqrt{j_n}}
\]
is an asymptotic $(1-\alpha)$ confidence interval. Because $\widehat{\tau}_i^I(d,d^*,t)=\widehat{\tau}_i^N(d,d^*,t)$, $OIDE$ and $OIIE$ directly inherit the asymptotic theory for the own natural-effect estimators. Likewise, $SCDE$, being a contrast of two $\widehat{\tau}_i^C(d,t,m,s)$ scores, directly inherits the theory for the own controlled-direct-effect estimator. For $SNDE$, $SNIE$, $SIDE$, and $SIIE$, the same linearization and network central-limit-theorem arguments are applied to $\widehat{\tau}_i^{SN}$ and $\widehat{\tau}_i^{SI}$, respectively. In particular, because $SNDE$ and $SNIE$ involve high-dimensional joint distributions of $\bDmi$ and $\bMmi$, their asymptotic results additionally require these high-dimensional nuisance functions to satisfy the preceding overlap, convergence-rate, and weak-dependence conditions under the structured or dimension-reduction model employed.

\section{Supplementary Simulation Experiments}
\label{ch5:subsec:supplementary-simulations}
This section uses the same data-generating processes, nuisance-function estimators, and network HAC variance estimators as in the main text to further examine the finite-sample performance of the own natural indirect effect, the spillover controlled direct effect, and the spillover interventional direct and indirect effects. Tables \ref{ch5:tab:supplementary-random-geometric-nie-scde} and \ref{ch5:tab:supplementary-erdos-renyi-nie-scde} report the results for $ONIE$ and $SCDE$ under random geometric and Erd\H{o}s--R\'enyi graphs, respectively, whereas Tables \ref{ch5:tab:supplementary-random-geometric-side-siie} and \ref{ch5:tab:supplementary-erdos-renyi-side-siie} report the corresponding results for $SIDE$ and $SIIE$. The row labeled ``Truth'' reports the average true effect across the 1,000 simulation replications. Est, OCI, CI, and SE are defined as in the main text. Because the true values of the four effects considered here vary with $\bX,\bA$, OSE is instead defined as the standard deviation of the estimation error across the 1,000 replications. (When the true value is constant, as in the main-text simulations, the standard deviation of the estimation error is identical to the standard deviation of the 1,000 effect estimates.)

\begin{table}[t!]
\small
\renewcommand{\arraystretch}{0.55}
\setlength{\tabcolsep}{3pt}
\centering
\caption{Simulation results: random geometric graph}
\label{ch5:tab:supplementary-random-geometric-nie-scde}
\resizebox{\textwidth}{!}{
\begin{tabular}{lllccccccccc}
\toprule
& & & \multicolumn{3}{c}{$L=1$} & \multicolumn{3}{c}{$L=2$} & \multicolumn{3}{c}{$L=3$} \\
\cmidrule(lr){4-6} \cmidrule(lr){7-9} \cmidrule(lr){10-12}
Estimand & & $n$ & 1000 & 2500 & 4000 & 1000 & 2500 & 4000 & 1000 & 2500 & 4000 \\
\midrule

\multirow{16}{*}{ONIE} &  & Truth & 0.2762 & 0.2769 & 0.2772 & 0.2762 & 0.2769 & 0.2772 & 0.2762 & 0.2769 & 0.2772 \\
\cmidrule(lr){2-12}
 & \multirow{5}{*}{GNN} & Est & 0.3641 & 0.3781 & 0.3870 & 0.2443 & 0.2675 & 0.2755 & 0.2427 & 0.2658 & 0.2736 \\
 &  & OCI & 0.9490 & 0.8090 & 0.6540 & 0.9640 & 0.9460 & 0.9460 & 0.9470 & 0.9410 & 0.9460 \\
 &  & OSE & 0.1517 & 0.0875 & 0.0666 & 0.1252 & 0.0698 & 0.0491 & 0.1216 & 0.0629 & 0.0523 \\
 &  & CI & 0.7940 & 0.7170 & 0.5680 & 0.7370 & 0.9090 & 0.9540 & 0.8170 & 0.9130 & 0.9370 \\
 &  & SE & 0.1093 & 0.0817 & 0.0650 & 0.0714 & 0.0590 & 0.0466 & 0.0855 & 0.0563 & 0.0467 \\
\cmidrule(lr){2-12}
 & \multirow{5}{*}{MLP} & Est & 0.4049 & 0.4001 & 0.4063 & 0.4047 & 0.3990 & 0.4061 & 0.4085 & 0.3997 & 0.4033 \\
 &  & OCI & 0.7530 & 0.7120 & 0.4880 & 0.8640 & 0.7620 & 0.5430 & 0.8730 & 0.7670 & 0.5480 \\
 &  & OSE & 0.0970 & 0.0800 & 0.0639 & 0.1266 & 0.0890 & 0.0687 & 0.1390 & 0.0897 & 0.0674 \\
 &  & CI & 0.6700 & 0.5790 & 0.4120 & 0.7490 & 0.6320 & 0.4480 & 0.7650 & 0.6060 & 0.4550 \\
 &  & SE & 0.0922 & 0.0742 & 0.0622 & 0.1131 & 0.0816 & 0.0660 & 0.1185 & 0.0828 & 0.0656 \\
\cmidrule(lr){2-12}
 & \multirow{5}{*}{RF} & Est & 0.5708 & 0.5378 & 0.5332 &  &  &  &  &  &  \\
 &  & OCI & 0.3640 & 0.0610 & 0.0040 &  &  &  &  &  &  \\
 &  & OSE & 0.1247 & 0.0733 & 0.0562 &  &  &  &  &  &  \\
 &  & CI & 0.0280 & 0.0000 & 0.0000 &  &  &  &  &  &  \\
 &  & SE & 0.0446 & 0.0276 & 0.0216 &  &  &  &  &  &  \\
\midrule

\multirow{16}{*}{SCDE} &  & Truth & 0.1382 & 0.1345 & 0.1332 & 0.1382 & 0.1345 & 0.1332 & 0.1382 & 0.1345 & 0.1332 \\
\cmidrule(lr){2-12}
 & \multirow{5}{*}{GNN} & Est & 0.3684 & 0.3841 & 0.3769 & 0.1629 & 0.1682 & 0.1579 & 0.1470 & 0.1471 & 0.1392 \\
 &  & OCI & 0.7950 & 0.5200 & 0.3620 & 0.9490 & 0.9340 & 0.9350 & 0.9610 & 0.9410 & 0.9500 \\
 &  & OSE & 0.1949 & 0.1316 & 0.1084 & 0.1498 & 0.1101 & 0.0913 & 0.1623 & 0.1118 & 0.0966 \\
 &  & CI & 0.6280 & 0.4540 & 0.3080 & 0.7610 & 0.8690 & 0.8730 & 0.7820 & 0.8790 & 0.8850 \\
 &  & SE & 0.1543 & 0.1220 & 0.1004 & 0.0945 & 0.0904 & 0.0786 & 0.1035 & 0.0907 & 0.0781 \\
\cmidrule(lr){2-12}
 & \multirow{5}{*}{MLP} & Est & 0.4243 & 0.4126 & 0.3988 & 0.4131 & 0.4054 & 0.3925 & 0.4045 & 0.4023 & 0.3884 \\
 &  & OCI & 0.7280 & 0.4800 & 0.3460 & 0.7670 & 0.5350 & 0.3950 & 0.7810 & 0.5420 & 0.3980 \\
 &  & OSE & 0.2144 & 0.1400 & 0.1143 & 0.2232 & 0.1442 & 0.1179 & 0.2253 & 0.1474 & 0.1176 \\
 &  & CI & 0.6580 & 0.4200 & 0.3010 & 0.6820 & 0.4680 & 0.3260 & 0.7000 & 0.4820 & 0.3440 \\
 &  & SE & 0.1966 & 0.1320 & 0.1063 & 0.2014 & 0.1351 & 0.1082 & 0.2029 & 0.1361 & 0.1082 \\
\cmidrule(lr){2-12}
 & \multirow{5}{*}{RF} & Est & 0.3836 & 0.3524 & 0.3300 &  &  &  &  &  &  \\
 &  & OCI & 0.8250 & 0.7940 & 0.7790 &  &  &  &  &  &  \\
 &  & OSE & 0.2572 & 0.1926 & 0.1663 &  &  &  &  &  &  \\
 &  & CI & 0.3980 & 0.2900 & 0.2490 &  &  &  &  &  &  \\
 &  & SE & 0.0773 & 0.0483 & 0.0378 &  &  &  &  &  &  \\
\bottomrule
\end{tabular}
}
\end{table}

\begin{table}[t!]
\small
\renewcommand{\arraystretch}{0.55}
\setlength{\tabcolsep}{3pt}
\centering
\caption{Simulation results: Erd\H{o}s--R\'enyi graph}
\label{ch5:tab:supplementary-erdos-renyi-nie-scde}
\resizebox{\textwidth}{!}{
\begin{tabular}{lllccccccccc}
\toprule
& & & \multicolumn{3}{c}{$L=1$} & \multicolumn{3}{c}{$L=2$} & \multicolumn{3}{c}{$L=3$} \\
\cmidrule(lr){4-6} \cmidrule(lr){7-9} \cmidrule(lr){10-12}
Estimand & & $n$ & 1000 & 2500 & 4000 & 1000 & 2500 & 4000 & 1000 & 2500 & 4000 \\
\midrule

\multirow{16}{*}{ONIE} &  & Truth & 0.3014 & 0.3017 & 0.3018 & 0.3014 & 0.3017 & 0.3018 & 0.3014 & 0.3017 & 0.3018 \\
\cmidrule(lr){2-12}
 & \multirow{5}{*}{GNN} & Est & 0.3542 & 0.3696 & 0.3730 & 0.2779 & 0.3010 & 0.3029 & 0.2882 & 0.2998 & 0.3053 \\
 &  & OCI & 0.9440 & 0.8230 & 0.7430 & 0.9700 & 0.9480 & 0.9470 & 0.9560 & 0.9450 & 0.9520 \\
 &  & OSE & 0.1324 & 0.0657 & 0.0517 & 0.1409 & 0.0641 & 0.0474 & 0.1217 & 0.0645 & 0.0498 \\
 &  & CI & 0.8680 & 0.7870 & 0.6810 & 0.7570 & 0.9470 & 0.9520 & 0.8470 & 0.9390 & 0.9320 \\
 &  & SE & 0.1009 & 0.0634 & 0.0493 & 0.0776 & 0.0585 & 0.0457 & 0.0869 & 0.0582 & 0.0466 \\
\cmidrule(lr){2-12}
 & \multirow{5}{*}{MLP} & Est & 0.3761 & 0.3760 & 0.3768 & 0.3752 & 0.3742 & 0.3756 & 0.3717 & 0.3737 & 0.3771 \\
 &  & OCI & 0.8430 & 0.6990 & 0.6050 & 0.8530 & 0.7410 & 0.6500 & 0.8860 & 0.7600 & 0.6370 \\
 &  & OSE & 0.0772 & 0.0513 & 0.0442 & 0.0861 & 0.0542 & 0.0460 & 0.0884 & 0.0554 & 0.0458 \\
 &  & CI & 0.7850 & 0.6900 & 0.5940 & 0.8260 & 0.7370 & 0.6230 & 0.8490 & 0.7390 & 0.5990 \\
 &  & SE & 0.0695 & 0.0522 & 0.0443 & 0.0787 & 0.0548 & 0.0458 & 0.0812 & 0.0561 & 0.0455 \\
\cmidrule(lr){2-12}
 & \multirow{5}{*}{RF} & Est & 0.4715 & 0.4485 & 0.4399 &  &  &  &  &  &  \\
 &  & OCI & 0.2910 & 0.0570 & 0.0210 &  &  &  &  &  &  \\
 &  & OSE & 0.0681 & 0.0419 & 0.0351 &  &  &  &  &  &  \\
 &  & CI & 0.0670 & 0.0070 & 0.0020 &  &  &  &  &  &  \\
 &  & SE & 0.0383 & 0.0236 & 0.0187 &  &  &  &  &  &  \\
\midrule

\multirow{16}{*}{SCDE} &  & Truth & 0.0350 & 0.0343 & 0.0342 & 0.0350 & 0.0343 & 0.0342 & 0.0350 & 0.0343 & 0.0342 \\
\cmidrule(lr){2-12}
 & \multirow{5}{*}{GNN} & Est & 0.0763 & 0.0839 & 0.0776 & 0.0377 & 0.0386 & 0.0357 & 0.0470 & 0.0477 & 0.0364 \\
 &  & OCI & 0.9460 & 0.9300 & 0.9260 & 0.9470 & 0.9430 & 0.9470 & 0.9460 & 0.9410 & 0.9530 \\
 &  & OSE & 0.1862 & 0.1258 & 0.0987 & 0.1567 & 0.1108 & 0.0919 & 0.1621 & 0.1149 & 0.0952 \\
 &  & CI & 0.9060 & 0.9110 & 0.9190 & 0.8880 & 0.9250 & 0.9460 & 0.8920 & 0.9230 & 0.9450 \\
 &  & SE & 0.1602 & 0.1185 & 0.0970 & 0.1266 & 0.1056 & 0.0899 & 0.1344 & 0.1072 & 0.0906 \\
\cmidrule(lr){2-12}
 & \multirow{5}{*}{MLP} & Est & 0.1002 & 0.0947 & 0.0798 & 0.0833 & 0.0877 & 0.0741 & 0.0836 & 0.0849 & 0.0756 \\
 &  & OCI & 0.9410 & 0.9170 & 0.9260 & 0.9440 & 0.9280 & 0.9380 & 0.9350 & 0.9280 & 0.9360 \\
 &  & OSE & 0.1955 & 0.1304 & 0.1024 & 0.2003 & 0.1348 & 0.1043 & 0.2045 & 0.1334 & 0.1034 \\
 &  & CI & 0.9170 & 0.8920 & 0.9050 & 0.9260 & 0.9080 & 0.9230 & 0.9170 & 0.9120 & 0.9220 \\
 &  & SE & 0.1818 & 0.1229 & 0.0990 & 0.1876 & 0.1256 & 0.1012 & 0.1873 & 0.1259 & 0.1007 \\
\cmidrule(lr){2-12}
 & \multirow{5}{*}{RF} & Est & 0.0611 & 0.0660 & 0.0618 &  &  &  &  &  &  \\
 &  & OCI & 0.9470 & 0.9150 & 0.9060 &  &  &  &  &  &  \\
 &  & OSE & 0.0909 & 0.0586 & 0.0480 &  &  &  &  &  &  \\
 &  & CI & 0.8470 & 0.8030 & 0.7800 &  &  &  &  &  &  \\
 &  & SE & 0.0676 & 0.0416 & 0.0330 &  &  &  &  &  &  \\
\bottomrule
\end{tabular}
}
\end{table}

\begin{table}[t!]
\small
\renewcommand{\arraystretch}{0.55}
\setlength{\tabcolsep}{3pt}
\centering
\caption{Simulation results: random geometric graph}
\label{ch5:tab:supplementary-random-geometric-side-siie}
\resizebox{\textwidth}{!}{
\begin{tabular}{lllccccccccc}
\toprule
& & & \multicolumn{3}{c}{$L=1$} & \multicolumn{3}{c}{$L=2$} & \multicolumn{3}{c}{$L=3$} \\
\cmidrule(lr){4-6} \cmidrule(lr){7-9} \cmidrule(lr){10-12}
Estimand & & $n$ & 1000 & 2500 & 4000 & 1000 & 2500 & 4000 & 1000 & 2500 & 4000 \\
\midrule

\multirow{16}{*}{SIDE} &  & Truth & 0.1611 & 0.1569 & 0.1553 & 0.1611 & 0.1569 & 0.1553 & 0.1611 & 0.1569 & 0.1553 \\
\cmidrule(lr){2-12}
 & \multirow{5}{*}{GNN} & Est & 0.4778 & 0.4793 & 0.4719 & 0.1834 & 0.1960 & 0.1903 & 0.1531 & 0.1604 & 0.1586 \\
 &  & OCI & 0.7530 & 0.3550 & 0.1940 & 0.9500 & 0.9410 & 0.9390 & 0.9540 & 0.9490 & 0.9460 \\
 &  & OSE & 0.2306 & 0.1400 & 0.1140 & 0.1805 & 0.1153 & 0.0991 & 0.2172 & 0.1331 & 0.1095 \\
 &  & CI & 0.5190 & 0.3090 & 0.1660 & 0.7880 & 0.8710 & 0.8780 & 0.8240 & 0.8710 & 0.8740 \\
 &  & SE & 0.1837 & 0.1322 & 0.1064 & 0.1163 & 0.0976 & 0.0827 & 0.1395 & 0.1017 & 0.0841 \\
\cmidrule(lr){2-12}
 & \multirow{5}{*}{MLP} & Est & 0.5131 & 0.5090 & 0.5026 & 0.5119 & 0.5050 & 0.4986 & 0.5110 & 0.4974 & 0.4944 \\
 &  & OCI & 0.6090 & 0.2850 & 0.1340 & 0.6780 & 0.3510 & 0.1630 & 0.7150 & 0.3730 & 0.1810 \\
 &  & OSE & 0.2101 & 0.1412 & 0.1133 & 0.2350 & 0.1495 & 0.1181 & 0.2424 & 0.1511 & 0.1197 \\
 &  & CI & 0.5180 & 0.2800 & 0.1260 & 0.6020 & 0.3290 & 0.1510 & 0.6440 & 0.3480 & 0.1640 \\
 &  & SE & 0.1991 & 0.1406 & 0.1116 & 0.2257 & 0.1475 & 0.1151 & 0.2362 & 0.1497 & 0.1152 \\
\cmidrule(lr){2-12}
 & \multirow{5}{*}{RF} & Est & 0.5284 & 0.4787 & 0.4545 &  &  &  &  &  &  \\
 &  & OCI & 0.7530 & 0.6920 & 0.6270 &  &  &  &  &  &  \\
 &  & OSE & 0.2859 & 0.2065 & 0.1657 &  &  &  &  &  &  \\
 &  & CI & 0.2850 & 0.1200 & 0.0450 &  &  &  &  &  &  \\
 &  & SE & 0.0829 & 0.0526 & 0.0411 &  &  &  &  &  &  \\
\midrule

\multirow{16}{*}{SIIE} &  & Truth & 0.0191 & 0.0185 & 0.0183 & 0.0191 & 0.0185 & 0.0183 & 0.0191 & 0.0185 & 0.0183 \\
\cmidrule(lr){2-12}
 & \multirow{5}{*}{GNN} & Est & 0.1387 & 0.1588 & 0.1657 & 0.0145 & 0.0173 & 0.0212 & 0.0055 & 0.0079 & 0.0124 \\
 &  & OCI & 0.9110 & 0.5490 & 0.2880 & 0.9630 & 0.9600 & 0.9410 & 0.9670 & 0.9500 & 0.9530 \\
 &  & OSE & 0.1409 & 0.0734 & 0.0598 & 0.0633 & 0.0405 & 0.0294 & 0.0854 & 0.0429 & 0.0339 \\
 &  & CI & 0.6190 & 0.4170 & 0.2120 & 0.8590 & 0.9410 & 0.9370 & 0.8790 & 0.9130 & 0.9100 \\
 &  & SE & 0.1006 & 0.0692 & 0.0543 & 0.0392 & 0.0340 & 0.0271 & 0.0491 & 0.0329 & 0.0261 \\
\cmidrule(lr){2-12}
 & \multirow{5}{*}{MLP} & Est & 0.2137 & 0.2067 & 0.2049 & 0.2075 & 0.2015 & 0.2009 & 0.1997 & 0.2037 & 0.1973 \\
 &  & OCI & 0.6860 & 0.2880 & 0.1210 & 0.8070 & 0.4300 & 0.1920 & 0.8570 & 0.5000 & 0.2020 \\
 &  & OSE & 0.1212 & 0.0744 & 0.0607 & 0.1515 & 0.0842 & 0.0655 & 0.1669 & 0.0921 & 0.0667 \\
 &  & CI & 0.4730 & 0.2830 & 0.1150 & 0.6510 & 0.3610 & 0.1830 & 0.6700 & 0.3660 & 0.1820 \\
 &  & SE & 0.1053 & 0.0766 & 0.0590 & 0.1358 & 0.0843 & 0.0636 & 0.1455 & 0.0866 & 0.0637 \\
\cmidrule(lr){2-12}
 & \multirow{5}{*}{RF} & Est & 0.3516 & 0.3480 & 0.3433 &  &  &  &  &  &  \\
 &  & OCI & 0.3810 & 0.1210 & 0.0370 &  &  &  &  &  &  \\
 &  & OSE & 0.1466 & 0.1019 & 0.0817 &  &  &  &  &  &  \\
 &  & CI & 0.0380 & 0.0020 & 0.0000 &  &  &  &  &  &  \\
 &  & SE & 0.0478 & 0.0307 & 0.0241 &  &  &  &  &  &  \\
\bottomrule
\end{tabular}
}
\end{table}

\begin{table}[t!]
\small
\renewcommand{\arraystretch}{0.55}
\setlength{\tabcolsep}{3pt}
\centering
\caption{Simulation results: Erd\H{o}s--R\'enyi graph}
\label{ch5:tab:supplementary-erdos-renyi-side-siie}
\resizebox{\textwidth}{!}{
\begin{tabular}{lllccccccccc}
\toprule
& & & \multicolumn{3}{c}{$L=1$} & \multicolumn{3}{c}{$L=2$} & \multicolumn{3}{c}{$L=3$} \\
\cmidrule(lr){4-6} \cmidrule(lr){7-9} \cmidrule(lr){10-12}
Estimand & & $n$ & 1000 & 2500 & 4000 & 1000 & 2500 & 4000 & 1000 & 2500 & 4000 \\
\midrule

\multirow{16}{*}{SIDE} &  & Truth & 0.0383 & 0.0374 & 0.0373 & 0.0383 & 0.0374 & 0.0373 & 0.0383 & 0.0374 & 0.0373 \\
\cmidrule(lr){2-12}
 & \multirow{5}{*}{GNN} & Est & 0.0966 & 0.0962 & 0.0927 & 0.0439 & 0.0444 & 0.0439 & 0.0462 & 0.0444 & 0.0419 \\
 &  & OCI & 0.9360 & 0.9050 & 0.8790 & 0.9550 & 0.9450 & 0.9440 & 0.9510 & 0.9510 & 0.9480 \\
 &  & OSE & 0.1773 & 0.1082 & 0.0795 & 0.1756 & 0.1069 & 0.0820 & 0.1924 & 0.1129 & 0.0923 \\
 &  & CI & 0.9070 & 0.8960 & 0.9000 & 0.8740 & 0.9340 & 0.9530 & 0.9060 & 0.9380 & 0.9370 \\
 &  & SE & 0.1598 & 0.1041 & 0.0837 & 0.1343 & 0.1002 & 0.0812 & 0.1497 & 0.1033 & 0.0842 \\
\cmidrule(lr){2-12}
 & \multirow{5}{*}{MLP} & Est & 0.0994 & 0.0960 & 0.0934 & 0.1004 & 0.0970 & 0.0936 & 0.1022 & 0.0932 & 0.0932 \\
 &  & OCI & 0.9310 & 0.9110 & 0.8910 & 0.9350 & 0.9110 & 0.8890 & 0.9290 & 0.9230 & 0.8780 \\
 &  & OSE & 0.1496 & 0.1014 & 0.0796 & 0.1624 & 0.1024 & 0.0799 & 0.1643 & 0.1046 & 0.0792 \\
 &  & CI & 0.9220 & 0.9040 & 0.8940 & 0.9190 & 0.9130 & 0.8960 & 0.9270 & 0.9040 & 0.8890 \\
 &  & SE & 0.1495 & 0.0996 & 0.0815 & 0.1564 & 0.1016 & 0.0822 & 0.1611 & 0.1021 & 0.0816 \\
\cmidrule(lr){2-12}
 & \multirow{5}{*}{RF} & Est & 0.1022 & 0.0956 & 0.0923 &  &  &  &  &  &  \\
 &  & OCI & 0.8720 & 0.8250 & 0.7590 &  &  &  &  &  &  \\
 &  & OSE & 0.0820 & 0.0552 & 0.0433 &  &  &  &  &  &  \\
 &  & CI & 0.7600 & 0.6660 & 0.6070 &  &  &  &  &  &  \\
 &  & SE & 0.0649 & 0.0405 & 0.0326 &  &  &  &  &  &  \\
\midrule

\multirow{16}{*}{SIIE} &  & Truth & 0.0074 & 0.0073 & 0.0073 & 0.0074 & 0.0073 & 0.0073 & 0.0074 & 0.0073 & 0.0073 \\
\cmidrule(lr){2-12}
 & \multirow{5}{*}{GNN} & Est & 0.0270 & 0.0253 & 0.0271 & 0.0155 & 0.0094 & 0.0090 & 0.0163 & 0.0106 & 0.0108 \\
 &  & OCI & 0.9420 & 0.9240 & 0.9070 & 0.9530 & 0.9570 & 0.9490 & 0.9470 & 0.9510 & 0.9510 \\
 &  & OSE & 0.0823 & 0.0418 & 0.0325 & 0.0741 & 0.0442 & 0.0327 & 0.0773 & 0.0450 & 0.0344 \\
 &  & CI & 0.9080 & 0.9070 & 0.9030 & 0.9140 & 0.9540 & 0.9400 & 0.9240 & 0.9320 & 0.9380 \\
 &  & SE & 0.0663 & 0.0392 & 0.0316 & 0.0513 & 0.0385 & 0.0303 & 0.0569 & 0.0389 & 0.0309 \\
\cmidrule(lr){2-12}
 & \multirow{5}{*}{MLP} & Est & 0.0347 & 0.0259 & 0.0264 & 0.0299 & 0.0249 & 0.0260 & 0.0270 & 0.0264 & 0.0257 \\
 &  & OCI & 0.9200 & 0.9230 & 0.9120 & 0.9440 & 0.9240 & 0.9160 & 0.9440 & 0.9270 & 0.9160 \\
 &  & OSE & 0.0604 & 0.0383 & 0.0309 & 0.0669 & 0.0406 & 0.0324 & 0.0757 & 0.0420 & 0.0320 \\
 &  & CI & 0.8500 & 0.9170 & 0.8810 & 0.9110 & 0.9190 & 0.8880 & 0.9110 & 0.9130 & 0.8910 \\
 &  & SE & 0.0521 & 0.0370 & 0.0304 & 0.0619 & 0.0394 & 0.0315 & 0.0666 & 0.0406 & 0.0314 \\
\cmidrule(lr){2-12}
 & \multirow{5}{*}{RF} & Est & 0.0343 & 0.0312 & 0.0306 &  &  &  &  &  &  \\
 &  & OCI & 0.8750 & 0.7910 & 0.7090 &  &  &  &  &  &  \\
 &  & OSE & 0.0299 & 0.0191 & 0.0158 &  &  &  &  &  &  \\
 &  & CI & 0.7320 & 0.5680 & 0.4940 &  &  &  &  &  &  \\
 &  & SE & 0.0214 & 0.0138 & 0.0113 &  &  &  &  &  &  \\
\bottomrule
\end{tabular}
}
\end{table}

The results in Tables \ref{ch5:tab:supplementary-random-geometric-nie-scde}--\ref{ch5:tab:supplementary-erdos-renyi-side-siie} show that the main findings from the simulation study in the main text extend to the own natural indirect effect and the three spillover effects. Overall, when the GNN has $L=2$ or $L=3$ layers, its estimates are substantially closer to the corresponding true values under both network designs, and OSE declines steadily as the sample size increases. In contrast, the $L=1$ GNN typically exhibits a pronounced positive bias that does not vanish with the sample size, particularly for $SCDE$, $SIDE$, and $SIIE$ under the random geometric graph. Although the standard deviation of the estimation error still decreases with $n$, the center of the estimator does not move correspondingly toward the truth, causing OCI to deteriorate markedly as the sample size grows. This pattern is consistent with the findings for $OCDE$ and $ONDE$ in the main text.

More specifically, for $ONIE$, the $L=2$ and $L=3$ GNNs perform well under both network designs. A similar pattern appears for $SIIE$, for which $L=2$ generally yields smaller point-estimation bias and lower OSE. For the spillover direct effects, finite-sample performance is more sensitive to the choice of depth. Under the Erd\H{o}s--R\'enyi graph, $L=2$ generally gives estimates of $SCDE$ and $SIDE$ closest to the truth, whereas under the random geometric graph, $L=3$ produces smaller biases for these two estimands than $L=2$. Thus, consistent with the main-text results, moderately increasing the number of GNN aggregation layers generally does not introduce substantial additional bias and may improve finite-sample performance for some more complex spillover estimands. Nevertheless, OSE is slightly larger for $L=3$ than for $L=2$ in most settings, suggesting that aggregating information beyond the effective dependence range can increase both the complexity of nuisance-function estimation and sampling variability. Overall, matching the GNN aggregation range to the plausible range of network dependence remains a prudent choice, while modest over-specification of the depth is typically more robust than under-specification.

The conventional machine-learning methods display the same qualitative patterns as in the main text. The MLP results vary little across different numbers of hidden layers, and its estimates are systematically above the truth for nearly all four estimands; this bias shows little attenuation as the sample size increases. This finding indicates that increasing neural-network depth alone cannot substitute for information propagation along network edges. The RF generally exhibits even larger point-estimation bias, especially for $ONIE$, $SIDE$, and $SIIE$ under the random geometric graph, where its estimates remain far from the truth even at $n=4000$. Although the OSE of MLP or RF is smaller than that of the GNN in a few settings, lower sampling variability does not translate into more reliable inference. Under the Erd\H{o}s--R\'enyi graph, the sampling variability of some spillover effects is large relative to the magnitude of the true effect, so OCI for MLP and RF can occasionally remain relatively high. Their point estimates, however, remain clearly displaced from the truth, and coverage typically decreases as the sample size grows. These apparently favorable coverage rates therefore mainly reflect wide sampling distributions that mask point-estimation bias, rather than evidence that conventional methods have adequately captured the network-confounding structure.

Finally, comparison of SE, OSE, CI, and OCI shows that the network HAC variance estimator exhibits essentially the same finite-sample behavior for these four supplementary causal estimands as in the main text. For the $L=2$ and $L=3$ GNNs, OCI remains close to the nominal coverage probability across both network designs and all sample sizes, indicating satisfactory centering of the point estimator and a good normal approximation. By contrast, CI is typically below OCI in smaller samples, primarily because the network HAC standard error somewhat understates the true sampling variability of the estimation error. As the sample size grows, the gap between SE and OSE generally narrows, and CI moves correspondingly toward the nominal level. At $n=4000$, for the Erd\H{o}s--R\'enyi graph and $L=2$ or $L=3$, SE is about $90\%$--$99\%$ of OSE and CI ranges from $0.932$ to $0.953$, indicating that the network HAC variance estimator is already quite accurate. Convergence is slower under the random geometric graph: at $n=4000$, SE is about $77\%$--$95\%$ of OSE. In this case, CI is already close to the nominal level for $ONIE$ and for $SIIE$ with $L=2$, whereas $SCDE$, $SIDE$, and $SIIE$ with $L=3$ still exhibit some undercoverage. Because OCI remains close to $0.95$ in these settings, the undercoverage primarily reflects finite-sample downward bias in the network HAC standard errors rather than material centering bias of the point estimator or a failure of the normal approximation. For MLP, SE and OSE are generally close and the gap between CI and OCI is relatively small; however, for estimands with substantial bias, both CI and OCI fall increasingly below the nominal level as the sample size grows, indicating that undercoverage is driven mainly by point-estimation bias that does not vanish with sample size rather than by variance estimation. RF, in most settings, suffers from both point-estimation bias and variance underestimation: OCI is already below the nominal level, while SE substantially understates OSE, causing CI to fall even further below OCI.

In summary, the supplementary simulations show that the proposed method remains effective for more complex spillover causal estimands, and the qualitative conclusions from the main text regarding the aggregation range, the failure modes of the benchmark methods, and the quality of network HAC inference continue to hold for these estimands.

\section{Proofs}
\subsection{Proof of Theorem \ref{ch5:thm:own-cde-identification}}
For any $i,d_i,t,m_i,s$, by the law of total probability and noting that $T_i$ and $S_i$ are determined by $\bDmi$ and $\bMmi$, respectively (so the sums need only range over $(\bdmi,\bmmi)$ satisfying $T_i(\bdmi)=t$ and $S_i(\bmmi)=s$, in which case conditioning additionally on $T_i=t$ and $S_i=s$ is redundant), we have
\begin{align*}
& \mu_i^C(d_i,t,m_i,s) \\
= & \sum_{\bdmi:T_i(\bdmi)=t} \sum_{\bmmi:S_i(\bmmi)=s} \E(Y_i \mid D_i=d_i, \bDmi=\bdmi, M_i=m_i, \bMmi=\bmmi, \bX, \bA)\\
& \hspace{2cm} \P(\bDmi=\bdmi,\bMmi=\bmmi \mid D_i=d_i, T_i=t, M_i=m_i, S_i=s, \bX, \bA).
\end{align*}
By consistency, $Y_i=Y_i(D_i,\bDmi,M_i,\bMmi)$, and Assumption \ref{ch5:ass:own-cde-conditional-independence},
\[
\begin{aligned}
  & \E(Y_i \mid D_i=d_i, \bDmi=\bdmi, M_i=m_i, \bMmi=\bmmi, \bX, \bA)\\
  & = \E[Y_i(d_i,\bdmi,m_i,\bmmi) \mid \bX, \bA].
\end{aligned}
\]
On the other hand, Assumption \ref{ch5:ass:error-term-independence} together with models (\ref{ch5:eq:potential-mediator-model}) and (\ref{ch5:eq:treatment-assignment-model}) implies $(D_i,M_i) \indep (\bDmi,\bMmi) \mid \bX,\bA$. Combining this with the definitions of the exposure mappings $T_i$ and $S_i$ yields $(D_i,M_i) \indep (\bDmi,\bMmi) \mid T_i,S_i,\bX,\bA$, so
\begin{align*}
  & \P(\bDmi=\bdmi,\bMmi=\bmmi \mid D_i=d_i, T_i=t, M_i=m_i, S_i=s, \bX, \bA) \\
  = & \P(\bDmi=\bdmi,\bMmi=\bmmi \mid T_i=t, S_i=s, \bX, \bA).
\end{align*}
Substituting the preceding two equalities gives $\mu_i^C(d_i,t,m_i,s)=Y_i^C(d_i,t,m_i,s)$, and hence
\[
  OCDE^{obs}(t,m,s) = \frac{1}{j_n}\sum_{i\in\mathcal{J}_n} \big[Y^C_i(1,t,m,s) - Y^C_i(0,t,m,s)\big] = OCDE(t,m,s).
\]
This completes the proof.

\subsection{Proof of Theorem \ref{ch5:thm:own-natural-effects-identification}}
For any $i,d_i,t,m$, the law of total probability, together with the fact that $T_i$ is determined by $\bDmi$, gives
\begin{align*}
& \E(Y_i \mid D_i=d_i,T_i=t,M_i=m,\bX,\bA) \\
= & \sum_{\bdmi:T_i(\bdmi)=t} \E(Y_i \mid D_i=d_i,\bDmi=\bdmi,M_i=m,\bX,\bA) \\
& \hspace{3cm}\P(\bDmi=\bdmi \mid D_i=d_i,T_i=t,M_i=m,\bX,\bA).
\end{align*}
By consistency ($Y_i=Y_i(D_i,\bDmi,M_i,\bMmi)$ and $\bMmi=\bMmi(\bDmi)$) and conditions (\ref{ch5:eq:own-natural-treatment-outcome-independence}) and (\ref{ch5:eq:own-natural-mediator-outcome-independence}) in Assumption \ref{ch5:ass:own-natural-effects-conditional-independence}, we have
\begin{equation}
  \E(Y_i \mid D_i=d_i,\bDmi=\bdmi,M_i=m_i,\bX,\bA) = \E[Y_i(d_i,\bdmi,m_i,\bMmi(\bdmi)) \mid \bX,\bA].
 \label{ch5:eq:proof-own-natural-outcome-regression-identification}
\end{equation}
Assumption \ref{ch5:ass:error-term-independence} together with models (\ref{ch5:eq:potential-mediator-model}) and (\ref{ch5:eq:treatment-assignment-model}) implies $(D_i,M_i) \indep \bDmi \mid \bX,\bA$. Combining this with the definition of the exposure mapping $T_i$ yields $(D_i,M_i) \indep \bDmi \mid T_i,\bX,\bA$, and therefore
\begin{equation}
  \P(\bDmi=\bdmi \mid D_i=d_i, T_i=t, M_i=m_i, \bX, \bA) = \P(\bDmi=\bdmi \mid T_i=t, \bX, \bA).
  \label{ch5:eq:proof-own-natural-peer-treatment-distribution}
\end{equation}
Moreover, by consistency, $M_i=M_i(D_i)$, and condition (\ref{ch5:eq:own-natural-treatment-mediator-independence}) in Assumption \ref{ch5:ass:own-natural-effects-conditional-independence},
\begin{equation}
  \P(M_i = m_i \mid D_i = d_i^*,\bX,\bA) = \P(M_i(d_i^*)=m_i \mid \bX,\bA).
  \label{ch5:eq:proof-own-natural-mediator-distribution}
\end{equation}
Substituting (\ref{ch5:eq:proof-own-natural-outcome-regression-identification}), (\ref{ch5:eq:proof-own-natural-peer-treatment-distribution}), and (\ref{ch5:eq:proof-own-natural-mediator-distribution}) into the definition of $\mu_i^N(d_i,d_i^*,t)$ yields
\begin{align*}
  \mu_i^N(d_i,d_i^*,t) & = \sum_{m_i\in\{0,1\}} \sum_{\bdmi:T_i(\bdmi)=t} \E[Y_i(d_i,\bdmi,m_i,\bMmi(\bdmi)) \mid \bX,\bA] \\
  & \hspace{2cm}\P(\bDmi=\bdmi \mid T_i=t, \bX, \bA)\P(M_i(d_i^*)=m_i \mid \bX,\bA).
\end{align*}
By condition (\ref{ch5:eq:own-natural-cross-world-independence}) in Assumption \ref{ch5:ass:own-natural-effects-conditional-independence},
\[
\begin{aligned}
  & \E[Y_i(d_i,\bdmi,m_i,\bMmi(\bdmi)) \mid \bX,\bA] \\
  & \hspace{3cm} = \E[Y_i(d_i,\bdmi,M_i(d_i^*),\bMmi(\bdmi)) \mid M_i(d_i^*)=m_i,\bX,\bA],
\end{aligned}
\]
Thus, the weighted sum over $m_i$ in the preceding expression is precisely the expectation with respect to the conditional distribution of $M_i(d_i^*)$, so that
\begin{align*}
\mu_i^N(d_i,d_i^*,t) & = \sum_{\bdmi:T_i(\bdmi)=t} \E[Y_i(d_i,\bdmi,M_i(d_i^*),\bMmi(\bdmi)) \mid \bX,\bA]\\
&\hspace{5cm} \P(\bDmi=\bdmi \mid T_i=t, \bX, \bA) \\
& = Y_i^N(d_i,d_i^*,t).
\end{align*}
This completes the proof.

\subsection{Proof of Theorem \ref{ch5:thm:own-interventional-effects-identification}}
Following the proof of Theorem \ref{ch5:thm:own-natural-effects-identification}, we obtain
\begin{align*}
  \mu_i^I(d_i,d_i^*,t) & = \sum_{m_i\in\{0,1\}} \sum_{\bdmi:T_i(\bdmi)=t} \E[Y_i(d_i,\bdmi,m_i,\bMmi(\bdmi)) \mid \bX,\bA] \\
  & \hspace{2cm}\P(\bDmi=\bdmi \mid T_i=t, \bX, \bA)\P(M_i(d_i^*)=m_i \mid \bX,\bA).
\end{align*}
By the definition of $G_i(d)$, $\P(M_i(d_i^*)=m_i \mid \bX,\bA)=\P(G_i(d_i^*)=m_i \mid \bX,\bA)$. Substituting this equality into the right-hand side above gives $\mu_i^I(d_i,d_i^*,t)=Y_i^I(d_i,d_i^*,t)$. This completes the proof.

\subsection{Proof of Theorem \ref{ch5:thm:spillover-natural-effects-identification}}
By consistency, $Y_i=Y_i(D_i,\bDmi,M_i,\bMmi)$, and conditions (\ref{ch5:eq:spillover-natural-treatment-outcome-independence}) and (\ref{ch5:eq:spillover-natural-mediator-outcome-independence}) in Assumption \ref{ch5:ass:spillover-natural-effects-independence}, we have
\begin{equation}
\begin{aligned}
& \E[Y_i \mid D_i=d_i,\bDmi=\bdmi,M_i=m_i,\bMmi=\bmmi,\bX,\bA]\\
& = \E[Y_i(d_i,\bdmi,m_i,\bmmi) \mid \bX,\bA];
\end{aligned}
\label{ch5:eq:proof-spillover-natural-outcome-regression-identification}
\end{equation}
By consistency, $M_i=M_i(D_i)$, and condition (\ref{ch5:eq:spillover-natural-treatment-mediator-independence}) in Assumption \ref{ch5:ass:spillover-natural-effects-independence}, we have
\begin{equation}
  \P(M_i=m_i \mid D_i=d_i, \bX, \bA) = \P(M_i(d_i)=m_i \mid \bX,\bA).
  \label{ch5:eq:proof-spillover-natural-own-mediator-distribution}
\end{equation}
On the other hand, by the law of total probability and consistency, $\bMmi=\bMmi(\bDmi)$, and noting that Assumption \ref{ch5:ass:error-term-independence} and condition (\ref{ch5:eq:spillover-natural-treatment-mediator-independence}) imply $\bMmi(\bdmi^*) \indep \bDmi \mid \bX,\bA$, we obtain
\begin{equation}
  \begin{aligned}
  & \P(\bMmi=\bmmi \mid T_i=t^*,\bX,\bA) \\
  = & \sum_{\bdmi^*:T_i(\bdmi^*)=t^*} \P(\bMmi(\bdmi^*)=\bmmi \mid \bX,\bA) \P(\bDmi=\bdmi^* \mid T_i=t^*,\bX,\bA).
  \end{aligned}
  \label{ch5:eq:proof-spillover-natural-peer-mediator-distribution}
\end{equation}
Substituting (\ref{ch5:eq:proof-spillover-natural-outcome-regression-identification}), (\ref{ch5:eq:proof-spillover-natural-own-mediator-distribution}), and (\ref{ch5:eq:proof-spillover-natural-peer-mediator-distribution}) into the definition of $\mu_i^{SN}(d_i,t,t^*)$ gives
\begin{align*}
  & \mu_i^{SN}(d_i,t,t^*) = \sum_{\bdmi:T_i(\bdmi)=t}\sum_{\bdmi^*:T_i(\bdmi^*)=t^*}\sum_{\bmmi\in\{0,1\}^{n-1}}\sum_{m_i\in\{0,1\}} \\
  & \E[Y_i(d_i,\bdmi,m_i,\bmmi) \mid \bX,\bA] \P(M_i(d_i)=m_i \mid \bX,\bA) \P(\bMmi(\bdmi^*)=\bmmi \mid \bX,\bA) \\
  & \P(\bDmi=\bdmi^* \mid T_i=t^*,\bX,\bA) \P(\bDmi=\bdmi \mid T_i=t,\bX,\bA).
\end{align*}
Assumption \ref{ch5:ass:error-term-independence} implies $M_i(d_i) \indep \bMmi(\bdmi^*) \mid \bX,\bA$, and hence
\begin{equation}
  \begin{aligned}
    & \P(M_i(d_i)=m_i \mid \bX,\bA) \P(\bMmi(\bdmi^*)=\bmmi \mid \bX,\bA) \\
    = & \P(M_i(d_i)=m_i, \bMmi(\bdmi^*)=\bmmi \mid \bX,\bA);
  \end{aligned}
  \label{ch5:eq:proof-spillover-natural-mediator-joint-factorization}
\end{equation}
By condition (\ref{ch5:eq:spillover-natural-cross-world-independence}) in Assumption \ref{ch5:ass:spillover-natural-effects-independence} and consistency,
\begin{equation}
\begin{aligned}
  & \E[Y_i(d_i,\bdmi,m_i,\bmmi) \mid \bX,\bA] \\
  = & \E[Y_i(d_i,\bdmi,M_i(d_i),\bMmi(\bdmi^*)) \mid M_i(d_i)=m_i, \bMmi(\bdmi^*)=\bmmi,\bX,\bA].
\end{aligned}
\label{ch5:eq:proof-spillover-natural-cross-world-outcome-expectation}
\end{equation}
Substituting (\ref{ch5:eq:proof-spillover-natural-mediator-joint-factorization}) and (\ref{ch5:eq:proof-spillover-natural-cross-world-outcome-expectation}) into the preceding expansion, the weighted sum over $(m_i,\bmmi)$ is exactly the expectation with respect to the joint conditional distribution of \\$(M_i(d_i),\bMmi(\bdmi^*))$. Therefore,
\begin{align*}
  \mu_i^{SN}(d_i,t,t^*) = & \sum_{\bdmi:T_i(\bdmi)=t}\sum_{\bdmi^*:T_i(\bdmi^*)=t^*}\E[Y_i(d_i,\bdmi,M_i(d_i),\bMmi(\bdmi^*)) \mid \bX,\bA] \\
  & \P(\bDmi=\bdmi^* \mid T_i=t^*,\bX,\bA) \P(\bDmi=\bdmi \mid T_i=t,\bX,\bA) \\
  = & Y_i^{SN}(d_i,t,t^*).
\end{align*}
This completes the proof.

\subsection{Proof of Theorem \ref{ch5:thm:spillover-interventional-effects-identification}}
By the law of iterated expectations,
\[
\begin{aligned}
& \E(Y_i \mid D_i=d_i,T_i=t,S_i=s,\bX,\bA) \\
= & \sum_{m_i\in\{0,1\}} \E(Y_i \mid D_i=d_i,T_i=t,M_i=m_i,S_i=s,\bX,\bA)\\
&\hspace{3cm} \P(M_i=m_i \mid D_i=d_i,T_i=t,S_i=s,\bX,\bA).
\end{aligned}
\]
Assumption \ref{ch5:ass:error-term-independence} implies $(D_i,M_i) \indep (\bDmi,\bMmi) \mid \bX,\bA$. The definitions of $T_i$ and $S_i$ then imply $M_i \indep (T_i,S_i) \mid D_i,\bX,\bA$. Combining these results with consistency and condition (\ref{ch5:eq:spillover-interventional-treatment-mediator-independence}) in Assumption \ref{ch5:ass:spillover-interventional-effects-independence}, we have
\[
\begin{aligned}
& \P(M_i=m_i \mid D_i=d_i,T_i=t,S_i=s,\bX,\bA) = \P(M_i=m_i \mid D_i=d_i,\bX,\bA) \\
& = \P(M_i(d_i)=m_i \mid \bX,\bA).
\end{aligned}
\]
On the other hand, the argument used in the proof of Theorem \ref{ch5:thm:own-cde-identification} gives
\[
\E(Y_i \mid D_i=d_i,T_i=t,M_i=m_i,S_i=s,\bX,\bA) = Y_i^C(d_i,t,m_i,s).
\]
Substituting the preceding results into the expression for $\E(Y_i \mid D_i=d_i,T_i=t,S_i=s,\bX,\bA)$, then averaging over $s$ with weights $\P(S_i=s \mid T_i=t^*,\bX,\bA)$, and using the definition of $H_i(t)$, which gives $\P(H_i(t^*)=s \mid \bX,\bA)=\P(S_i=s \mid T_i=t^*,\bX,\bA)$, yields
{\small
\[
\begin{aligned}
  \mu_i^{SI}(d_i,t,t^*) & = \sum_{s\in\mathcal{S}_i} \sum_{m_i\in\{0,1\}} Y_i^C(d_i,t,m_i,s) \P(M_i(d_i)=m_i \mid \bX,\bA)\P(H_i(t^*)=s \mid \bX,\bA) \\
  & = Y_i^{SI}(d_i,t,t^*).
\end{aligned}
\]
}%
This completes the proof.

\subsection{Conditional \texorpdfstring{$\psi$}{psi}-Dependence and Proof of Theorem \ref{ch5:thm:own-cde-asymptotic-normality}}
To establish asymptotic normality, we introduce the notion of $\psi$-dependence from \citet{kojevnikov2021limit}. For any $H,H'\subseteq\mathcal{N}_n$, define
\[
\ell_{\bA}(H,H') \coloneq \min\{\ell_{\bA}(i,j): i\in H,\, j\in H'\}.
\]
Let $\{Z_i\}_{i=1}^n \subset \mathbb{R}$ be a triangular array and write $\mathcal{Z}_H=(Z_i)_{i\in H}$. Let $\mathcal{L}_d$ denote the class of bounded real-valued Lipschitz functions on $\mathbb{R}^d$, and let $\mathrm{Lip}(f)$ denote the Lipschitz constant of $f\in\mathcal{L}_d$. Further define
\[
\mathcal{P}_n(h,h',r)=\{(H,H'): H, H' \subseteq \mathcal{N}_n, |H|=h, |H'|=h', \ell_{\bA}(H,H') \geqslant r\}.
\]
\begin{definition}[$\psi$-dependence]
\label{ch5:def:conditional-psi-dependence}
If there exists a constant $C\in(0,\infty)$ and, for each $n\in\mathbb{N}$, an $\mathcal{F}_n$-measurable sequence $\{\psi_n(r)\}_{r\geqslant 0}$ satisfying $\psi_n(0)=1$, such that for every $n,h,h'\in\mathbb{N}$, $r>0$, $f\in\mathcal{L}_h$, $f'\in\mathcal{L}_{h'}$, and $(H,H')\in\mathcal{P}_n(h,h',r)$,
\begin{equation}
|\Cov(f(\boldsymbol{Z}_H), f'(\boldsymbol{Z}_{H'}) \mid \mathcal{F}_n)| \leqslant C h h' (\|f\|_\infty+\mathrm{Lip}(f))(\|f'\|_\infty+\mathrm{Lip}(f'))\psi_n(r) \quad \text{a.s.}
\end{equation}
then the triangular array $\{Z_i\}_{i=1}^n$ is said to be \emph{conditionally $\psi$-dependent} given the sequence of $\sigma$-fields $\{\mathcal{F}_n\}_{n\in\mathbb{N}}$. The function $\psi_n(r)$ is called the \emph{dependence coefficient} of $\{Z_i\}_{i=1}^n$.
\end{definition}

\begin{lemma}
\label{ch5:lem:own-cde-psi-dependence}
Under Assumptions \ref{ch5:ass:consistency}, \ref{ch5:ass:error-term-independence}, \ref{ch5:ass:own-cde-local-exposure-mappings}, \ref{ch5:ass:own-cde-moments-overlap}(ii), and \ref{ch5:ass:approximate-neighborhood-interference}, for any fixed $(t,m,s)$, let $\mathcal{F}_n\coloneq\sigma(\bX,\bA)$. Then the triangular array $\{\varphi_i^C(t,m,s)\}_{i\in\mathcal{N}_n}$ is conditionally $\psi$-dependent given $\{\mathcal{F}_n\}_{n\in\mathbb{N}}$, with dependence coefficient $\psi_n(r)$ defined in (\ref{ch5:eq:own-cde-psi-dependence-coefficient}).
\end{lemma}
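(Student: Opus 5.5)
Given $(\bX,\bA)$, all randomness comes from the independent triples $(\varepsilon_j,\omega_j,\nu_j)$ (Assumption \ref{ch5:ass:error-term-independence}). The plan is to build, for each radius $r$, an $r$-local approximation $\varphi_i^{(r)}$ of each score $\varphi_i^C(t,m,s)$ that depends only on errors in $\mathcal{N}(i,r)$. We then bound covariances by (a) exact independence of the local approximations for sets far apart, plus (b) the approximation error supplied by Assumption \ref{ch5:ass:approximate-neighborhood-interference}. This is the strategy of \citet{leung2022graph} and \citet{kojevnikov2021limit}, adapted to the mediation score.

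\emph{Step 1: locality of the non-outcome components.} Conditional on $(\bX,\bA)$, $\mu_i^C$ and $\pi_i^C$ are constants, so they contribute nothing random. By (\ref{ch5:eq:potential-mediator-model})--(\ref{ch5:eq:treatment-assignment-model}), $D_j$ is a function of $\nu_j$ and $M_j$ is a function of $(\omega_j,\nu_j)$. By Assumption \ref{ch5:ass:own-cde-local-exposure-mappings}, the indicator $\1\{D_i=d,T_i=t,M_i=m,S_i=s\}$ is therefore a measurable function of the errors in $\mathcal{N}(i,K_0)$. The only nonlocal piece is $Y_i=f_n(i,\bD,\bM,\bX,\bA,\be)$.

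\emph{Step 2: local approximation.} Define $Y_i^{(r)}\coloneq f_{n,r}(i,\bD_{\mathcal{N}(i,r)},\bM_{\mathcal{N}(i,r)},\ldots)$, which depends only on the errors in $\mathcal{N}(i,r)$. Let $\varphi_i^{(r)}$ be the score with $Y_i$ replaced by $Y_i^{(r)}$, centered by its conditional mean. By the overlap bound in Assumption \ref{ch5:ass:own-cde-moments-overlap}(ii), the weights $1/\pi_i^C$ are at most $1/\underline{\pi}$, so
\[
\E\big[|\varphi_i^C-\varphi_i^{(r)}|\mid\bX,\bA\big]\leqslant C\,\gamma_n(r)
\]
for $r\geqslant K_0$, after integrating (\ref{ch5:eq:local-outcome-approximation-bound}) over $(\bD,\bM)$.

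\emph{Step 3: covariance bound.} Take $(H,H')\in\mathcal{P}_n(h,h',r)$ with $r>4K_0$, and set $s=\lfloor r/4\rfloor$. The variables $\{\varphi^{(s)}_i\}_{i\in H}$ depend only on errors in $\bigcup_{i\in H}\mathcal{N}(i,s)$, and similarly for $H'$. These two neighborhood unions are disjoint because $2s<r$. Hence $\Cov(f(\boldsymbol{\varphi}^{(s)}_H),f'(\boldsymbol{\varphi}^{(s)}_{H'})\mid\bX,\bA)=0$ by conditional independence of the errors. Now telescope:
\[
|\Cov(f(\boldsymbol{\varphi}_H),f'(\boldsymbol{\varphi}_{H'}))|\leqslant 2\|f'\|_\infty\E|f(\boldsymbol{\varphi}_H)-f(\boldsymbol{\varphi}^{(s)}_H)|+2\|f\|_\infty\E|f'(\boldsymbol{\varphi}_{H'})-f'(\boldsymbol{\varphi}^{(s)}_{H'})|.
\]
By the Lipschitz property, each expectation is at most $\mathrm{Lip}(f)\sum_{i\in H}\E|\varphi_i-\varphi_i^{(s)}|\leqslant \mathrm{Lip}(f)\,h\,C\gamma_n(s)$. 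This yields the bound $Chh'(\|f\|_\infty+\mathrm{Lip}(f))(\|f'\|_\infty+\mathrm{Lip}(f'))\gamma_n(r/4)$; monotonicity or the sup in the definition absorbs the floor. For $r\leqslant 4K_0$ we use $\psi_n(r)=1$ and the trivial bound $|\Cov|\leqslant 4\|f\|_\infty\|f'\|_\infty$.

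\emph{Main obstacle.} The delicate point is Step 2, namely the centering and the use of a conditional-on-$(\bD,\bM)$ bound. Assumption \ref{ch5:ass:approximate-neighborhood-interference} controls $|f_n-f_{n,r}|$ conditional on $(\bD,\bM,\bX,\bA)$, so we take iterated expectations to obtain the unconditional-given-$(\bX,\bA)$ $L^1$ bound. The mean shift induced by centering is then bounded by the same $\gamma_n(r)$, since the centering constants are deterministic given $\mathcal{F}_n$. A further check is that $f_{n,r}$ evaluated at the realized $\bD_{\mathcal{N}(i,r)},\bM_{\mathcal{N}(i,r)}$ is genuinely a function of the errors in $\mathcal{N}(i,r)$ only. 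This holds because $h_n$ and $g_n$ use only the own errors.
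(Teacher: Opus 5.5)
Your route is the paper's own proof. You localize $Y_i$ through $f_{n,r}$ at radius $r/4$ and use that $I_i(d)=\1\{D_i=d,T_i=t,M_i=m,S_i=s\}$ and $Y_i^{(r/4)}$ depend only on errors in $\mathcal{N}(i,r/4)$, since $h_n,g_n$ use own errors only and $K_0<r/4$. Assumption \ref{ch5:ass:error-term-independence} then gives exact conditional independence, you telescope the covariance, and you bound the two remainders by Lipschitz constants times $h\,C\gamma_n(r/4)$, with the trivial bound when $r\leqslant 4K_0$. Two details are wrong as written, though both are easy to repair.

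\textbf{Centering.} $\varphi_i^C(t,m,s)$ is shifted by the common constant $OCDE(t,m,s)$, not by its own conditional mean. Indeed $\E(\varphi_i^C(t,m,s)\mid\bX,\bA)=\mu_i^C(1,t,m,s)-\mu_i^C(0,t,m,s)-OCDE(t,m,s)$, which need not vanish for an individual $i$; only its average over $\mathcal{J}_n$ is zero. If you recenter the localized score by its own conditional mean, then $\varphi_i^C-\varphi_i^{(r)}$ contains this $\mathcal{F}_n$-measurable term. Your Step 2 bound $\E[|\varphi_i^C-\varphi_i^{(r)}|\mid\bX,\bA]\leqslant C\gamma_n(r)$ is then false. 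Subtract the same constant $OCDE(t,m,s)$ from the localized score instead, as the paper does. Then $\varphi_i^C-\varphi_i^{(r)}=\bigl(I_i(1)/\pi_i^C(1,t,m,s)-I_i(0)/\pi_i^C(0,t,m,s)\bigr)\bigl(Y_i-Y_i^{(r)}\bigr)$ exactly, and your bound holds. Alternatively, note that $\mathcal{F}_n$-measurable shifts are harmless, because $x\mapsto f(x+c)$ has the same sup norm and Lipschitz constant as $f$.

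\textbf{Floor.} $\gamma_n$ is not assumed monotone, so a bound in $\gamma_n(\lfloor r/4\rfloor)$ does not yield $\gamma_n(r/4)=\psi_n(r)$, and nothing in the definition absorbs the floor. Apply Assumption \ref{ch5:ass:approximate-neighborhood-interference} directly at the real radius $r/4$; $\gamma_n$ is defined on $\mathbb{R}_+$. The set $\mathcal{N}(i,r/4)$ still contains $\mathcal{N}(i,K_0)$, and the unions over $H$ and $H'$ remain disjoint since $2\cdot r/4<r$.
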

\begin{proof}
Fix $(t,m,s)$ and write $I_i(d)\coloneq\1\{D_i=d,\ T_i=t,\ M_i=m,\ S_i=s\}$. For any $r>0$, use the local model in Assumption \ref{ch5:ass:approximate-neighborhood-interference} to define the local approximation to the observed outcome
\[
Y_i^{(r)} \coloneq f_{n,r}(i,\bD_{\mathcal{N}(i,r)},\bM_{\mathcal{N}(i,r)},\bX_{\mathcal{N}(i,r)},\bA_{\mathcal{N}(i,r)},\be_{\mathcal{N}(i,r)}),
\]
Replace $Y_i$ in the definition of $\varphi_i^C(t,m,s)$ with $Y_i^{(r)}$ and denote the resulting localized score by $\varphi_i^{C,(r)}(t,m,s)$. Take arbitrary $h,h'\in\mathbb{N}$, $r>0$, $f\in\mathcal{L}_h$, $f'\in\mathcal{L}_{h'}$, and $(H,H')\in\mathcal{P}_n(h,h',r)$, and let
\begin{gather*}
  \xi \coloneq f\big((\varphi_i^C(t,m,s))_{i\in H}\big), \quad \xi^{(r)} \coloneq f\big((\varphi_i^{C,(r)}(t,m,s))_{i\in H}\big), \\
  \zeta \coloneq f'\big((\varphi_j^C(t,m,s))_{j\in H'}\big), \quad \zeta^{(r)} \coloneq f'\big((\varphi_j^{C,(r)}(t,m,s))_{j\in H'}\big).
\end{gather*}

First consider $r>4K_0$. Then $K_0<r/4$ and
\[
\big(\bigcup_{i\in H}\mathcal{N}(i,r/4)\big) \cap \big(\bigcup_{i\in H'}\mathcal{N}(i,r/4)\big) = \varnothing;
\]
Under Assumption \ref{ch5:ass:own-cde-local-exposure-mappings}, $I_i(d)$ is measurable with respect to $\sigma\bigl((\varepsilon_j,\nu_j,\omega_j):j\in\mathcal{N}(i,K_0)\bigr)\vee\mathcal{F}_n$, while $Y_i^{(r/4)}$ is measurable with respect to $\sigma\bigl((\varepsilon_j,\nu_j,\omega_j):j\in\mathcal{N}(i,r/4)\bigr)\vee\mathcal{F}_n$. Therefore, by Assumption \ref{ch5:ass:error-term-independence}, $\big(\varphi_i^{C,(r/4)}(t,m,s)\big)_{i\in H}$ and $\big(\varphi_j^{C,(r/4)}(t,m,s)\big)_{j\in H'}$ are conditionally independent given $\bX,\bA$, and hence
\begin{equation}
  \Cov\big(\xi^{(r/4)}, \zeta^{(r/4)} \mid \mathcal{F}_n \big) = 0.
  \label{ch5:eq:proof-truncated-score-conditional-covariance-zero}
\end{equation}

On the other hand, by definition,
\[
\varphi_i^C(t,m,s)-\varphi_i^{C,(r/4)}(t,m,s) = \left(\frac{I_i(1)}{\pi_i^C(1,t,m,s)} - \frac{I_i(0)}{\pi_i^C(0,t,m,s)}\right) (Y_i-Y_i^{(r/4)}),
\]
Combining the overlap condition in Assumption \ref{ch5:ass:own-cde-moments-overlap}(ii) with $\E[|Y_i-Y_i^{(r/4)}|\mid\mathcal{F}_n]\leqslant\gamma_n(r/4)$, which follows from Assumptions \ref{ch5:ass:consistency} and \ref{ch5:ass:approximate-neighborhood-interference} and iterated expectations, gives
\begin{equation}
  \E[|\varphi_i^C(t,m,s)-\varphi_i^{C,(r/4)}(t,m,s)| \mid \mathcal{F}_n] \leqslant \frac{2}{\underline{\pi}}\gamma_n(r/4).
  \label{ch5:eq:proof-truncated-cde-score-approximation-bound}
\end{equation}

Decompose the conditional covariance as
\[
\begin{aligned}
  & \Cov(\xi,\zeta \mid \mathcal{F}_n) \\
  & = \Cov(\xi - \xi^{(r/4)},\zeta \mid \mathcal{F}_n) + \Cov(\xi^{(r/4)}, \zeta - \zeta^{(r/4)} \mid \mathcal{F}_n) + \Cov(\xi^{(r/4)}, \zeta^{(r/4)} \mid \mathcal{F}_n),
\end{aligned}
\]
The last term is zero by (\ref{ch5:eq:proof-truncated-score-conditional-covariance-zero}). Applying $|\Cov(U,V\mid\mathcal{F}_n)|\leqslant 2\|V\|_\infty\E(|U|\mid\mathcal{F}_n)$ for integrable $U$ and bounded $V$, together with the Lipschitz properties of $f$ and $f'$ and (\ref{ch5:eq:proof-truncated-cde-score-approximation-bound}), to the first two terms yields
\begin{align*}
|\Cov(\xi, \zeta \mid \mathcal{F}_n)| & \leqslant \frac{4}{\underline{\pi}}\big(h \mathrm{Lip}(f)\|f'\|_\infty + h' \mathrm{Lip}(f')\|f\|_\infty\big)\gamma_n(r/4) \\
& \leqslant C hh'(\|f\|_\infty + \mathrm{Lip}(f))(\|f'\|_\infty+ \mathrm{Lip}(f'))\gamma_n(r/4),
\end{align*}
where $C>0$ is a constant; the second inequality uses $h,h'\geqslant 1$.

Now consider $r\leqslant 4K_0$. In this case $\psi_n(r)=1$, and boundedness of $f$ and $f'$ gives
\[
\begin{aligned}
& |\Cov(\xi, \zeta \mid \mathcal{F}_n)| \leqslant \E[|\xi \zeta| \mid \mathcal{F}_n] + |\E[\xi \mid \mathcal{F}_n] \E[\zeta \mid \mathcal{F}_n]| \leqslant 2\|f\|_\infty\|f'\|_\infty \\
& \leqslant 2 hh'(\|f\|_\infty+ \mathrm{Lip}(f))(\|f'\|_\infty+ \mathrm{Lip}(f'))\psi_n(r).
\end{aligned}
\]
Combining the two cases shows that the triangular array $\{\varphi_i^C(t,m,s)\}_{i\in\mathcal{N}_n}$ satisfies the conditional $\psi$-dependence condition in Definition \ref{ch5:def:conditional-psi-dependence}, with dependence coefficient $\psi_n(r)$. This proves the lemma.
\end{proof}

\textbf{Proof of Theorem \ref{ch5:thm:own-cde-asymptotic-normality}.}\quad 
Fix $(t,m,s)$ and write $I_i(d)\coloneq\1\{D_i=d,\ T_i=t,\ M_i=m,\ S_i=s\}$. We use the decomposition
\begin{align*}
& \sqrt{j_n} \big(\widehat{OCDE}(t,m,s) - OCDE(t,m,s)\big) \\
= & \frac{1}{\sqrt{j_n}} \sum_{i\in \mathcal{J}_n} \big[\big(\tau_i^C(1,t,m,s) - \tau_i^C(0,t,m,s)\big)-OCDE(t,m,s)\big] \\
& + R_{1n}(1) - R_{1n}(0) + R_{2n}(1) - R_{2n}(0),
\end{align*}
where
\begin{gather*}
  R_{1n}(d) \coloneq \frac{1}{\sqrt{j_n}} \sum_{i\in \mathcal{J}_n} I_i(d) \big(Y_i - \mu_i^C(d,t,m,s)\big) \left(\frac{1}{\widehat{\pi}_i^C(d,t,m,s)}-\frac{1}{\pi_i^C(d,t,m,s)}\right), \\
  R_{2n}(d) \coloneq \frac{1}{\sqrt{j_n}} \sum_{i\in \mathcal{J}_n} \big(\widehat{\mu}_i^C(d,t,m,s)-\mu_i^C(d,t,m,s)\big) \left(1 - \frac{I_i(d)}{\widehat{\pi}_i^C(d,t,m,s)}\right).
\end{gather*}

We first show that
\begin{align*}
  & \sigma_{n,C}^{-1}\frac{1}{\sqrt{j_n}} \sum_{i\in \mathcal{J}_n} \big[\big(\tau_i^C(1,t,m,s) - \tau_i^C(0,t,m,s)\big)-OCDE(t,m,s)\big] \\
  = & \sigma_{n,C}^{-1}\frac{1}{\sqrt{j_n}} \sum_{i\in \mathcal{J}_n} \varphi_i^C(t,m,s) \xrightarrow{d} \mathcal{N}(0,1).
\end{align*}
Define the centered variable $\overline{\varphi}_i^C(t,m,s)\coloneq\varphi_i^C(t,m,s)-\E\big(\varphi_i^C(t,m,s)\mid\bX,\bA\big)$. Since
\[
\E\big(Y_i - \mu_i^C(d,t,m,s) \mid D_i=d,T_i=t,M_i=m,S_i=s,\bX,\bA\big)=0
\]
we have $\E\big(\tau_i^C(d,t,m,s)\mid\bX,\bA\big)=\mu_i^C(d,t,m,s)$, and therefore
\[
\E \big(\varphi_i^C(t,m,s) \mid \bX,\bA \big) = \mu_i^C(1,t,m,s) - \mu_i^C(0,t,m,s) - OCDE(t,m,s).
\]
By Theorem \ref{ch5:thm:own-cde-identification}, $\displaystyle OCDE(t,m,s)=\frac{1}{j_n}\sum_{i\in\mathcal{J}_n}[\mu_i^C(1,t,m,s)-\mu_i^C(0,t,m,s)]$. Hence \\$\sum_{i\in\mathcal{J}_n}\E\big(\varphi_i^C(t,m,s)\mid\bX,\bA\big)=0$, so
\[
\frac{1}{\sqrt{j_n}} \sum_{i\in \mathcal{J}_n} \varphi_i^C(t,m,s) = \frac{1}{\sqrt{j_n}} \sum_{i\in \mathcal{J}_n} \overline{\varphi}_i^C(t,m,s).
\]

Define the triangular array on the full sample $\mathcal{N}_n$ by
\[
\widetilde{\varphi}_{i}^C(t,m,s) \coloneq \sqrt{\frac{n}{j_n}} \1\{i\in\mathcal{J}_n\} \overline{\varphi}_i^C(t,m,s), \quad i\in\mathcal{N}_n.
\]
We now verify the conditions required by the network central limit theorem of \citet{kojevnikov2021limit}. It follows directly from the definition of $\psi$-dependence that if $\{Z_i\}_{i\in\mathcal{N}_n}$ is conditionally $\psi$-dependent given $\{\mathcal{F}_n\}_{n\in\mathbb{N}}$, $\{a_i\}_{i\in\mathcal{N}_n}$ is $\mathcal{F}_n$-measurable, and $\{w_i\}_{i\in\mathcal{N}_n}$ is $\mathcal{F}_n$-measurable and uniformly bounded, then $\{w_i(Z_i-a_i)\}_{i\in\mathcal{N}_n}$ remains conditionally $\psi$-dependent, with its dependence coefficient differing by at most a multiplicative constant (which can be absorbed into $C$). Since $\E\big(\varphi_i^C(t,m,s)\mid\bX,\bA\big)$ is $\mathcal{F}_n$-measurable and, by Assumption \ref{ch5:ass:own-cde-moments-overlap}(ii), the weight $\sqrt{n/j_n}\1\{i\in\mathcal{J}_n\}\leqslant 1/\underline{\pi}$ is uniformly bounded, Lemma \ref{ch5:lem:own-cde-psi-dependence} implies that $\{\widetilde{\varphi}_i^C(t,m,s)\}_{i\in\mathcal{N}_n}$ is conditionally $\psi$-dependent given $\mathcal{F}_n$, with dependence coefficient still taken to be $\psi_n(r)$.

We next verify uniform boundedness of the moments. By Assumption \ref{ch5:ass:own-cde-moments-overlap}(i), there exist constants $M<\infty$ and $p>4$ such that $\E\big(|Y_i(\bd,\bm)|^p\mid\bX,\bA\big)\leqslant M$ almost surely for every $n$, $i$, and $(\bd,\bm)$. By Theorem \ref{ch5:thm:own-cde-identification}, $\mu_i^C(d,t,m,s)=Y_i^C(d,t,m,s)$, which by definition is a convex combination of terms of the form $\E\big(Y_i(d,\bdmi,m,\bmmi)\mid\bX,\bA\big)$. Jensen's inequality therefore yields
\begin{equation} 
  |\mu_i^C(d,t,m,s)|^p \leqslant M
  \label{ch5:eq:proof-own-cde-regression-moment-bound}
\end{equation}
almost surely. Moreover, because $\pi_i^C(d,t,m,s)$ is uniformly bounded away from zero by Assumption \ref{ch5:ass:own-cde-moments-overlap}(ii), there exists a constant $C>0$ such that
\[
\begin{aligned}
|\tau_i^C(d,t,m,s)| & \leqslant |\mu_i^C(d,t,m,s)| + \frac{I_i(d)}{\pi_i^C(d,t,m,s)} \big|Y_i-\mu_i^C(d,t,m,s)\big|\\
& \leqslant C \big(|Y_i| + |\mu_i^C(d,t,m,s)|\big).
\end{aligned}
\]
Taking conditional expectations and using Assumptions \ref{ch5:ass:consistency} and \ref{ch5:ass:own-cde-moments-overlap}(i), (\ref{ch5:eq:proof-own-cde-regression-moment-bound}), and $(a+b)^p\leqslant 2^{p-1}(a^p+b^p)$ gives
\begin{equation}
  \E\big(|\tau_i^C(d,t,m,s)|^p\mid \bX,\bA \big) < \infty
  \label{ch5:eq:proof-own-cde-aipw-score-moment-bound}
\end{equation}
almost surely. In addition, (\ref{ch5:eq:proof-own-cde-regression-moment-bound}) implies $|OCDE(t,m,s)|\leqslant 2M^{1/p}$. Combining this with $|\varphi_i^C(t,m,s)|\leqslant|\tau_i^C(1,t,m,s)|+|\tau_i^C(0,t,m,s)|+|OCDE(t,m,s)|$, (\ref{ch5:eq:proof-own-cde-aipw-score-moment-bound}), and $(a+b+c)^p\leqslant 3^{p-1}(a^p+b^p+c^p)$, and then taking conditional expectations, yields
\[
\sup_n\max_{i\in\mathcal{N}_n} \E\big(|\widetilde{\varphi}_{i}^C(t,m,s)|^p \mid \bX,\bA \big) < \infty
\]
almost surely.

We have established conditional $\psi$-dependence and uniform boundedness of the relevant moments. Together with Assumptions \ref{ch5:ass:own-cde-moments-overlap}(iii) and \ref{ch5:ass:own-cde-weak-dependence}, Theorem 3.2 of \citet{kojevnikov2021limit} therefore gives
\[
\begin{aligned}
  & \sigma_{n,C}^{-1}\frac{1}{\sqrt{j_n}} \sum_{i\in \mathcal{J}_n} \varphi_i^C(t,m,s) = \sigma_{n,C}^{-1} \frac{1}{\sqrt{n}} \sum_{i=1}^n \overline{\varphi}_{i}^C(t,m,s) \\
  & = \sigma_{n,C}^{-1} \frac{1}{\sqrt{n}} \sum_{i=1}^n \widetilde{\varphi}_{i}^C(t,m,s) \xrightarrow{d} \mathcal{N}(0,1).
\end{aligned}
\]

We next show that $R_{1n}(d)=o_p(1)$. By the definition of $\Psi_{\pi^C}$,
\[
R_{1n}(d) = \sqrt{j_n} \frac{1}{j_n} \sum_{i\in\mathcal{J}_n} \Psi_{\pi^C}\big(Z_i,\widehat{\pi}_i^C(d,t,m,s),d,t,m,s\big),
\]
Hence, by Assumption \ref{ch5:ass:own-cde-gnn-convergence-rates}(ii) and $j_n\leqslant n$, $R_{1n}(d)=\sqrt{j_n}o_p(n^{-1/2})=o_p(1)$.

We next show that $R_{2n}(d)=o_p(1)$. Using the identity
\[
1 - \frac{I_i(d)}{\widehat{\pi}_i^C(d,t,m,s)} = \left(1-\frac{I_i(d)}{\pi_i^C(d,t,m,s)}\right) + I_i(d) \frac{\widehat{\pi}_i^C(d,t,m,s)-\pi_i^C(d,t,m,s)}{\widehat{\pi}_i^C(d,t,m,s)\pi_i^C(d,t,m,s)},
\]
we decompose $R_{2n}(d)$ as $R_{2n}(d)=R_{21,n}(d)+R_{22,n}(d)$, where
{\small
\begin{gather*}
  R_{21,n}(d) \coloneq \frac{1}{\sqrt{j_n}} \sum_{i\in\mathcal{J}_n} \big(\widehat{\mu}_i^C(d,t,m,s)-\mu_i^C(d,t,m,s)\big) \left(1-\frac{I_i(d)}{\pi_i^C(d,t,m,s)}\right), \\
  R_{22,n}(d) \coloneq \frac{1}{\sqrt{j_n}} \sum_{i\in\mathcal{J}_n}\big(\widehat{\mu}_i^C(d,t,m,s)-\mu_i^C(d,t,m,s)\big) I_i(d) \frac{\widehat{\pi}_i^C(d,t,m,s)-\pi_i^C(d,t,m,s)}{\widehat{\pi}_i^C(d,t,m,s)\pi_i^C(d,t,m,s)}.
\end{gather*}
}%
For $R_{21,n}(d)$, the definition of $\Psi_{\mu^C}$ gives
\[
R_{21,n}(d) = \sqrt{j_n} \frac{1}{j_n} \sum_{i\in\mathcal{J}_n} \Psi_{\mu^C}\big(Z_i,\widehat{\mu}_i^C(d,t,m,s),d,t,m,s\big),
\]
Thus, by the same argument as for $R_{1n}(d)$ and Assumption \ref{ch5:ass:own-cde-gnn-convergence-rates}(ii), $R_{21,n}(d)=o_p(1)$. 

For $R_{22,n}(d)$, Assumption \ref{ch5:ass:own-cde-moments-overlap}(ii) implies that there exists a constant $C>0$ such that $[\widehat{\pi}_i^C(d,t,m,s)\pi_i^C(d,t,m,s)]^{-1}\leqslant C$ almost surely for every $i\in\mathcal{N}_n$. Therefore, by the Cauchy--Schwarz inequality and Assumption \ref{ch5:ass:own-cde-gnn-convergence-rates}(i),
\begin{align*}
& |R_{22,n}(d)| \leqslant C \sqrt{j_n} \bigg(\frac{1}{j_n} \sum_{i\in\mathcal{J}_n}
\big(\widehat{\mu}_i^C(d,t,m,s) - \mu_i^C(d,t,m,s)\big)^2\bigg)^{1/2} \\
& \times \bigg(\frac{1}{j_n}\sum_{i\in\mathcal{J}_n}\big(\widehat{\pi}_i^C(d,t,m,s)-\pi_i^C(d,t,m,s)\big)^2\bigg)^{1/2} = C \sqrt{j_n} o_p(n^{-1/2}) = o_p(1).
\end{align*}
By Assumption \ref{ch5:ass:own-cde-moments-overlap}(iii), $\sigma_{n,C}^{-1}=O_p(1)$, so $\sigma_{n,C}^{-1}R_{1n}(d)=o_p(1)$ and $\sigma_{n,C}^{-1}R_{2n}(d)=o_p(1)$. Combining the preceding results and applying Slutsky's theorem yields
\[
\sigma_{n,C}^{-1} \sqrt{j_n} \big(\widehat{OCDE}(t,m,s) - OCDE(t,m,s)\big) \xrightarrow{d} \mathcal{N}(0,1).
\]
This completes the proof.

\subsection{Proof of Theorem \ref{ch5:thm:own-cde-variance-consistency}}
Fix $(t,m,s)$ and let $I_i(d)\coloneq\1\{D_i=d,\ T_i=t,\ M_i=m,\ S_i=s\}$ for $d\in\{0,1\}$. Define the oracle and feasible residual scores
\begin{gather*}
r_i(d) \coloneq \tau_i^C(d,t,m,s)-\mu_i^C(d,t,m,s) = \frac{I_i(d)}{\pi_i^C(d,t,m,s)}\bigl(Y_i-\mu_i^C(d,t,m,s)\bigr),\\
\widehat{r}_i(d) \coloneq \widehat{\tau}_i^C(d,t,m,s)-\widehat{\mu}_i^C(d,t,m,s) = \frac{I_i(d)}{\widehat{\pi}_i^C(d,t,m,s)}\bigl(Y_i-\widehat{\mu}_i^C(d,t,m,s)\bigr),\\
r_i\coloneq r_i(1)-r_i(0),\quad \widehat{r}_i \coloneq \widehat{r}_i(1) - \widehat{r}_i(0),
\end{gather*}
Then $\widetilde{\tau}=\bigl(\widehat r_i\bigr)_{i\in\mathcal{J}_n}$. For any $q\in\{U,PD\}$, define the corresponding oracle HAC quadratic form
\[
\widetilde \sigma_q^2 \coloneq \frac{1}{j_n} r'K_qr, \quad r \coloneq (r_i)_{i\in \mathcal{J}_n}.
\]

\textbf{Step 1: Re-express the target variance.} By the definition of the centered score,
\begin{equation}
  \varphi_i^C(t,m,s) = r_i+\mu_i^C(1,t,m,s)-\mu_i^C(0,t,m,s)-OCDE(t,m,s).
  \label{ch5:eq:proof-own-cde-centered-score-decomposition}
\end{equation}
By Theorem \ref{ch5:thm:own-cde-identification}, $\sum_{i\in\mathcal{J}_n}(\mu_i^C(1,t,m,s)-\mu_i^C(0,t,m,s)-OCDE(t,m,s))=0$. Hence \\$\sum_{i\in\mathcal{J}_n}\varphi_i^C(t,m,s)=\sum_{i\in\mathcal{J}_n}r_i$, and therefore
\[
\sigma_{n,C}^2 = \Var \bigg(\frac{1}{\sqrt{j_n}}\sum_{i\in \mathcal{J}_n}\varphi_i^C(t,m,s) \mid \bX,\bA\bigg) = \Var\bigg(\frac{1}{\sqrt{j_n}}\sum_{i\in \mathcal{J}_n}r_i \mid \bX,\bA\bigg).
\]

\textbf{Step 2: Establish equivalence of the oracle and feasible HAC estimators.} For any $q\in\{U,PD\}$, symmetry of $K_q$ gives $\displaystyle \widehat{\sigma}_q^2-\widetilde\sigma_q^2=\frac{1}{j_n}(\widehat r-r)'K_q(\widehat r+r)$. By the Cauchy--Schwarz inequality,
\[
|\widehat{\sigma}_q^2-\widetilde{\sigma}_q^2| \leqslant \bigg(\frac{1}{j_n}\sum_{i\in \mathcal{J}_n}(\widehat{r}_i-r_i)^2\bigg)^{1/2} \bigg(\frac{1}{j_n}\sum_{i\in \mathcal{J}_n}\bigg(\sum_{j\in \mathcal{J}_n}|K_{q,ij}| |\widehat{r}_j+r_j|\bigg)^2\bigg)^{1/2}.
\]
By the definition of the kernel matrices, for the positive-definite kernel $K_{PD}$, $K_{PD,ij}=0$ whenever $\ell_{\bA}(i,j)>b_n$, whereas for $\ell_{\bA}(i,j)\leqslant b_n$, the Cauchy--Schwarz inequality gives $0\leqslant K_{PD,ij}\leqslant 1$. Thus, uniformly over $q\in\{U,PD\}$, $|K_{q,ij}|\leqslant\1\{\ell_{\bA}(i,j)\leqslant b_n\}$, so $\sum_{j\in\mathcal{J}_n}|K_{q,ij}|\leqslant n(i,b_n)$. Therefore,
\begin{equation}
|\widehat{\sigma}_q^2- \widetilde{\sigma}_q^2| \leqslant \bigg(\frac{1}{j_n}\sum_{i\in \mathcal{J}_n}(\widehat{r}_i-r_i)^2\bigg)^{1/2}\bigg(\frac{1}{j_n}\sum_{i\in \mathcal{J}_n}n(i,b_n)^2\bigg)^{1/2} \max_{j\in \mathcal{J}_n}|\widehat{r}_j+r_j|.
\label{ch5:eq:proof-hac-feasible-oracle-difference-bound}
\end{equation}

We now bound the three factors on the right-hand side of (\ref{ch5:eq:proof-hac-feasible-oracle-difference-bound}). By Assumption \ref{ch5:ass:own-cde-hac-consistency}(i) and the overlap condition in Assumption \ref{ch5:ass:own-cde-moments-overlap}(ii), there exists a constant $C>0$ such that $|\widehat r_i(d)|\leqslant C$ and $|r_i(d)|\leqslant C$ almost surely for all $i\in\mathcal{J}_n$ and $d\in\{0,1\}$. Hence
\begin{equation}
\max_{j\in \mathcal{J}_n}|\widehat{r}_j+r_j| = O_p(1).
\label{ch5:eq:proof-hac-residual-max-bound}
\end{equation}
By Assumption \ref{ch5:ass:own-cde-hac-consistency}(iii) and $j_n/n\geqslant\underline\pi>0$,
\begin{equation}
\bigg(\frac{1}{j_n}\sum_{i\in \mathcal{J}_n}n(i,b_n)^2\bigg)^{1/2} \leqslant \bigg(\frac{n}{j_n}\frac{1}{n}\sum_{i=1}^n n(i,b_n)^2\bigg)^{1/2} = O_p(n^{1/4}).
\label{ch5:eq:proof-hac-neighborhood-growth-bound}
\end{equation}
Moreover, for any $d\in\{0,1\}$,
\begin{align*}
\widehat{r}_i(d)-r_i(d) & = I_i(d)\bigl(Y_i-\widehat{\mu}_i^C(d,t,m,s)\bigr)\frac{\pi_i^C(d,t,m,s)-\widehat{\pi}_i^C(d,t,m,s)}{\widehat{\pi}_i^C(d,t,m,s)\pi_i^C(d,t,m,s)} \\
& \quad - \frac{I_i(d)}{\pi_i^C(d,t,m,s)} \bigl(\widehat{\mu}_i^C(d,t,m,s)-\mu_i^C(d,t,m,s)\bigr),
\end{align*}
Therefore, by $(a-b)^2\leq 2a^2+2b^2$, Assumption \ref{ch5:ass:own-cde-hac-consistency}(i), and Assumption \ref{ch5:ass:own-cde-moments-overlap}(ii), there exists a constant $C>0$ such that
{\small
\[
\big(\widehat{r}_i(d)-r_i(d)\big)^2 \leqslant C\Big[\big(\widehat{\pi}_i^C(d,t,m,s)-\pi_i^C(d,t,m,s)\big)^2 + \big(\widehat{\mu}_i^C(d,t,m,s)-\mu_i^C(d,t,m,s)\big)^2\Big],
\]
}%
Applying $(a-b)^2\leq 2a^2+2b^2$ once more and Assumption \ref{ch5:ass:own-cde-gnn-convergence-rates}(i) yields
\begin{equation}
\bigg(\frac{1}{j_n}\sum_{i\in \mathcal{J}_n}(\widehat{r}_i-r_i)^2
\bigg)^{1/2} = o_p(n^{-1/4}).
\label{ch5:eq:proof-hac-residual-mse-rate}
\end{equation}
Substituting (\ref{ch5:eq:proof-hac-residual-max-bound}), (\ref{ch5:eq:proof-hac-neighborhood-growth-bound}), and (\ref{ch5:eq:proof-hac-residual-mse-rate}) into (\ref{ch5:eq:proof-hac-feasible-oracle-difference-bound}) gives
\begin{equation}
\widehat{\sigma}_q^2-\widetilde{\sigma}_q^2=o_p(1), \quad q\in\{U,PD\}.
\label{ch5:eq:proof-hac-feasible-oracle-equivalence}
\end{equation}

\textbf{Step 3: Establish consistency of the uniform-kernel oracle HAC estimator.} Define the triangular array
\[
u_i \coloneq \sqrt{\frac{n}{j_n}} \1\{i\in \mathcal{J}_n\}r_i, \quad i \in \mathcal{N}_n,
\]
Then
\[
\widetilde{\sigma}_U^2 = \frac{1}{j_n}\sum_{i\in \mathcal{J}_n}\sum_{j\in \mathcal{J}_n}r_ir_j \1\{\ell_{\bA}(i,j)\leqslant b_n\} = \frac{1}{n}\sum_{i=1}^n\sum_{j=1}^n u_i u_j \1\{\ell_{\bA}(i,j)\leqslant b_n\}.
\]
We verify the conditions of Proposition 4.1 in \citet{kojevnikov2021limit}. First, by the definition of $r_i(d)$, $\E[r_i(d)\mid\bX,\bA]=0$, so $\E[u_i\mid\bX,\bA]=0$. Second, by (\ref{ch5:eq:proof-own-cde-centered-score-decomposition}), $r_i$ is obtained from $\varphi_i^C(t,m,s)$ by subtracting an $\mathcal{F}_n=\sigma(\bX,\bA)$-measurable quantity, and $u_i$ is then obtained by multiplying by the bounded $\mathcal{F}_n$-measurable weight $\sqrt{n/j_n}\1\{i\in\mathcal{J}_n\}$. The conditional-$\psi$-dependence argument in the proof of Theorem \ref{ch5:thm:own-cde-asymptotic-normality} therefore implies that $\{u_i\}_{i\in\mathcal{N}_n}$ remains conditionally $\psi$-dependent given $\mathcal{F}_n$, with dependence coefficient $\psi_n(r)$. Hence Assumption 2.1(i) of \citet{kojevnikov2021limit} holds, while its Assumption 2.1(ii) is exactly Assumption \ref{ch5:ass:own-cde-weak-dependence}(i) here. Third, Assumption \ref{ch5:ass:own-cde-hac-consistency}(i) together with Assumption \ref{ch5:ass:own-cde-moments-overlap}(ii) implies that $u_i$ is uniformly bounded. Thus, for some $p>4$, $\sup_n\max_{i\in\mathcal{N}_n}\E(|u_i|^p\mid\bX,\bA)<\infty$ almost surely, so its Assumption 4.1(i) holds; its Assumptions 4.1(ii) and (iii) are precisely the first two conditions in Assumption \ref{ch5:ass:own-cde-hac-consistency}(ii). Finally, note that
\[
\sigma_{n,C}^{2} = \Var\bigg(\frac{1}{\sqrt{j_n}}\sum_{i\in \mathcal{J}_n}r_i \mid \bX,\bA \bigg) = \Var\bigg(\frac{1}{\sqrt{n}}\sum_{i=1}^{n}u_i \mid \bX,\bA\bigg),
\]
Proposition 4.1 of \citet{kojevnikov2021limit} therefore yields
\begin{equation}
\widetilde{\sigma}_U^2 = \sigma_{n,C}^2 + o_p(1).
\label{ch5:eq:proof-uniform-hac-oracle-consistency}
\end{equation}

\textbf{Step 4: Establish consistency of the positive-definite-kernel oracle HAC estimator.}
Using the triangular array $\{u_i\}_{i\in\mathcal{N}_n}$ from Step 3, we have
\[
\widetilde{\sigma}_{PD}^{2}
=
\frac{1}{j_n}\sum_{i\in \mathcal{J}_n}\sum_{j\in \mathcal{J}_n}r_ir_jK_{PD,ij}
=
\frac{1}{n}\sum_{i=1}^{n}\sum_{j=1}^{n}u_i u_j K_{PD,ij},
\]
Because $\E(u_i\mid\bX,\bA)=0$, $\displaystyle \sigma_{n,C}^{2}=\frac{1}{n}\sum_{i=1}^{n}\sum_{j=1}^{n}\E(u_i u_j\mid\bX,\bA)$. Decompose
\begin{equation}
\widetilde{\sigma}_{PD}^{2}-\sigma_{n,C}^{2} = \underbrace{\widetilde{\sigma}_{PD}^{2}-\E(\widetilde{\sigma}_{PD}^{2}\mid \bX,\bA)}_{\eqcolon A_{n,1}} + \underbrace{\E(\widetilde{\sigma}_{PD}^{2}\mid \bX,\bA)-\sigma_{n,C}^{2}}_{\eqcolon A_{n,2}}.
\label{ch5:eq:proof-positive-definite-hac-error-decomposition}
\end{equation}
The conditional $\psi$-dependence established in Step 3 (together with Assumption \ref{ch5:ass:own-cde-weak-dependence}(i), which verifies Assumption 2.1 of \citet{kojevnikov2021limit}) and finiteness of the conditional moments ensure that all conditions required for its Corollary A.2 are satisfied.

For $A_{n,1}$, by definition and using $0\leqslant K_{PD,ij}\leqslant 1$ and $K_{PD,ij}=0$ whenever $\ell_{\bA}(i,j)>b_n$,
\[
\begin{aligned}
  & \Var(A_{n,1} \mid \bX,\bA) \\
  & \leqslant \frac{1}{n^2} \sum_{i,j,k,l=1}^{n} \1\{\ell_{\bA}(i,j)\leqslant b_n\}\1\{\ell_{\bA}(k,l)\leqslant b_n\} |\Cov(u_i u_j,u_k u_l \mid \bX,\bA)|.
\end{aligned}
\]
Grouping by the set distance between $\{i,j\}$ and $\{k,l\}$, define
\[
H_n(r,b_n) \coloneq \# \big\{(i,j,k,l) \in \mathcal{N}_n^4: j\in \mathcal{N}(i,b_n),l \in \mathcal{N}(k,b_n),\ell_{\bA}(\{i,j\},\{k,l\})=r \big\}.
\]
By the preceding corollary, there exists a constant $C<\infty$ such that \\$|\Cov(u_i u_j,u_k u_l\mid\bX,\bA)|\leqslant C\psi_n\big(\ell_{\bA}(\{i,j\},\{k,l\})\big)^{1-4/p}$ almost surely. Hence
{\small
\[
  \Var(A_{n,1} \mid \bX,\bA) \leqslant \frac{C}{n^2}\sum_{r=0}^{\infty}H_n(r,b_n)\psi_n(r)^{1-4/p} \leqslant \frac{4C}{n}\sum_{r=0}^{\infty}c_n(r,b_n,2)\psi_n(r)^{1-4/p} = o(1),
\]
}%
where the second inequality follows from equation (A.13) of \citet{kojevnikov2021limit}, $H_n(r,b_n)\leqslant 4n c_n(r,b_n,2)$, and the final equality follows from Assumption \ref{ch5:ass:own-cde-hac-consistency}(ii). Thus $A_{n,1}=o_p(1)$.

For $A_{n,2}$, by definition $\displaystyle A_{n,2}=\frac{1}{n}\sum_{i=1}^{n}\sum_{j=1}^{n}(K_{PD,ij}-1)\E(u_i u_j\mid\bX,\bA)$. The same corollary implies that there exists a constant $C<\infty$ such that $|\E(u_i u_j\mid\bX,\bA)|=|\Cov(u_i,u_j\mid\bX,\bA)|\leqslant C\psi_n(\ell_{\bA}(i,j))^{1-2/p}$ almost surely. Therefore,
\[
  |A_{n,2}| \leqslant C \sum_{r=0}^{n}\psi_n(r)^{1-2/p} \frac{1}{n}\sum_{i=1}^{n}\sum_{j\in \mathcal{N}^B(i,r)}|K_{PD,ij}-1| = o(1),
\]
where the final equality follows from the third condition in Assumption \ref{ch5:ass:own-cde-hac-consistency}(ii). Combining this with (\ref{ch5:eq:proof-positive-definite-hac-error-decomposition}) gives
\begin{equation}
  \widetilde{\sigma}_{PD}^{2}-\sigma_{n,C}^{2} = o_p(1).
  \label{ch5:eq:proof-positive-definite-hac-oracle-consistency}
\end{equation}

\textbf{Step 5: Combine the preceding results.}
By (\ref{ch5:eq:proof-hac-feasible-oracle-equivalence}), (\ref{ch5:eq:proof-uniform-hac-oracle-consistency}), and (\ref{ch5:eq:proof-positive-definite-hac-oracle-consistency}), for every $q\in\{U,PD\}$, $\widehat{\sigma}_q^2=\sigma_{n,C}^2+o_p(1)$. Therefore,
\[
|\widehat{\sigma}_{C}^{2}-\sigma_{n,C}^{2}| = \big| \max\{\widehat{\sigma}_{U}^{2},\widehat{\sigma}_{PD}^{2}\} - \sigma_{n,C}^{2} \big| \leqslant \max_{q\in\{U,PD\}} \big|\widehat{\sigma}_{q}^{2}-\sigma_{n,C}^{2}\big| = o_p(1),
\]
Thus $\widehat{\sigma}_{C}^{2}=\sigma_{n,C}^{2}+o_p(1)$. This completes the proof.

\subsection{Proof of Theorem \ref{ch5:thm:own-nde-asymptotic-normality}}
We first state a lemma needed for the proof.
\begin{lemma}[$\psi$-dependence]
  \label{ch5:lem:own-nde-psi-dependence}
  Under Assumptions \ref{ch5:ass:consistency}, \ref{ch5:ass:error-term-independence}, \ref{ch5:ass:own-natural-local-exposure-mapping}, \ref{ch5:ass:own-natural-effects-moments-overlap}(ii), and \ref{ch5:ass:approximate-neighborhood-interference}, for any fixed $t$, let $\mathcal{F}_n\coloneq\sigma(\bX,\bA)$. Then the triangular array $\{\varphi_i^N(t)\}_{i\in\mathcal{N}_n}$ is conditionally $\psi$-dependent given $\{\mathcal{F}_n\}_{n\in\mathbb{N}}$, with dependence coefficient $\psi_n(r)$ defined in (\ref{ch5:eq:own-natural-effects-psi-dependence-coefficient}).
\end{lemma}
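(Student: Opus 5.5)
I will mirror the proof of Lemma \ref{ch5:lem:own-cde-psi-dependence}: localize $Y_i$ through the approximate neighborhood interference model, use Assumption \ref{ch5:ass:error-term-independence} to get exact conditional independence of the localized scores, and then bound the localization error. Fix $t$ and write $I_i(d)\coloneq\1\{D_i=d,T_i=t\}$ and $J_i(d^*)\coloneq\1\{D_i=d^*\}$. All nuisance functions $\mu_i^N,\eta_i^N,q_i^N,e_i^N,p_i^N$ are $\mathcal{F}_n$-measurable, because they are conditional quantities given $(\bX,\bA)$. Also $NDE(t)$ is $\mathcal{F}_n$-measurable.

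The only random inputs to $\varphi_i^N(t)$ are therefore $D_i$, $T_i$, $M_i$ and $Y_i$.
\begin{itemize}
\item By the models (\ref{ch5:eq:potential-mediator-model}), (\ref{ch5:eq:treatment-assignment-model}) and Assumption \ref{ch5:ass:own-natural-local-exposure-mapping}, $(D_i,T_i,M_i)$ is measurable with respect to $\sigma\big((\nu_j,\omega_j,\varepsilon_j):j\in\mathcal{N}(i,K_T)\big)\vee\mathcal{F}_n$. Indeed $D_i$ uses $\nu_i$, $M_i$ uses $(D_i,\omega_i)$, and $T_i$ uses $\nu_j$ for $j\in\mathcal{N}(i,K_T)$.
\item For $r>0$, define $Y_i^{(r)}$ exactly as in the proof of Lemma \ref{ch5:lem:own-cde-psi-dependence}, using $f_{n,r}$. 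Its inputs $\bD_{\mathcal{N}(i,r)}$ and $\bM_{\mathcal{N}(i,r)}$ are functions of the errors in $\mathcal{N}(i,r)$, so $Y_i^{(r)}$ is measurable with respect to the errors indexed by $\mathcal{N}(i,r)$, together with $\mathcal{F}_n$.
\item Let $\varphi_i^{N,(r)}(t)$ be the score with $Y_i$ replaced by $Y_i^{(r)}$. It is measurable with respect to the errors in $\mathcal{N}(i,\max\{r,K_T\})$, together with $\mathcal{F}_n$.
\end{itemize}

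\textbf{Case $r>4K_T$.} Here $\psi_n(r)=\gamma_n(r/4)$, and I use the localization radius $r/4>K_T$. For $(H,H')\in\mathcal{P}_n(h,h',r)$, the unions $\bigcup_{i\in H}\mathcal{N}(i,r/4)$ and $\bigcup_{j\in H'}\mathcal{N}(j,r/4)$ are disjoint. Assumption \ref{ch5:ass:error-term-independence} then gives conditional independence of the two localized score vectors, so
\[
\Cov(\xi^{(r/4)},\zeta^{(r/4)}\mid\mathcal{F}_n)=0.
\]
$Y_i$ enters $\varphi_i^N(t)$ linearly, only through the last term of $\tau_i^N$. Hence
\[
\varphi_i^N-\varphi_i^{N,(r/4)}=\Big[\frac{I_i(1)}{e_i^N(1,t)}\frac{q_i^N(M_i\mid 0)}{q_i^N(M_i\mid 1)}-\frac{I_i(0)}{e_i^N(0,t)}\Big](Y_i-Y_i^{(r/4)}).
\]
The bracketed weight is bounded by $1/\underline\pi+1/\underline\pi^2$ under Assumption \ref{ch5:ass:own-natural-effects-moments-overlap}(ii). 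Combining Assumptions \ref{ch5:ass:consistency} and \ref{ch5:ass:approximate-neighborhood-interference} with iterated expectations gives
\[
\E\big(|\varphi_i^N-\varphi_i^{N,(r/4)}|\mid\mathcal{F}_n\big)\le C\gamma_n(r/4).
\]
Write $\Cov(\xi,\zeta\mid\mathcal{F}_n)$ as the sum of the three terms used before. Bound the two nonzero terms with $|\Cov(U,V\mid\mathcal{F}_n)|\le2\|V\|_\infty\E(|U|\mid\mathcal{F}_n)$ and the Lipschitz property of $f$ and $f'$. This yields
\[
|\Cov(\xi,\zeta\mid\mathcal{F}_n)|\le Chh'(\|f\|_\infty+\mathrm{Lip}(f))(\|f'\|_\infty+\mathrm{Lip}(f'))\,\gamma_n(r/4).
\]

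\textbf{Case $r\le4K_T$.} Here $\psi_n(r)=1$, and the trivial bound $2\|f\|_\infty\|f'\|_\infty$ suffices.

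The main thing to check is the measurability claim. The terms $\eta_i^N(d,t,M_i)$ and the ratio $q_i^N(M_i\mid d^*)/q_i^N(M_i\mid d)$ depend on $M_i$, but they involve only own-node errors and deterministic functions of $M_i$. So no mediator of another node enters except through $Y_i$, and that dependence is controlled by $f_{n,r}$. Unlike the controlled direct effect case, there is no mediator mapping, which is why $K_T$ alone governs the threshold in (\ref{ch5:eq:own-natural-effects-psi-dependence-coefficient}). Once this is verified, the argument is routine.
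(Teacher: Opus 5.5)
Your proposal is correct and follows the paper's own route. The paper's proof of this lemma consists only of the statement that it is identical to the proof of Lemma~\ref{ch5:lem:own-cde-psi-dependence}: localize $Y_i$ via $f_{n,r/4}$; for $r>4K_T$, obtain exact conditional independence of the localized scores from Assumption~\ref{ch5:ass:error-term-independence} and bound the overlap-weighted localization error by $C\gamma_n(r/4)$; for $r\le 4K_T$, use the trivial bound. Your explicit handling of the extra density-ratio factor $q_i^N(M_i\mid 0)/q_i^N(M_i\mid 1)$ in the weight, and your check that $(D_i,T_i,M_i)$ depends only on errors in $\mathcal{N}(i,K_T)$, supply details the paper leaves implicit.
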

\begin{proof}
  The proof is identical to that of Lemma \ref{ch5:lem:own-cde-psi-dependence} and is therefore omitted.
\end{proof}

\textbf{Proof of Theorem \ref{ch5:thm:own-nde-asymptotic-normality}.}\quad 
Fix $t$. We use the decomposition
{\small
\begin{align*}
  & \sqrt{j_n}\big(\widehat{ONDE}(t) - ONDE(t)\big) \\
  = & \frac{1}{\sqrt{j_n}}\sum_{i\in\mathcal{J}_n}\big[\big(\tau_i^N(1,0,t) - \tau_i^N(0,0,t)\big) - ONDE(t)\big] + \sum_{\ell=1}^6 R_{\ell n}(1,0,t) - \sum_{\ell=1}^6 R_{\ell n}(0,0,t),
\end{align*}
}%
where
{\small
\begin{gather*}
R_{1n}(d,d^*,t) \coloneq \frac{1}{\sqrt{j_n}}\sum_{i\in \mathcal{J}_n} \Psi_{\mu^N}(Z_i,\widehat{\mu}_i^N(d,d^*,t),d,d^*,t),\\
R_{2n}(d,d^*,t) \coloneq \frac{1}{\sqrt{j_n}}\sum_{i\in \mathcal{J}_n} \Psi_{\eta^N}(Z_i,\widehat{\eta}_i^N(d,t,M_i),d,d^*,t),\\
R_{3n}(d,d^*,t) \coloneq \frac{1}{\sqrt{j_n}}\sum_{i\in \mathcal{J}_n} \Psi_{q^N}(Z_i,\widehat q_i^N(M_i\mid d^*),\widehat q_i^N(M_i\mid d),d,d^*,t),\\
R_{4n}(d,d^*,t) \coloneq \frac{1}{\sqrt{j_n}}\sum_{i\in \mathcal{J}_n} \Psi_{e^N}(Z_i,\widehat{e}_i^N(d,t),d,d^*,t),\\
R_{5n}(d,d^*,t) \coloneq \frac{1}{\sqrt{j_n}}\sum_{i\in \mathcal{J}_n} \Psi_{p^N}(Z_i,\widehat{p}_i^N(d^*),d,d^*,t),\\
R_{6n}(d,d^*,t) \coloneq \frac{1}{\sqrt{j_n}} \sum_{i\in \mathcal{J}_n}\bigg\{\1\{D_i=d^*\}\big(\widehat{\eta}_i^N(d,t,M_i)-\eta_i^N(d,t,M_i)\big) \left(\frac{1}{\widehat{p}_i^N(d^*)}-\frac{1}{p_i^N(d^*)}\right)\\
- \1\{D_i=d^*\}\big(\widehat{\mu}_i^N(d,d^*,t)-\mu_i^N(d,d^*,t)\big)\left(\frac{1}{\widehat{p}_i^N(d^*)}-\frac{1}{p_i^N(d^*)}\right)\\
+ \1\{D_i=d,T_i=t\}\bigl(Y_i-\eta_i^N(d,t,M_i)\bigr)
\left(\frac{\widehat q_i^N(M_i\mid d^*)}{\widehat q_i^N(M_i\mid d)}-\frac{q_i^N(M_i\mid d^*)}{q_i^N(M_i\mid d)}\right) \\
\hspace{3cm}\left(\frac{1}{\widehat{e}_i^N(d,t)}-\frac{1}{e_i^N(d,t)}\right)\\
- \1\{D_i=d,T_i=t\}\,\frac{q_i^N(M_i\mid d^*)}{q_i^N(M_i\mid d)} \big(\widehat{\eta}_i^N(d,t,M_i)-\eta_i^N(d,t,M_i)\big) \left(\frac{1}{\widehat{e}_i^N(d,t)}-\frac{1}{e_i^N(d,t)}\right)\\
- \frac{\1\{D_i=d,T_i=t\}}{e_i^N(d,t)} \left(\frac{\widehat q_i^N(M_i\mid d^*)}{\widehat q_i^N(M_i\mid d)}-\frac{q_i^N(M_i\mid d^*)}{q_i^N(M_i\mid d)}\right) \big(\widehat{\eta}_i^N(d,t,M_i)-\eta_i^N(d,t,M_i)\big)\\
- \1\{D_i=d,T_i=t\} \left(\frac{\widehat q_i^N(M_i\mid d^*)}{\widehat q_i^N(M_i\mid d)}-\frac{q_i^N(M_i\mid d^*)}{q_i^N(M_i\mid d)}\right) \big(\widehat{\eta}_i^N(d,t,M_i)-\eta_i^N(d,t,M_i)\big) \\
\hspace{3cm}\left(\frac{1}{\widehat{e}_i^N(d,t)}-\frac{1}{e_i^N(d,t)}\right)\bigg\}.
\end{gather*}
}%

We first show that
\begin{align*}
  & \sigma_{n,N}^{-1}\frac{1}{\sqrt{j_n}} \sum_{i\in \mathcal{J}_n} \big[\big(\tau_i^N(1,0,t) - \tau_i^N(0,0,t)\big)-ONDE(t)\big] \\
  = & \sigma_{n,N}^{-1}\frac{1}{\sqrt{j_n}} \sum_{i\in \mathcal{J}_n} \varphi_i^N(t) \xrightarrow{d} \mathcal{N}(0,1).
\end{align*}
Define the centered variable $\overline{\varphi}_i^N(t)\coloneq\varphi_i^N(t)-\E(\varphi_i^N(t)\mid\bX,\bA)$. By the definition of $\mu_i^N(d,d^*,t)$ and iterated conditional expectations,
\begin{gather*}
 \E\bigg[\frac{\1\{D_i=d^*\}}{p_i^N(d^*)}\big[\eta_i^N(d,t,M_i) - \mu_i^N(d,d^*,t)\big] \mid \bX,\bA \bigg] = 0, \\
\E\bigg[\frac{\1\{D_i=d,T_i=t\}}{e_i^N(d,t)}\frac{q_i^N(M_i \mid d^*)}{q_i^N(M_i \mid d)}\big[Y_i - \eta_i^N(d,t,M_i)\big] \mid \bX,\bA \bigg] = 0,
\end{gather*}
Hence $\E[\tau_i^N(d,d^*,t)\mid\bX,\bA]=\mu_i^N(d,d^*,t)$, so
\[
  \E(\varphi_i^N(t) \mid \bX,\bA) = \mu_i^N(1,0,t) - \mu_i^N(0,0,t) - ONDE(t).
\]
By Theorem \ref{ch5:thm:own-natural-effects-identification}, $\displaystyle ONDE(t)=\frac{1}{j_n}\sum_{i\in\mathcal{J}_n}[\mu_i^N(1,0,t)-\mu_i^N(0,0,t)]$. Therefore $\sum_{i\in\mathcal{J}_n}\E(\varphi_i^N(t)\mid\bX,\bA)=0$, and hence
\[
\frac{1}{\sqrt{j_n}} \sum_{i\in\mathcal{J}_n}\varphi_i^N(t) = \frac{1}{\sqrt{j_n}} \sum_{i\in\mathcal{J}_n}\overline{\varphi}_i^N(t).
\]

Define the triangular array
\[
\widetilde{\varphi}_i^N(t) \coloneq \sqrt{\frac{n}{j_n}} \1\{i\in\mathcal{J}_n\} \overline{\varphi}_i^N(t), \quad i\in\mathcal{N}_n.
\]
By Lemma \ref{ch5:lem:own-nde-psi-dependence} and the same argument used in the proof of Theorem \ref{ch5:thm:own-cde-asymptotic-normality}, $\{\widetilde{\varphi}_i^N(t)\}_{i\in\mathcal{N}_n}$ is conditionally $\psi$-dependent given $\mathcal{F}_n$, with dependence coefficient $\psi_n(r)$ as defined in (\ref{ch5:eq:own-natural-effects-psi-dependence-coefficient}).

Assumption \ref{ch5:ass:own-natural-effects-moments-overlap}(i), Theorem \ref{ch5:thm:own-natural-effects-identification}, and Jensen's inequality imply $|\eta_i^N(d,t,m)|^p\leqslant M$. By the definition of $\mu_i^N(d,d^*,t)$ and Jensen's inequality,
\begin{equation}
  |\mu_i^N(d,d^*,t)|^p \leqslant M
  \label{ch5:eq:proof-own-natural-regression-moment-bound}
\end{equation}
almost surely. By Assumption \ref{ch5:ass:own-natural-effects-moments-overlap}(ii), there exists a constant $C>0$ such that
\begin{align*}
|\tau_i^N(d,d^*,t)| & \leqslant |\mu_i^N(d,d^*,t)| + \frac{\1\{D_i=d^*\}}{p_i^N(d^*)}|\eta_i^N(d,t,M_i) - \mu_i^N(d,d^*,t)|\\
& \quad + \frac{\1\{D_i=d,T_i=t\}}{e_i^N(d,t)} \frac{q_i^N(M_i \mid d^*)}{q_i^N(M_i \mid d)}|Y_i-\eta_i^N(d,t,M_i)| \\
& \leqslant C \big(|Y_i|+|\eta_i^N(d,t,M_i)|+|\mu_i^N(d,d^*,t)|\big),
\end{align*}
Therefore, by Assumptions \ref{ch5:ass:consistency} and \ref{ch5:ass:own-natural-effects-moments-overlap}(i) and the inequality $(a+b+c)^p\leqslant 3^{p-1}(a^p+b^p+c^p)$, $\E[|\tau_i^N(d,d^*,t)|^p\mid\bX,\bA]<\infty$ almost surely. In addition, the definition of $ONDE(t)$ and (\ref{ch5:eq:proof-own-natural-regression-moment-bound}) imply $|ONDE(t)|\leqslant 2M^{1/p}$. The same argument as in the proof of Theorem \ref{ch5:thm:own-cde-asymptotic-normality} therefore gives \\$\sup_n\max_{i\in\mathcal{N}_n}\E[|\widetilde{\varphi}_i^N(t)|^p\mid\bX,\bA]<\infty$ almost surely. Combining this with Assumptions \ref{ch5:ass:own-natural-effects-moments-overlap}(iii) and \ref{ch5:ass:own-natural-effects-weak-dependence}, Theorem 3.2 of \citet{kojevnikov2021limit} yields
\begin{equation}
  \sigma_{n,N}^{-1}\frac{1}{\sqrt{j_n}} \sum_{i\in \mathcal{J}_n} \varphi_i^N(t) = \sigma_{n,N}^{-1} \frac{1}{\sqrt{n}} \sum_{i=1}^n \overline{\varphi}_{i}^N(t) = \sigma_{n,N}^{-1} \frac{1}{\sqrt{n}} \sum_{i=1}^n \widetilde{\varphi}_{i}^N(t) \xrightarrow{d} \mathcal{N}(0,1).
  \label{ch5:eq:proof-own-nde-oracle-clt}
\end{equation}

By Assumption \ref{ch5:ass:own-natural-effects-gnn-convergence-rates}(ii) and $j_n\leqslant n$, for $\ell=1,\ldots,5$,
\[
R_{\ell n}(d,d^*,t) = \sqrt{j_n}o_p(n^{-1/2}) = o_p(1).
\]

It remains to show that $R_{6n}(d,d^*,t)=o_p(1)$. By Assumptions \ref{ch5:ass:consistency} and \ref{ch5:ass:own-natural-effects-moments-overlap}(i), $|Y_i|\leqslant M$, $|\eta_i^N(d,t,m)|\leqslant M$, $|\mu_i^N(d,d^*,t)|\leqslant M$, and $|Y_i-\eta_i^N(d,t,m)|\leqslant 2M$. Assumption \ref{ch5:ass:own-natural-effects-moments-overlap}(ii) further implies that there exists a constant $C<\infty$ such that
\begin{gather}
  \left|\frac{1}{\widehat{p}_i^N(d^*)} - \frac{1}{p_i^N(d^*)}\right| \leqslant C |\widehat{p}_i^N(d^*)-p_i^N(d^*)|, \label{ch5:eq:proof-own-nde-treatment-propensity-inverse-bound}\\
  \left|\frac{1}{\widehat{e}_i^N(d,t)} - \frac{1}{e_i^N(d,t)} \right| \leqslant C|\widehat{e}_i^N(d,t) - e_i^N(d,t)|, \label{ch5:eq:proof-own-nde-exposure-propensity-inverse-bound}\\
  \begin{aligned}
  & \left|\frac{\widehat{q}_i^N(M_i \mid d^*)}{\widehat{q}_i^N(M_i \mid d)} - \frac{q_i^N(M_i \mid d^*)}{q_i^N(M_i \mid d)}\right|\\
  & \leqslant C\big(\big|\widehat{q}_i^N(M_i \mid d^*) - q_i^N(M_i\mid d^*)| + |\widehat{q}_i^N(M_i \mid d) - q_i^N(M_i \mid d)\big|\big). 
  \end{aligned}
  \label{ch5:eq:proof-own-nde-mediator-density-ratio-bound}
\end{gather}

Since $R_{6n}$ is the sum of six summations, it suffices to show that each is $o_p(1)$. For the first summation, by (\ref{ch5:eq:proof-own-nde-treatment-propensity-inverse-bound}), the Cauchy--Schwarz inequality, and Assumption \ref{ch5:ass:own-natural-effects-gnn-convergence-rates}(i),
\begin{align*}
  & \bigg|\frac{1}{\sqrt{j_n}} \sum_{i\in \mathcal{J}_n}\1\{D_i=d^*\}\big(\widehat{\eta}_i^N(d,t,M_i)-\eta_i^N(d,t,M_i)\big) \left(\frac{1}{\widehat{p}_i^N(d^*)}-\frac{1}{p_i^N(d^*)}\right)\bigg| \\
  \leqslant & C \sqrt{j_n}\bigg[\frac{1}{j_n}\sum_{i\in\mathcal{J}_n}\big(\widehat{\eta}_i^N(d,t,M_i)-\eta_i^N(d,t,M_i)\big)^2\bigg]^{1/2}\bigg[\frac{1}{j_n}\sum_{i\in\mathcal{J}_n}\big(\widehat{p}_i^N(d^*)-p_i^N(d^*))^2\bigg]^{1/2} \\
  & = o_p(1).
\end{align*}
For the second summation, replace $\widehat{\eta}_i^N(d,t,M_i)-\eta_i^N(d,t,M_i)$ in the preceding display by $\widehat{\mu}_i^N(d,d^*,t)-\mu_i^N(d,d^*,t)$; the same argument shows that it is also $o_p(1)$. For the third summation, by (\ref{ch5:eq:proof-own-nde-exposure-propensity-inverse-bound}), (\ref{ch5:eq:proof-own-nde-mediator-density-ratio-bound}), $|Y_i-\eta_i^N(d,t,M_i)|\leqslant 2M$, the Cauchy--Schwarz inequality, and Assumption \ref{ch5:ass:own-natural-effects-gnn-convergence-rates}(i),
{\small
\begin{align*}
& \bigg|\frac{1}{\sqrt{j_n}} \sum_{i\in \mathcal{J}_n}\1\{D_i=d,T_i=t\}\bigl(Y_i-\eta_i^N(d,t,M_i)\bigr)
\left(\frac{\widehat q_i^N(M_i\mid d^*)}{\widehat q_i^N(M_i\mid d)}-\frac{q_i^N(M_i\mid d^*)}{q_i^N(M_i\mid d)}\right) \\
& \hspace{6cm} \left(\frac{1}{\widehat{e}_i^N(d,t)}-\frac{1}{e_i^N(d,t)}\right)\bigg| \\
& \leqslant 2MC^2 \sqrt{j_n} \bigg[\frac{1}{j_n}\sum_{i\in\mathcal{J}_n}\big(\widehat{q}_i^N(M_i \mid d^*) - q_i^N(M_i\mid d^*)\big)^2\bigg]^{1/2}\bigg[\frac{1}{j_n}\sum_{i\in\mathcal{J}_n}\big(\widehat{e}_i^N(d,t) - e_i^N(d,t)\big)^2\bigg]^{1/2} \\
& \quad + 2MC^2 \sqrt{j_n} \bigg[\frac{1}{j_n}\sum_{i\in\mathcal{J}_n}\big(\widehat{q}_i^N(M_i \mid d) - q_i^N(M_i\mid d)\big)^2\bigg]^{1/2}\bigg[\frac{1}{j_n}\sum_{i\in\mathcal{J}_n}\big(\widehat{e}_i^N(d,t) - e_i^N(d,t)\big)^2\bigg]^{1/2} \\
& = o_p(1).
\end{align*}
}%
The fourth through sixth summations are handled analogously. By Assumption \ref{ch5:ass:own-natural-effects-moments-overlap}(ii), $q_i^N(M_i\mid d^*)/q_i^N(M_i\mid d)$, $1/e_i^N(d,t)$, and $1/\widehat{e}_i^N(d,t)-1/e_i^N(d,t)$ are uniformly bounded. Treating these as bounded factors and applying (\ref{ch5:eq:proof-own-nde-exposure-propensity-inverse-bound}), (\ref{ch5:eq:proof-own-nde-mediator-density-ratio-bound}), the Cauchy--Schwarz inequality, and Assumption \ref{ch5:ass:own-natural-effects-gnn-convergence-rates}(i) to the remaining two estimation-error factors shows that each of the three terms is $o_p(1)$. Hence $R_{6n}(d,d^*,t)=o_p(1)$.

By Assumption \ref{ch5:ass:own-natural-effects-moments-overlap}(iii), $\sigma_{n,N}^{-1}=O_p(1)$, so $\sigma_{n,N}^{-1}R_{\ell n}(d,d^*,t)=o_p(1)$ for $\ell=1,\ldots,6$. Combining this with (\ref{ch5:eq:proof-own-nde-oracle-clt}) and Slutsky's theorem gives
\[
\sigma_{n,N}^{-1}\sqrt{j_n}\big(\widehat{ONDE}(t)-ONDE(t)\big) \xrightarrow{d} \mathcal{N}(0,1).
\]
This completes the proof.

\subsection{Proof of Theorem \ref{ch5:thm:own-nde-variance-consistency}}
The proof is identical to that of Theorem \ref{ch5:thm:own-cde-variance-consistency} and is therefore omitted.

\end{appendices}

\spacingset{1.4}
\bibliography{c.bib}

@article{rubin1974estimating,
  title={Estimating causal effects of treatments in randomized and nonrandomized studies.},
  author={Rubin, Donald B},
  journal={Journal of educational Psychology},
  volume={66},
  number={5},
  pages={688},
  year={1974},
  publisher={American Psychological Association}
}

@article{kojevnikov2021limit,
  title={Limit theorems for network dependent random variables},
  author={Kojevnikov, Denis and Marmer, Vadim and Song, Kyungchul},
  journal={Journal of Econometrics},
  volume={222},
  number={2},
  pages={882--908},
  year={2021},
  publisher={Elsevier}
}

@article{leung2022graph,
  title={Graph neural networks for causal inference under network confounding},
  author={Leung, Michael P and Loupos, Pantelis},
  journal={arXiv preprint arXiv:2211.07823},
  year={2022}
}

@article{leung2024identifying,
  title={Identifying treatment and spillover effects using exposure contrasts},
  author={Leung, Michael P},
  journal={arXiv preprint arXiv:2403.08183},
  year={2024}
}

@article{aronow2017estimating,
 author = {Peter M. Aronow and Cyrus Samii},
 journal = {The Annals of Applied Statistics},
 number = {4},
 pages = {1912--1947},
 publisher = {Institute of Mathematical Statistics},
 title = {Estimating average causal effects under general interference, with application to a social network experiment},
 urldate = {2026-06-13},
 volume = {11},
 year = {2017}
}

@article{savje2024causal,
  title={Causal inference with misspecified exposure mappings: separating definitions and assumptions},
  author={S{\"a}vje, Fredrik},
  journal={Biometrika},
  volume={111},
  number={1},
  pages={1--15},
  year={2024},
  publisher={Oxford University Press}
}

@article{forastiere2021identification,
  title={Identification and estimation of treatment and interference effects in observational studies on networks},
  author={Forastiere, Laura and Airoldi, Edoardo M and Mealli, Fabrizia},
  journal={Journal of the American Statistical Association},
  volume={116},
  number={534},
  pages={901--918},
  year={2021},
  publisher={Taylor \& Francis}
}

@article{leung2022causal,
  title={Causal inference under approximate neighborhood interference},
  author={Leung, Michael P},
  journal={Econometrica},
  volume={90},
  number={1},
  pages={267--293},
  year={2022},
  publisher={Wiley Online Library}
}

@article{li2022random,
  title={Random graph asymptotics for treatment effect estimation under network interference},
  author={Li, Shuangning and Wager, Stefan},
  journal={The Annals of Statistics},
  volume={50},
  number={4},
  pages={2334--2358},
  year={2022},
  publisher={Institute of Mathematical Statistics}
}

@article{kundu2024network,
  title={Network structural equation models for causal mediation and spillover effects},
  author={Kundu, Ritoban and Song, Peter XK},
  journal={arXiv preprint arXiv:2412.05397},
  year={2024}
}

@article{corso2020principal,
  title={Principal neighbourhood aggregation for graph nets},
  author={Corso, Gabriele and Cavalleri, Luca and Beaini, Dominique and Li{\`o}, Pietro and Veli{\v{c}}kovi{\'c}, Petar},
  journal={Advances in neural information processing systems},
  volume={33},
  pages={13260--13271},
  year={2020}
}

@article{chernozhukov2018double,
  title={Double/debiased machine learning for treatment and structural parameters},
  author={Chernozhukov, Victor and Chetverikov, Denis and Demirer, Mert and Duflo, Esther and Hansen, Christian and Newey, Whitney and Robins, James},
  journal={The Econometrics Journal},
  volume={21},
  number={1},
  year={2018},
  publisher={The Econometrics Journal}
}

@article{farrell2021deep,
  title={Deep neural networks for estimation and inference},
  author={Farrell, Max H and Liang, Tengyuan and Misra, Sanjog},
  journal={Econometrica},
  volume={89},
  number={1},
  pages={181--213},
  year={2021},
  publisher={Wiley Online Library}
}

@article{klyne2026average,
  title={Average partial effect estimation using double machine learning},
  author={Klyne, Harvey and Shah, Rajen D},
  journal={The Annals of Statistics},
  volume={54},
  number={1},
  pages={176--200},
  year={2026},
  publisher={Institute of Mathematical Statistics}
}

@article{hoshino2024causal,
  title={Causal inference with noncompliance and unknown interference},
  author={Hoshino, Tadao and Yanagi, Takahide},
  journal={Journal of the American Statistical Association},
  volume={119},
  number={548},
  pages={2869--2880},
  year={2024},
  publisher={Taylor \& Francis}
}

@article{chernozhukov2022locally,
  title={Locally robust semiparametric estimation},
  author={Chernozhukov, Victor and Escanciano, Juan Carlos and Ichimura, Hidehiko and Newey, Whitney K and Robins, James M},
  journal={Econometrica},
  volume={90},
  number={4},
  pages={1501--1535},
  year={2022},
  publisher={Wiley Online Library}
}

@article{cai2015social,
  title={Social networks and the decision to insure},
  author={Cai, Jing and Janvry, Alain De and Sadoulet, Elisabeth},
  journal={American Economic Journal: Applied Economics},
  volume={7},
  number={2},
  pages={81--108},
  year={2015},
  publisher={American Economic Association 2014 Broadway, Suite 305, Nashville, TN 37203-2425}
}

@article{robins1992identifiability,
  author  = {Robins, James M. and Greenland, Sander},
  title   = {Identifiability and exchangeability for direct and indirect effects},
  journal = {Epidemiology},
  year    = {1992},
  volume  = {3},
  number  = {2},
  pages   = {143--155}
}

@inproceedings{pearl2001direct,
  author    = {Pearl, Judea},
  title     = {Direct and indirect effects},
  booktitle = {Proceedings of the Seventeenth Conference on Uncertainty in
               Artificial Intelligence},
  year      = {2001},
  pages     = {411--420}
}

@article{imai2010identification,
  author  = {Imai, Kosuke and Keele, Luke and Yamamoto, Teppei},
  title   = {Identification, inference and sensitivity analysis for causal
             mediation effects},
  journal = {Statistical Science},
  year    = {2010},
  volume  = {25},
  number  = {1},
  pages   = {51--71}
}

@article{tchetgen2012semiparametric,
  author  = {Tchetgen Tchetgen, Eric J. and Shpitser, Ilya},
  title   = {Semiparametric theory for causal mediation analysis: Efficiency
             bounds, multiple robustness and sensitivity analysis},
  journal = {The Annals of Statistics},
  year    = {2012},
  volume  = {40},
  number  = {3},
  pages   = {1816--1845}
}

@book{vanderweele2015explanation,
  author    = {VanderWeele, Tyler J.},
  title     = {Explanation in Causal Inference: Methods for Mediation and
               Interaction},
  publisher = {Oxford University Press},
  year      = {2015},
  address   = {New York}
}

@article{hudgens2008toward,
  author  = {Hudgens, Michael G. and Halloran, M. Elizabeth},
  title   = {Toward causal inference with interference},
  journal = {Journal of the American Statistical Association},
  year    = {2008},
  volume  = {103},
  number  = {482},
  pages   = {832--842}
}

@article{ogburn2014causal,
  author  = {Ogburn, Elizabeth L. and VanderWeele, Tyler J.},
  title   = {Causal diagrams for interference},
  journal = {Statistical Science},
  year    = {2014},
  volume  = {29},
  number  = {4},
  pages   = {559--578}
}

@article{ogburn2024causal,
  author  = {Ogburn, Elizabeth L. and Sofrygin, Oleg and D{\'i}az, Iv{\'a}n and
             van der Laan, Mark J.},
  title   = {Causal inference for social network data},
  journal = {Journal of the American Statistical Association},
  year    = {2024},
  volume  = {119},
  number  = {545},
  pages   = {597--611}
}

@article{vanderweele2013mediation,
  author  = {VanderWeele, Tyler J. and Hong, Guanglei and Jones, Stephanie M.
             and Brown, Joshua L.},
  title   = {Mediation and spillover effects in group-randomized trials: A case
             study of the {4R}s educational intervention},
  journal = {Journal of the American Statistical Association},
  year    = {2013},
  volume  = {108},
  number  = {502},
  pages   = {469--482}
}

@article{liu2021social,
  author  = {Liu, Haiyan and Jin, Ick Hoon and Zhang, Zhiyong and Yuan, Ying},
  title   = {Social network mediation analysis: A latent space approach},
  journal = {Psychometrika},
  year    = {2021},
  volume  = {86},
  number  = {1},
  pages   = {272--298}
}

@article{che2021network,
  author  = {Che, Chuanji and Jin, Ick Hoon and Zhang, Zhiyong},
  title   = {Network mediation analysis using model-based eigenvalue decomposition},
  journal = {Structural Equation Modeling: A Multidisciplinary Journal},
  year    = {2021},
  volume  = {28},
  number  = {1},
  pages   = {148--161}
}

@article{robins1994estimation,
  author  = {Robins, James M. and Rotnitzky, Andrea and Zhao, Lue Ping},
  title   = {Estimation of regression coefficients when some regressors are not
             always observed},
  journal = {Journal of the American Statistical Association},
  year    = {1994},
  volume  = {89},
  number  = {427},
  pages   = {846--866}
}

@article{bang2005doubly,
  author  = {Bang, Heejung and Robins, James M.},
  title   = {Doubly robust estimation in missing data and causal inference models},
  journal = {Biometrics},
  year    = {2005},
  volume  = {61},
  number  = {4},
  pages   = {962--973}
}

@inproceedings{kipf2017semi,
  author    = {Kipf, Thomas N. and Welling, Max},
  title     = {Semi-supervised classification with graph convolutional networks},
  booktitle = {International Conference on Learning Representations},
  year      = {2017}
}

@inproceedings{hamilton2017inductive,
  author    = {Hamilton, William L. and Ying, Rex and Leskovec, Jure},
  title     = {Inductive representation learning on large graphs},
  booktitle = {Advances in Neural Information Processing Systems},
  year      = {2017},
  volume    = {30},
  pages     = {1024--1034}
}

\end{document}